\newtheorem{thm}{Theorem}[section]
\newtheorem{lem}[thm]{Lemma}
\newtheorem{cor}{Corollary}[section]
\theoremstyle{definition}
\newtheorem{defn}{Definition}[section]
\newcommand{\nat}{\mathbb{N}}
\newcommand{\fo}{FO[$<$]\@\xspace}
\newcommand{\foe}{\mbox{FO[$<,E$]}\@\xspace}
\newcommand{\seins}{S1S[$E$]\@\xspace}
\newcommand{\mple}{MPL[$E$]\@\xspace}
\newcommand{\msoe}{MSO[$E$]\@\xspace}
\newcommand{\U}{\LTLuntil}
\newcommand{\X}{\LTLnext}
\newcommand{\G}{\LTLglobally}
\newcommand{\F}{\LTLeventually}
\newcommand{\R}{\LTLrelease}
\newcommand{\true}{\mathit{true}}
\newcommand{\pathvars}{\mathcal{V}}
\newcommand{\pathassign}{\Pi}
\newcommand{\set}[1]{\{ #1 \}}
\newcommand{\ldot}{\mathpunct{.}}
\newcommand{\K}{\mathcal{K}}
\newcommand{\Veins}{\mathcal{V}_1}
\newcommand{\Vzwei}{\mathcal{V}_2}
\newcommand{\finf}{D_\mathit{inf}}
\newcommand{\finff}{D_\mathit{inf'}}
\newcommand{\qinf}{F_\mathit{inf}}
\newcommand{\qinff}{F_\mathit{inf'}}
\newcommand\ScaleExists[1]{\vcenter{\hbox{\scalefont{#1}$\exists$}}}
\DeclareMathOperator*\bigexists{%
	\vphantom\sum
	\mathchoice{\ScaleExists{2}}{\ScaleExists{1.4}}{\ScaleExists{1}}{\ScaleExists{0.75}}}
\begin{document}
%
\title{The Hierarchy of Hyperlogics\thanks{This work was partially supported by the German Research Foundation (DFG) as part of the Collaborative Research Center ``Methods and Tools for Understanding and Controlling Privacy'' (CRC 1223) and the Collaborative Research Center ``Foundations of Perspicuous Software Systems'' (TRR 248, 389792660), and by the European Research Council (ERC) Grant OSARES (No. 683300).}}

\author{\IEEEauthorblockN{Norine Coenen,
Bernd Finkbeiner,
Christopher Hahn and
Jana Hofmann}
\IEEEauthorblockA{Reactive Systems Group, Saarland University\\
Saarbr\"ucken, Germany\\
Email: \{lastname\}@react.uni-saarland.de}}

 \IEEEoverridecommandlockouts
\IEEEpubid{\makebox[\columnwidth]{978-1-7281-3608-0/19/\$31.00~
		\copyright2019 IEEE \hfill} \hspace{\columnsep}\makebox[\columnwidth]{ }}
\maketitle

\begin{abstract}
	Hyperproperties, which generalize trace properties by relating multiple traces, are widely studied in information-flow security.
	Recently, a number of logics for hyperproperties have been proposed, and there is a need	to understand their decidability and relative expressiveness.
	The new logics have been obtained from standard logics with two principal extensions: temporal logics, like LTL and CTL$^*$, have been generalized to hyperproperties by adding variables for traces or paths. 
	First-order and second-order logics, like monadic first-order logic of order and MSO, have been extended with the equal-level predicate. 
	We study the impact of the two extensions across the spectrum of linear-time and branching-time logics, in particular for logics with quantification over propositions.
	The resulting hierarchy of hyperlogics differs significantly from the classical hierarchy, suggesting that the  equal-level predicate adds more expressiveness than trace and path variables. 
	Within the hierarchy of hyperlogics, we identify new boundaries on the decidability of the satisfiability problem. 
	Specifically, we show that while HyperQPTL and HyperCTL$^*$ are both undecidable in general, formulas within their $\exists^*\forall^*$~fragments are decidable.
\end{abstract}


%
\IEEEpeerreviewmaketitle

\section{Introduction}

Temporal logics are classified into linear-time and branching-time logics:
While \emph{linear-time} temporal logics like LTL~\cite{LTL} describe properties of individual traces, \emph{branching-time} temporal logics like CTL$^*$~\cite{CTLStar} describe properties of computation trees, where the branches can be inspected by quantifying existentially or universally over paths.  
\emph{Hyperlogics} add a second, orthogonal, dimension to this classification~\cite{SpectrumOfTemporalLogics}: while the standard temporal logics only refer to a \emph{single} trace or path at a time, the temporal hyperlogics HyperLTL and HyperCTL$^*$ relate \emph{multiple} traces or paths to each other~\cite{HyperLTL}.  
This makes it possible to express information-flow properties such as noninterference~\cite{Goguen+Meseguer/1982/SecurityPoliciesAndSecurityModels} and observational determinism~\cite{Zdancewic+Myers/03/ObservationalDeterminism}. 
Technically, the temporal hyperlogics extend the standard logics with trace or path variables. 
By quantifying over multiple variables, the formula can refer to several traces or paths at the same time. 
For example, the HyperLTL formula
\begin{equation}
  \forall \pi. \forall \pi'.~ \G \bigwedge_{a \in \mathit{AP}}a_\pi \leftrightarrow a_{\pi'}
   \label{hyperltlexample}
\end{equation}
expresses that \emph{all pairs} of traces must agree on the values of the atomic propositions (given as a set $\mathit{AP}$) at all times.

A different method for the construction of hyperlogics has been introduced for first-order and second-order logics such as monadic first-order logic of order (\fo) and full monadic second-order logic (MSO). 
The extension consists of adding the \emph{equal-level predicate}~$E$ (cf.~\cite{Thomas09,Martin}), which relates the same time points on different traces. 
The HyperLTL formula (\ref{hyperltlexample}), for example, is equivalent to the \foe formula
\[
\forall x. \forall y. ~E(x,y) \rightarrow \bigwedge_{a \in \mathit{AP}}(P_a(x) \leftrightarrow P_a(y)).
\]

It is, so far, poorly understood how these two extensions compare in terms of expressiveness.
A natural point of reference is Kamp's seminal theorem~\cite{Kamp68}, which states (in the formulation of Gabbay et al.~\cite{KampGabbay}) that LTL is expressively equivalent to FO$[<]$.
However, the potential analogue of Kamp's theorem for hyperlogics, that HyperLTL might be equivalent to FO$[<,\,E]$, is known \emph{not} to be true~\cite{Martin}.

In this paper, we initiate a comprehensive study of the spectrum of hyperlogics, guided by the known results for the standard logics (Figure~\ref{fig:lin_trace} and Figure~\ref{fig:bran_trace}).
In addition to the equivalence of LTL and FO$[<]$ established by Kamp's theorem, it is known that quantified propositional temporal logic (QPTL)~\cite{QPTL} and monadic second-order logic of one successor (S$1$S) are expressively equivalent~\cite{QPTL-S1S}.
Moreover, previous work~\cite{moller1999} showed that CTL$^*$ and monadic path logic (MPL)~\cite{abrahamson1980}, as well as quantified computation tree logic (QCTL$^*$)~\cite{QCTLStar} and MSO are expressively equivalent~\cite{laroussinie2014}.

Figures~\ref{fig:lin_hyper}~and~\ref{fig:bran_hyper} show the results for the linear and branching-time hyperlogics, respectively.
For linear time, the most striking difference to the hierarchy of the standard logics is that S1S is no longer equivalent to QPTL when lifted to hyperlogics: \seins is \emph{strictly more expressive} than HyperQPTL, which, in turn, is strictly more expressive than \foe. 
For branching time, we have that HyperQCTL$^*$ is still expressively equivalent to MSO$[E]$. 
However, MPL, which is equivalent to CTL$^*$ in the standard hierarchy (Figure~\ref{fig:bran_trace}), falls strictly \emph{between} HyperCTL$^*$ and HyperQCTL$^*$  
when equipped with the equal-level predicate (\mple).

The choice of logics considered in our expressiveness study is motivated by practical interest in certain hyperproperties. 
Branching-time hyperlogics, for example, are useful to state that a system can generate secret information~\cite{SpectrumOfTemporalLogics}, e.g., there is, at some point, a branching into observably equivalent paths that differ in the values of a secret:
\begin{equation*}
  \exists \pi. \F \exists \pi'.~ (\G \bigwedge_{a \in \mathit{P}}a_\pi \leftrightarrow a_{\pi'}) \land (\X \bigvee_{a \in \mathit{S}} a_\pi \nleftrightarrow a_{\pi'}),
   \label{hyperctlexample}
\end{equation*}
where the set of atomic propositions divides into the two disjoint sets of publicly observable propositions $\mathit{P}$ and secret propositions $\mathit{S}$.

An example for the usefulness of quantification over atomic propositions is the extension of LTL to QPTL, which improves the expressiveness from non-counting properties to general $\omega$-regular properties. In HyperQPTL, the extension
of HyperLTL with quantification over atomic propositions,  it additionally becomes possible to express properties like \emph{promptness}~\cite{Kupferman2009}, which states that there is a bound, common for all traces, on the number of steps until an eventuality $a$ is satisfied. 
Promptness can be expressed in HyperQPTL by using a quantified atomic proposition $q$, such that the first occurrence of $q$ represents the bound:
\begin{equation*}
  \exists q. \forall \pi.~ \F q \land (\neg q \U a_\pi).
   \label{hyperqptlexample}
\end{equation*}

Comparing the impact of adding quantification over trace and path variables to temporal logics vs. adding the equal-level predicate to first-order and second-order logics, our analysis indicates that the equal-level predicate adds more expressive power to a first-order (or second-order) logic than trace and path quantification adds to a temporal logic.
Regarding the decidability of the logics, we show that, in the linear-time hierarchy, the boundary of the decidable hyperproperties can be characterized as the $\exists^* \forall^*$~HyperQPTL fragment. 
(The $\exists^* \forall^*$~fragment consists of all formulas with an arbitrary number of existential trace quantifiers followed by an arbitrary number of universal trace quantifiers.)
In the branching-time hierarchy, we show that $\exists^* \forall^*$~HyperCTL$^*$ formulas can still be decided.
The decidability results are summarized in Table~\ref{tab:satResults}.

The remainder of this paper is structured as follows. 
Section~\ref{sec:preliminaries} covers preliminaries, including the basic temporal logics LTL and CTL$^*$.
In Section~\ref{sec:expressivity}, we provide proofs for the expressiveness results from Figure~\ref{fig:hier}.
In Section~\ref{sec:satisfiability}, we show new decidability bounds in the linear-time and branching-time hyperlogic hierarchies as indicated in Table~\ref{tab:satResults}.

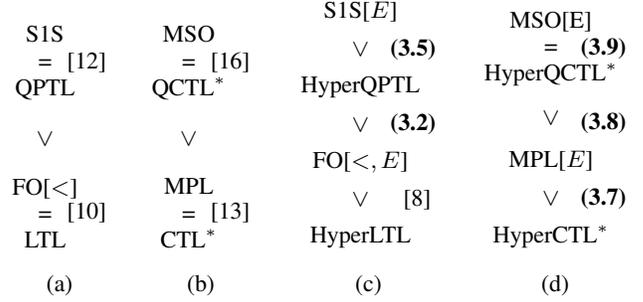
\begin{figure}[t]
\begin{subfigure}[t]{.21\columnwidth}
	\centering
	\begin{tikzpicture}[every node/.append style = {font=\small}]
	\node[align=center] (a) at (0,2) {S1S \\ = \\ QPTL};
	\node[align=center] (c) at (0,0) {\fo \\ = \\ LTL};
	
	\draw[-, color = white] 
	(a) edge node[sloped, color = black] {$>$} (c);
	\node (z) at (0.5,2) {\cite{QPTL-S1S}};
	\node (z) at (0.5,0) {\cite{KampGabbay}};
	\end{tikzpicture}
	\caption{}
	\label{fig:lin_trace}
\end{subfigure}
\begin{subfigure}[t]{.21\columnwidth}
	\centering
	\begin{tikzpicture}[every node/.append style = {font=\small}]
		\node[align=center] (b) at (0,2) {MSO \\ = \\ QCTL$^*$};
		\node[align=center] (a) at (0,0) {MPL \\ = \\ CTL$^*$};
		
		\draw[-, color = white] (a) edge node[sloped, color = black] {$<$} (b);
	\node (z) at (0.5,2) {\cite{laroussinie2014}};
	\node (z) at (0.5,0) {\cite{moller1999}};
	\end{tikzpicture}
	\caption{}
	\label{fig:bran_trace}
\end{subfigure}
\begin{subfigure}[t]{.29\columnwidth}
	\centering
	\begin{tikzpicture}[every node/.append style = {font=\small}]
		\node (f) at (0,3) {\seins};
		\node (g) at (0,2) {HyperQPTL};
		\node (h) at (0,1) {\foe};
		\node (j) at (0,0) {HyperLTL};
		
		\draw[-, color = white] (j) edge node[sloped, color = black] {$<$} (h)
				(h) edge node[sloped, color = black] {$<$} (g)
				 (g) edge node[sloped, color = black] {$<$} (f);
	\node (z) at (0.7,2.5) {\bf(\ref{S1SHyperQPTL})};
	\node (z) at (0.7,1.5) {\bf(\ref{HyperQPTLmoreExpressiveFOE})};
	\node (z) at (0.7,0.5) {\cite{Martin}};
	\end{tikzpicture}
	\caption{}
	\label{fig:lin_hyper}
\end{subfigure}
\begin{subfigure}[t]{.29\columnwidth}
	\centering
	\begin{tikzpicture}[every node/.append style = {font=\small}]
		\node[color=white] (h) at (0,3) {MSO[E]};
		\node[align=center] (h) at (0,2.5) {MSO[E] \\ = \\ HyperQCTL$^*$};
		\node (i) at (0,1) {\mple};
		\node (j) at (0,0) {HyperCTL$^*$};
		
		\draw[-, color = white] (j) edge node[sloped, color = black] {$<$} (i)
				 (i) edge node[sloped, color = black] {$<$} (h);
	\node (z) at (0.7,2.5) {\bf (\ref{thm:HyperQCTL*equivMSOE})};
	\node (z) at (0.7,1.5) {\bf(\ref{HyperQCTLStarMPLE})};
	\node (z) at (0.7,0.5) {\bf(\ref{MPLEHyperCTLStar})};
	\end{tikzpicture}
	\caption{}
	\label{fig:bran_hyper}
\end{subfigure}
\caption{
	The linear-time hierarchies of standard logics (a) and hyperlogics (c), and the branching-time hierarchies of standard logics (b) and hyperlogics (d). Novel results are annotated with the corresponding theorem number.
}
\label{fig:hier}
\end{figure}

\begin{table}
\small
\setlength{\tabcolsep}{6pt}
\def\arraystretch{1.08}
\centering
\scalebox{0.96}{
\begin{tabular}[t]{ll||ll}
	Logic & Result & Hyperlogic & Result \\ \hline
	\multirow{2}{*}{LTL}      & \multirow{2}{*}{decidable~\cite{LTL-SAT}} & \multirow{2}{*}{HyperLTL} & undecidable~\cite{DecidingHyperLTL} \\
	& & & $\exists^* \forall^*$ decidable~\cite{DecidingHyperLTL} \\[\smallskipamount]
	\multirow{2}{*}{QPTL}      & \multirow{2}{*}{decidable~\cite{QPTL}} & \multirow{2}{*}{HyperQPTL} & undecidable {\bf(\ref{thm:HyperQPTLUndec})} \\ 
	& & & $\exists^* \forall^*$ decidable {\bf(\ref{hyperqptl-sat-decidable})}\\[\smallskipamount]
	\multirow{2}{*}{CTL$^*$}      & \multirow{2}{*}{decidable~\cite{CTLStar-SAT}} & \multirow{2}{*}{HyperCTL$^*$} & undecidable {\bf(\ref{HyperCTLStarUndec})} \\ 
	& & & $\exists^* \forall^*$ decidable {\bf(\ref{HyperCTLStarSAT})} 
\end{tabular}}
\caption{Satisfiability results. Novel results are annotated with the corresponding theorem number.}
\label{tab:satResults}
\end{table}

\section{Preliminaries}
\label{sec:preliminaries}

We formally define traces and trees, and introduce some basic notation for trace and path manipulation.
We define the temporal logics LTL and CTL$^*$, which serve as the foundation for the extensions studied in this paper.

\subsection{Traces and Trees}
Let $\mathit{AP}$ be a set of atomic propositions. 
We call an infinite sequence over subsets of atomic propositions $t \in (2^\mathit{AP})^\omega$ a \emph{trace}. 
A set of traces $T \subseteq (2^\mathit{AP})^\omega$ is also called a \emph{trace property} while a set of sets of traces $H \subseteq 2^{((2^\mathit{AP})^\omega)}$ is called a \emph{hyperproperty}. 
Trace manipulation is defined as follows: for a trace $t$ and a natural number $i \geq 0$, we denote the $i$-th element of the trace by $t[i]$.
For a natural number $j \geq i$, $t[i,j]$ denotes the sequence $t[i]t[i+1] \ldots t[j-1]t[j]$. 
Moreover, $t[i,\infty]$ denotes the infinite suffix of $t$ starting at position $i$.
For two traces $t$ and $t'$, we define a zipping operation as follows: $zip(t,t') = (t[0],t'[0]) (t[1],t'[1]) \ldots$.

A \emph{tree} $\mathcal{T}$ is defined as a partially-ordered infinite set of nodes $S$, where all nodes share a common minimal element $r \in S$, called the \emph{root} of the tree.
Moreover, for every node $s \in S$, the set of its \emph{ancestors} $\set{s' | s' < s}$ is totally-ordered.
We say that $s'$ is the \emph{direct ancestor} of $s$, if $s' < s$, and there is no $s''$ such that $s' < s'' < s$.
A \emph{$\Sigma$-labeled tree} is defined as a tree $\mathcal{T}$ equipped with a function $L: S \to \Sigma$, that labels every node with an element from a finite set $\Sigma$.
For the case that $\Sigma = 2^\mathit{AP}$, we say that the tree is \emph{$\mathit{AP}$-labeled}.
A \emph{path} through a tree $\mathcal{T}$ is a sequence $\sigma = s_0, s_1, \ldots$ of direct ancestors in $\mathcal{T}$, i.e., for all $s_i, s_{i+1}$, node $s_{i}$ is the direct ancestor of $s_{i+1}$.
A path is called \emph{initial} if $s_0$ is the root node, which we omit if it is clear from the context.
We use the same path manipulation operations as for traces.
The set of \emph{paths originating in node $s$} $\in S$ is denoted by $Paths(\mathcal{T},s)$.
If $s$ is the root node, we simply write $Paths(\mathcal{T})$.

\subsection{LTL and CTL$^*$}
Linear-time Temporal Logic (LTL)~\cite{LTL} and Computation Tree Logic (CTL$^*$)~\cite{CTLStar} are the most studied temporal logics for linear-time and branching-time properties, respectively.
\begin{defn}[LTL]
	\label{ltl}
LTL is a linear-time logic that combines the usual Boolean connectives with temporal modalities $\LTLnext$~(next) and $\LTLuntil$~(until).
The syntax is given by the following grammar:
\begin{align*}
\varphi &\Coloneqq a ~|~ \neg \varphi ~|~ \varphi \lor \varphi ~|~ \X \psi ~|~ \psi \U \psi,
\end{align*}
where $a \in AP$, which is the set of atomic propositions. We allow the standard Boolean connectives $\wedge$, $\rightarrow$, $\leftrightarrow$ as well as the derived LTL modalities release $\varphi \R \psi \equiv \neg(\neg\varphi \U \neg\psi)$, eventually $\F \varphi \equiv \true \U \varphi$, and globally $\G \varphi \equiv \neg \F \neg \varphi$.
Given a trace $t \in (2^\mathit{AP})^\omega$, the semantics of an LTL formula is defined as follows:
\begin{alignat*}{3}
t &\models a                             &&\text{ iff } && a \in t[0] \\
t &\models \neg \varphi                  &&\text{ iff } && t \not\models \varphi \\
t &\models \varphi_1 \vee \varphi_2      &&\text{ iff }&&t \models \varphi_1 \text{ or } t \models \varphi_2 \\
t &\models \LTLnext \varphi              &&\text{ iff } && t[1,\infty] \models \varphi \\
t &\models \varphi_1 \LTLuntil \varphi_2 &&\text{ iff } && \exists i \geq 0.~ t[i,\infty] \models \varphi_2 \\
&                                        &&             && \land \forall 0 \leq j < i.~ t[j,\infty] \models \varphi_1.
\end{alignat*}
\end{defn}

\begin{defn}[CTL$^*$]
	\label{ctl}
CTL$^*$ is a branching-time logic (i.e. the model is a tree) that extends LTL (Def.~\ref{ltl}) with a path quantifier E meaning ``there exists a path''.
The syntax, where $\varphi$ denotes state formulas and $\psi$ denotes path formulas, is given as follows:
\begin{align*}
\varphi &\Coloneqq a ~|~ \neg \varphi ~|~ \varphi \lor \varphi ~|~ \text{E}\psi\\
\psi &\Coloneqq \varphi ~|~ \neg \psi ~|~ \psi \lor \psi ~|~ \X \psi ~|~ \psi \U \psi,
\end{align*}
where $a \in AP$ is an atomic proposition.
The semantics of CTL$^*$ is defined over an $\mathit{AP}$-labeled tree $\mathcal{T}$ with nodes $S$ and labeling function $L$.
Given a node $s \in S$ and a path $p$ in $\mathcal{T}$, we define the semantics of CTL$^*$ state and path formulas as follows:
\begin{alignat*}{3}
	s &\models_\mathcal{T} a &&\text{ iff } && a \in L(s) \\
	s &\models_\mathcal{T} \neg \varphi         &&\text{ iff } && s \not\models_\mathcal{T} \varphi \\
	s &\models_\mathcal{T} \varphi_1 \vee \varphi_2 &&\text{ iff }&&s \models_\mathcal{T} \varphi_1 \text{ or } s \models_\mathcal{T} \varphi_2 \\
	s &\models_\mathcal{T} \text{E} \psi   &&\text{ iff } && \exists p \in \mathit{Paths}(\mathcal{T}, s).~ p \models_\mathcal{T} \psi  \\
	p &\models_\mathcal{T} \varphi &&\text{ iff } &&p[0] \models_\mathcal{T} \varphi \\
	p &\models_\mathcal{T} \neg \psi         &&\text{ iff } &&p \not\models_\mathcal{T} \psi \\
	p &\models_\mathcal{T} \psi_1 \vee \psi_2 &&\text{ iff }&&p \models_\mathcal{T} \psi_1 \text{ or } p \models_\mathcal{T} \psi_2 \\
	p &\models_\mathcal{T} \X \psi &&\text{ iff } &&p[1,\infty] \models_\mathcal{T} \psi \\
	p &\models_\mathcal{T} \psi_1 \U \psi_2 &&\text{ iff }&& \exists i \geq 0.~ p[i,\infty] \models_\mathcal{T} \psi_2 \\
	& && && \land \forall 0 \leq j < i.~ p[j,\infty] \models_\mathcal{T} \psi_1.
\end{alignat*}
For a tree $\mathcal{T}$ and a CTL$^*$ formula $\varphi$, we write $\mathcal{T} \models \varphi$ if $\mathcal{T}$ has root $r$, such that $r \models_{\mathcal T} \varphi$.
\end{defn}

\section{Expressivity}
\label{sec:expressivity}

We examine the expressive power of the hyperlogics by going bottom-up through the linear-time and branching-time hierarchies of hyperlogics depicted in Figure~\ref{fig:hier}.

\subsection{The Linear-Time Hierarchy}
In the following, we first show that, like for the standard logics, HyperQPTL is strictly more expressive than \foe. We then establish that, indeed, \seins is strictly more expressive than HyperQPTL.

\begin{defn}[HyperLTL]
	\label{hyperltl}
HyperLTL~\cite{HyperLTL} extends LTL (Def.~\ref{ltl}) with explicit trace quantification.
Let $\pathvars = \set{\pi_1, \pi_2, \ldots}$ be an infinite set of trace variables.
HyperLTL formulas are defined by the grammar:
\begin{align*}
\varphi &{}\Coloneqq \forall\pi\ldot\varphi \mid \exists\pi\ldot\varphi \mid \psi \enspace \\
\psi &{}\Coloneqq a_\pi \mid \neg\psi \mid \psi\lor\psi \mid \X\psi \mid \psi\U\psi,
\end{align*}
where $a \in AP$ and $\pi \in \pathvars$.
Here, $\forall\pi\ldot\varphi$ and $\exists\pi\ldot\varphi$ denote universal and existential trace quantification, and $a_\pi$ requires the atomic proposition $a$ to hold on trace $\pi$. 
The semantics of HyperLTL is defined with respect to a set of traces $T$.
Let $\pathassign : \pathvars \to T$ be a trace assignment that maps trace variables to traces in $T$. 
We can update a trace assignment $\Pi$, denoted by $\pathassign[\pi \mapsto t]$, where $\pi$ maps to $t$ and all other trace variables are as in $\pathassign$.
The satisfaction relation $\models_T$ for HyperLTL over a set of traces $T$ is defined as follows:
\begin{alignat*}{3}
&\pathassign,i \models_T a_\pi       &&\text{ iff } &&a \in \pathassign(\pi)[i] \\
&\pathassign,i \models_T \neg \varphi              && \text{ iff } &&\pathassign,i \not\models_T \varphi \\
&\pathassign,i \models_T \varphi_1 \lor \varphi_2         && \text{ iff } &&\pathassign,i \models_T \varphi_1 \text{ or } \pathassign,i \models_T \varphi_2 \\
&\pathassign,i \models_T \X \varphi                && \text{ iff } &&\pathassign,i+1 \models_T \varphi \\
&\pathassign,i \models_T \varphi_1 \U \varphi_2             && \text{ iff } &&\exists j \geq i \ldot ~ \pathassign, j \models_T \varphi_2 \\
& && && \phantom{\exists j \geq i} \land \forall i \leq k < j \ldot ~\pathassign,k \models_T \varphi_1 \\
&\pathassign,i \models_T \exists \pi \ldot \varphi && \text{ iff } && \exists t \in T.~ \pathassign[\pi \mapsto t], i \models_T \varphi\\
&\pathassign,i \models_T \forall \pi \ldot \varphi && \text{ iff } && \forall t \in T.~ \pathassign[\pi \mapsto t], i \models_T \varphi.
\end{alignat*}
\end{defn}
We say that a trace set $T$ satisfies a HyperLTL formula $\varphi$, written as $T \models \varphi$, if $\emptyset,0 \models_T \varphi$, where $\emptyset$ denotes the empty trace assignment.

\begin{defn}[HyperQPTL]
	\label{hyperqptl}
HyperQPTL~\cite{MarkusThesis} extends HyperLTL (Def.~\ref{hyperltl}) with explicit quantification over atomic propositions.
We add atomic formulas $q$, which are independent of the trace variables, and prenex propositional quantification $\exists q. \varphi$ to the syntax.
HyperQPTL inherits the semantics of HyperLTL with two additional rules for the new syntactic constructs:
\begin{alignat*}{3}
&\pathassign,i \models_T \exists q \ldot \varphi &&\text{ iff } && \exists t \in (2^{\set{q}})^\omega.~ \pathassign[\pi_q \mapsto t],i \models_T \varphi \\
&\pathassign,i \models_T q  &&\text{ iff } &&q \in \pathassign(\pi_q)[i]. 
\end{alignat*}
\end{defn}

\begin{defn}[\foe]
	\label{foe}
\foe extends \fo with the equal-level predicate $E$, which relates points in time.
The syntax of \foe is obtained by extending the syntax of \fo with $E(x,y)$. Given a set of atomic propositions $AP$ and a set $V_1$ of first-order variables, we define the syntax of \foe formulas as follows:
\begin{align*}	
\tau  &\Coloneqq P_a(x) ~|~ x<y ~|~ x=y ~|~ E(x,y) \\
\varphi &\Coloneqq \tau ~|~ \neg \varphi ~|~ \varphi_1 \vee \varphi_2  ~|~ \exists x. \varphi,
\end{align*}
where $a \in AP$ and $x, y \in V_1$.
\end{defn}
While \fo formulas are interpreted over a trace $t$, we interpret an \foe formula $\varphi$ over a set of traces $T$, writing $T \models \varphi$ if $T$ satisfies $\varphi$. As first described in~\cite{Martin}, we assign first-order variables with elements from the domain $T \times \nat$. The $<$ relation is defined as the set $\{(t,n_1), (t,n_2) \in (T \times \nat)^2 ~|~ n_1 < n_2 \}$ and the equal-level predicate is defined as $\{ (t_1, n), (t_2, n) \in (T \times \nat)^2\}$. Note that $x < y$ holds in \foe iff $y$ is a successor of $x$ on the same trace.

\begin{lem}
	\label{HyperQPTLsubsumesFOE}
	HyperQPTL is at least as expressive as \foe.
\end{lem}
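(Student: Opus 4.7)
The plan is to give a compositional, semantics-preserving translation $\operatorname{tr}$ from \foe formulas to HyperQPTL formulas, and then verify correctness by structural induction on the \foe formula. The key idea is that an \foe variable $x$ denotes an element $(t_x, n_x) \in T \times \nat$, so to encode $x$ in HyperQPTL I need to represent the trace component $t_x$ by a trace variable $\pi_x$ and the time component $n_x$ by a quantified atomic proposition $q_x$ that is constrained to hold at exactly one position on the global timeline.

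For the base cases, I would set
\begin{align*}
\operatorname{tr}(P_a(x)) &= \F(q_x \wedge a_{\pi_x}),\\
\operatorname{tr}(E(x,y)) &= \F(q_x \wedge q_y),\\
\operatorname{tr}(x = y) &= \F(q_x \wedge q_y) \wedge \operatorname{eq}(\pi_x,\pi_y),\\
\operatorname{tr}(x < y) &= \F\bigl(q_x \wedge \X \F q_y\bigr) \wedge \operatorname{eq}(\pi_x,\pi_y),
\end{align*}
using the macros $\operatorname{eq}(\pi,\pi') := \G \bigwedge_{a \in AP}(a_\pi \leftrightarrow a_{\pi'})$ and $\operatorname{unique}(q) := \F q \wedge \G(q \to \X \G \neg q)$. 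Boolean connectives pass through directly, and I would translate first-order quantification by
\[
\operatorname{tr}(\exists x.\,\varphi) \;=\; \exists \pi_x.\,\exists q_x.\bigl(\operatorname{unique}(q_x) \wedge \operatorname{tr}(\varphi)\bigr).
\]
The intuition is that the HyperQPTL trace quantifier ranges over $T$ (matching the trace component of the \foe domain), and the propositional quantifier ranges over $(2^{\{q_x\}})^\omega$; restricting to unique sequences picks out a single position, yielding the desired isomorphism with $T \times \nat$.

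To make correctness precise, I would define, for every \foe assignment $\alpha$ sending free variables $x_1,\dots,x_k$ to $(t_i,n_i) \in T \times \nat$, a corresponding HyperQPTL context in which $\pi_{x_i}$ is mapped to $t_i$ and $q_{x_i}$ is interpreted as the unique sequence with $q_{x_i} \in \Pi(\pi_{q_{x_i}})[j]$ iff $j = n_i$. The induction hypothesis is then $T, \alpha \models \varphi$ iff the corresponding HyperQPTL context satisfies $\operatorname{tr}(\varphi)$. The inductive cases for Boolean connectives are immediate, and the atom and quantifier cases follow directly from the definitions above, using the fact that traces in $T$ are sets of labeled sequences and thus two traces coincide in $T$ iff they agree on all atomic propositions at all positions, so $\operatorname{eq}(\pi_x,\pi_y)$ faithfully encodes trace identity.

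The main subtleties (rather than an outright obstacle) are twofold. First, trace identity must be captured extensionally via $\operatorname{eq}$, which is sound precisely because $T$ is a set of traces and distinct traces differ at some position; I would make this argument once and reuse it. Second, I need to justify that restricting the propositional quantifier to sequences satisfying $\operatorname{unique}(q)$ gives a bijection with $\nat$, so that $\exists q_x.\operatorname{unique}(q_x) \wedge \cdot$ correctly matches the existential over positions in \foe. Both points are routine but must be argued carefully to align the domains of the two quantifier alternations.
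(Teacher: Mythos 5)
Your translation is essentially the paper's own: the same encoding of a first-order variable $x$ as a pair $(\pi_x, q_x)$ with $q_x$ forced to hold exactly once, the same clauses for $P_a(x)$, $E(x,y)$, $x=y$ and $x<y$ (including the observation that trace identity must be expressed extensionally via $\G \bigwedge_{a}(a_{\pi_x} \leftrightarrow a_{\pi_y})$ because $T$ is a set), and the same induction over assignments. Your uniqueness constraint $\F q \wedge \G(q \to \X\G\neg q)$ is equivalent to the paper's $(\neg q)\U(q \wedge \X\G\neg q)$.

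There is one gap you do not address: HyperQPTL, as defined here, only admits \emph{prenex} quantification --- both the trace quantifiers (inherited from the HyperLTL grammar, where $\exists\pi.\varphi$ may not occur under Boolean connectives or temporal operators) and the propositional quantifiers (explicitly ``prenex propositional quantification''). Your $\operatorname{tr}$ is fully compositional, so for an \foe formula such as $\exists x.\,P_a(x) \vee \exists y.\,P_b(y)$ it produces $\exists\pi_y.\exists q_y.(\cdots)$ nested under a disjunction, which is not a well-formed HyperQPTL formula, and the semantics you appeal to is simply not defined for it. The paper closes this hole by assuming the input \foe formula is in prenex normal form, so that the quantifier blocks $\exists\pi_x\exists q_x$ all land in the prefix (with universal variables relativized to the uniqueness constraint by implication rather than conjunction). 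Since every FO formula has a prenex normal form, the fix is routine, but it must be stated; without it your induction proves equivalence for a formula that may lie outside the target logic.
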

\begin{proof}
	We give a linear translation from \foe to HyperQPTL. We encode each first-order variable $x$ as a combination of a trace variable $\pi_x$ and a propositional variable $q_x$, where we enforce $q_x$ to hold exactly once.
	Let $\varphi$ be an \foe formula over $AP$ in prenex normal form.
	We construct the HyperQPTL formula hq($\varphi$) as follows:
	\begin{alignat*}{3}
	&\text{hq}(P_a(x)) &&=~&& \LTLfinally (q_x \land a_{\pi_x}) \\
	&\text{hq}(x < y) &&=~&& \LTLfinally (q_x \land \LTLnext \LTLfinally q_y) \land \LTLglobally (\bigwedge_{a \in AP} a_{\pi_x} \leftrightarrow a_{\pi_y}) \\
	&\text{hq}(x = y) &&=~&& \LTLfinally (q_x \land  q_y) \land \LTLglobally (\bigwedge_{a \in AP} a_{\pi_x} \leftrightarrow a_{\pi_y} )\\
	&\text{hq}(E(x,y)) &&=~&& \LTLfinally (q_x \land  q_y) \\
	&\text{hq}(\neg \varphi_1) &&=~&& \neg \text{hq}(\varphi_1) \\
	&\text{hq}(\varphi_1 \lor \varphi_2) &&=~&& ~\text{hq}(\varphi_1) \lor \text{hq}(\varphi_2) \\
	&\text{hq}(\exists x. \varphi_1) &&=~&& ~\exists \pi_x.~ \exists q_x.~ (\neg q_x) \LTLuntil (q_x \land \LTLnext \LTLglobally \neg q_x ) \\
	& && && \quad \land \text{hq}(\varphi_1).
	\end{alignat*}
	Note that since $T$ is a set of traces, two traces are equal in HyperQPTL iff they globally agree on all atomic propositions.
	Since we assume $\varphi$ to be in prenex normal form, hq($\varphi$) is a valid HyperQPTL formula.
	A straight-forward induction proves that for all trace sets $T$, $T \models \varphi$ iff $T \models \text{hq}(\varphi)$.
\end{proof}

\begin{thm}
\label{HyperQPTLmoreExpressiveFOE}
	HyperQPTL is strictly more expressive than \foe.
\end{thm}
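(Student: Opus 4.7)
The plan is to exploit the classical fact that \fo on $\omega$-words captures only star-free (non-counting) languages, while HyperQPTL can count via its quantified propositions. Since Lemma~\ref{HyperQPTLsubsumesFOE} already gives the $\geq$ direction, it suffices to exhibit a single property separating the two logics. I would use the property $\varphi^\star$ = ``on every trace, $a$ holds at every even position'', which HyperQPTL expresses by
\begin{equation*}
  \exists q.~ \forall \pi.~ q \land \G(\X q \leftrightarrow \neg q) \land \G(q \to a_\pi).
\end{equation*}
Here the quantified proposition $q$ is forced to alternate starting with $\true$, marking even time points globally, and every trace $\pi$ is then required to carry $a$ wherever $q$ holds.

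The core step is to show that $\varphi^\star$ is not \foe-definable. I would restrict attention to singleton trace sets $T = \set{t}$: the domain $T \times \nat$ becomes a copy of $\nat$, and the equal-level predicate collapses in the sense that $E((t,n_1),(t,n_2))$ holds iff $n_1 = n_2$, which is precisely when the two elements coincide in the domain. Consequently, for any candidate \foe sentence $\psi$ equivalent to $\varphi^\star$, the sentence $\psi'$ obtained from $\psi$ by syntactically rewriting each atom $E(x,y)$ as $x = y$ is an \fo sentence, and a straightforward induction on formula structure yields $\set{t} \models \psi \iff t \models \psi'$ for every trace $t$. If $\psi \equiv \varphi^\star$, then $\psi'$ is an \fo sentence defining the set of traces whose even positions all carry $a$, a language whose syntactic monoid contains a non-trivial period and hence fails to be star-free, contradicting Kamp's theorem together with the Sch\"utzenberger / McNaughton--Papert characterization.

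The step I expect to be most delicate is justifying the syntactic $E \mapsto {=}$ replacement semantically, in particular confirming that $\psi$ and $\psi'$ behave identically on every singleton trace set under alternating $\exists / \forall$ quantification over the domain. This reduces to the bijection between $\set{t} \times \nat$ and $\nat$ and is routine, but it has to be carried out by induction on the formula. The remaining ingredients, namely the HyperQPTL encoding of $\varphi^\star$ and the classical inexpressibility of counting languages in \fo, are then standard.
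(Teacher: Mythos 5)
Your proposal is correct and takes essentially the same approach as the paper: both arguments restrict attention to single-trace models, observe that there \foe collapses to \fo (equivalently LTL) while HyperQPTL retains QPTL's ability to count, and then invoke the classical aperiodicity/star-freeness separation. The only difference is presentational --- you instantiate the separation with the concrete ``$a$ at every even position'' property and carry out the $E(x,y)\mapsto x=y$ rewriting explicitly, whereas the paper cites the general collapse of the two logics on this model class and the known fact that QPTL is strictly more expressive than LTL.
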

\begin{proof}
	With Lemma~\ref{HyperQPTLsubsumesFOE}, we are left to show that there are properties that HyperQPTL can express, but \foe cannot.
	We apply a similar technique as in~\cite{MarkusThesis}: 
	Consider the class of models $T$ with only a single trace.
	In this class, HyperQPTL is expressively equivalent to QPTL and \foe is expressively equivalent to \fo, which is equivalent to LTL~\cite{Kamp68}.
	It is known, however, that QPTL is strictly more expressive than LTL since QPTL can express any $\omega$-regular language~\cite{sistla1987complementation}, which LTL cannot~\cite{mcnaughton1971}. Therefore, HyperQPTL must be strictly more expressive than \foe.
\end{proof}

\begin{defn}[{S1S[E]}]
Similar to the definition of \foe, we extend S1S with the equal-level predicate and interpret it over sets of traces. 
Let $AP$ be a set of atomic propositions, $V_1 = \{x_1, x_2, \ldots\}$ be a set of first-order variables, and $V_2 = \{X_1, X_2, \ldots\}$ a set of second-order variables.
The syntax of \seins formulas $\varphi$ is defined as follows:
\begin{align*}
\tau &\Coloneqq x \mid \mathit{min}(x) \mid S(\tau)\\
\varphi &\Coloneqq \tau \in X \mid \tau = \tau \mid E(\tau,\tau) \mid \neg \varphi \mid \varphi \vee \varphi \mid \exists x. \varphi \mid \exists X. \varphi,
\end{align*}
where $x \in V_1$ is a first-order variable, $S$ denotes the successor relation, and $\mathit{min}(x)$ indicates the minimal element of the traces addressed by $x$.
Furthermore, $E(\tau, \tau)$ is the equal-level predicate and $X \in V_2 \cup \{X_a ~|~ a \in AP \}$.
We interpret \seins formulas over a set of traces $T$. 
As in the case of \foe, the domain of the first-order variables is $T \times \nat$.
Let $\mathcal{V}_1: V_1 \to T \times \mathbb{N}$ and $\mathcal{V}_2: V_2 \to 2^{(T\times \mathbb{N})}$ be the first-order and second-order valuation, respectively. 
The value of a term is defined as:
\begin{align*}
[x]_{\Veins} &= \Veins(x)\\
[\mathit{min}(x)]_{\Veins} &= (\mathit{proj}_1 (\mathcal{V}_1(x)),0)\\
[S(\tau)]_{\Veins} &= (\mathit{proj}_1([\tau]_{\Veins}) , \mathit{proj}_2([\tau]_{\Veins}) +1),
\end{align*}
where $\mathit{proj}_1$ and $\mathit{proj}_2$ denote the projection to the first and second component, respectively.
Let $\varphi$ be an \seins formula with free first-order and second-order variables $V'_1 \subseteq V_1$ and $V'_2 \subseteq V_2 \cup \{X_a ~|~ a \in AP \}$, respectively.
We define the satisfaction relation $\Veins, \Vzwei \models \varphi $ with respect to two valuations $\mathcal V_1, \mathcal V_2$ assigning all free variables in $V'_1$ and $V'_2$ as follows:
\begin{alignat*}{3}
\Veins, \Vzwei &\models_T \tau \in X              &&\text{ iff } &&[\tau]_{\Veins} \in \Vzwei(X) \\
\Veins, \Vzwei &\models_T \tau_1 =\tau_2                &&\text{ iff } &&[\tau_1]_{\Veins} = [\tau_2]_{\Veins} \\
\Veins, \Vzwei &\models_T E(\tau_1,\tau_2)               &&\text{ iff } &&\mathit{proj}_2([\tau_1]_{\Veins}) = \mathit{proj}_2([\tau_2]_{\Veins}) \\
\Veins, \Vzwei &\models_T \neg \varphi         &&\text{ iff } &&\Veins, \Vzwei \not\models_T \varphi \\
\Veins, \Vzwei &\models_T \varphi_1 \vee \varphi_2 &&\text{ iff } &&\Veins, \Vzwei \models_T \varphi_1 \text{ or } \Veins, \Vzwei \models_T \varphi_2 \\
\Veins, \Vzwei &\models_T \exists x. \varphi   &&\text{ iff } && \exists (t, n) \in T \times \nat.~ \\
& && && \Veins[x \mapsto (t, n)], \Vzwei \models_T \varphi \\
\Veins, \Vzwei &\models_T \exists X. \varphi   &&\text{ iff } && \exists A \subseteq T \times \nat.~ \\
& && && \Veins, \Vzwei[X \mapsto A] \models_T \varphi,
\end{alignat*}
where $\mathcal{V}_i[x \mapsto v]$ updates a valuation. 
\end{defn}
We call an \seins formula $\varphi$ closed if every free variable is a second-order variable of the form $X_a$ with $a \in AP$.
We say that a trace set $T$ over $AP$ satisfies a closed \seins formula $\varphi$, written $T \models \varphi$, if $\emptyset, \mathcal{V}_2 \models_T \varphi$, where $\emptyset$ denotes the empty first-order valuation and $\mathcal{V}_2$ assigns each free $X_a$ in $\varphi$ to the set $\{(t,n) \in T \times \nat~|~ a \in t[n]\}$.

\begin{lem}
	\label{S1SESubsumesHyperQPTL}
	\seins is at least as expressive as HyperQPTL.
\end{lem}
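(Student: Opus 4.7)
The plan is to give an effective, inductively defined translation from HyperQPTL into S1S[E] and then establish semantic equivalence. The central observation is that a HyperQPTL evaluation state---current time $i$, trace assignment $\Pi$, and propositional valuation---can be encoded directly in S1S[E], because first-order variables in S1S[E] range over $T \times \nat$ and so a single first-order variable records both a trace and a position on it; the equal-level predicate $E$ then synchronises positions across different traces.

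Concretely, I would define $\text{tr}(\varphi, t, \{y_\pi\}_\pi, \{X_q\}_q)$, where $t$ is a reference first-order variable representing the current time (tracked on a canonical trace), each $y_\pi$ represents the current position on the trace assigned to $\pi$, and each $X_q$ encodes the propositional trace assigned to $q$. Two invariants are enforced: $E(t, y_\pi)$ holds for every $\pi$, and each $X_q$ is trace-invariant, i.e., $\forall a, b.~ E(a, b) \to (a \in X_q \leftrightarrow b \in X_q)$. The clauses would be: $a_\pi$ translates to $y_\pi \in X_a$; $q$ translates to $t \in X_q$ (justified by trace-invariance); Booleans commute with $\text{tr}$; and $\X \psi$ becomes $\text{tr}(\psi, S(t), \{S(y_\pi)\}, \{X_q\})$, since $S$ preserves equal-level. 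For $\psi_1 \U \psi_2$ I would quantify a fresh $t'$ at a time $\geq$ the time of $t$ on the same trace as $t$ (same-trace ordering being expressible in S1S[E] by the standard second-order closure-under-successor idiom), together with, for each $\pi$, a fresh $y'_\pi$ with $E(t', y'_\pi) \wedge \mathit{min}(y'_\pi) = \mathit{min}(y_\pi)$; this pins $y'_\pi$ uniquely to the point at time $t'$ on the same trace as $y_\pi$. The intermediate-step universal is encoded analogously with $t''$ and $y''_\pi$. Trace quantification $\exists \pi.\ \psi$ becomes $\exists y_\pi.\ E(t, y_\pi) \wedge \text{tr}(\psi, \ldots)$, and propositional quantification $\exists q.\ \psi$ becomes $\exists X_q.\ \Phi_{\text{inv}}(X_q) \wedge \text{tr}(\psi, \ldots)$, where $\Phi_{\text{inv}}$ is the trace-invariance conjunct. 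At the top level (on a non-empty trace set, the standard setting for HyperQPTL) the translation would be wrapped in $\exists t.\ t = \mathit{min}(t) \wedge \text{tr}(\varphi, t, \emptyset, \emptyset)$, initialising the reference to some position at time zero.

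Correctness would then follow by a straightforward structural induction on $\varphi$, with the base cases relying on the fact that the free second-order variables $X_a$ in S1S[E] are assigned precisely the extension of $a$ in $T$, and the inductive cases unfolding the HyperQPTL and S1S[E] semantics side by side while exploiting the two maintained invariants.

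The step I expect to be the main obstacle is the until clause. One must uniformly replace every current-position variable $y_\pi$ by its counterpart at the new time $t'$ on the same trace, and care is required because there is no global linear order on $T \times \nat$---only a same-trace order. The idiom $E(t', y'_\pi) \wedge \mathit{min}(y'_\pi) = \mathit{min}(y_\pi)$ is the crucial device that uniquely identifies each $y'_\pi$ and sidesteps this issue. A secondary care point is the propositional quantifier case: enforcing trace-invariance on $X_q$ is what guarantees that an S1S[E] subset of $T \times \nat$ faithfully models an element of $(2^{\set{q}})^\omega$, as required by the HyperQPTL semantics.
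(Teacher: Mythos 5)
Your proposal is correct and follows essentially the same route as the paper's proof: a compositional, inductive translation into \seins that represents trace positions by first-order variables over $T\times\nat$, uses the equal-level predicate to synchronise time across traces, and uses second-order variables for quantified propositions. The only divergences are bookkeeping choices---the paper keeps a single static trace-identifier variable per $\pi$ plus one time variable and resolves the current position lazily at the atomic clauses, whereas you carry an explicitly advanced position variable per trace and re-anchor it via $E(t',y'_\pi)\wedge \mathit{min}(y'_\pi)=\mathit{min}(y_\pi)$, and you enforce trace-invariance on $X_q$ where the paper projects to levels; both devices are sound.
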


\begin{proof}
	We describe a linear translation from HyperQPTL to \seins, which is similar to the translation from HyperLTL to \foe described in~\cite{Martin}.
	Assume two sets of first-order variables $Tr = \set{x_\pi, x_{\pi'}, \ldots}$ for trace variables and $Ti = \set{y_1, y_2, \ldots}$ to indicate time.
	Additionally, we represent propositional variables with second-order variables.
	Given a HyperQPTL formula $\varphi$ over atomic propositions $AP$ and a time variable $y_i$, we inductively construct the \seins formula with free second-order variables $\{X_a ~|~ a \in AP \}$ as follows:
	\begin{alignat*}{3}
	&\text{se}(a_{\pi}, y_i) &&=~&& \exists x.~ x \geq x_\pi \lor x < x_\pi \wedge E(y_i, x) \\
	& && && \quad \wedge x \in X_a\\
	\noalign{ \text{where $x_\pi$ is the trace variable for $\pi$ }}
	&\text{se}(q, y_i) &&=~&& \exists x_p.~ E(y_i, x_q) \wedge x_q \in X_q\\
	\noalign{ \text{where $X_q$ is the propositional variable for $q$} }
	&\text{se}(\neg \varphi_1, y_i) &&=&& \neg \text{se}(\varphi_1, y_i) \\
	&\text{se}(\varphi_1 \lor \varphi_2, y_i) &&=&& \text{se}(\varphi_1, y_i) \lor \text{se}(\varphi_2, y_i) \\
	&\text{se}(\LTLnext \varphi_1, y_i) &&=&& \text{se}(\varphi_1, S(y_j)) \\
	&\text{se}(\varphi_1 \LTLuntil \varphi_2, y_i) &&=&& \exists y_j \geq y_i.~ \text{se}(\varphi_2, y_j) \\
	& && && \quad \wedge (\forall y_k.~ y_i \leq y_k < y_j \rightarrow \text{se}(\varphi_1, y_k)) \\
	&\text{se}(\exists \pi.\varphi_1, y_i) &&=&& \exists x_\pi.~ \text{se}(\varphi_1,y_i)\\
	\noalign{ \text{where $x_\pi$ is now used as trace variable for $\pi$ }}
	&\text{se}(\exists q.\varphi_1, y_i) &&=&& \exists X_q.~ \text{se}(\varphi_1, y_i) \\
	\noalign{ \text{where $X_q$ is now used as propositional variable for $q$.} }
	\end{alignat*}
	For a HyperQPTL formula $\varphi$, we define
	\begin{equation*}
		\text{se}(\varphi) \coloneqq \exists y_0 \ldot (\neg \exists y \ldot y < y_0) \land \text{se}(\varphi, y_0).
	\end{equation*}	
	Using induction, it follows that for each $\varphi$ and trace set $T$, $T \models \varphi$ iff $T \models \text{se}(\varphi)$.
\end{proof}

\begin{lem}
	\label{S1SEModelChecking}
	The \seins model checking problem is undecidable.
\end{lem}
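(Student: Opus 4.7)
The plan is to reduce from HyperLTL satisfiability, which is undecidable~\cite{DecidingHyperLTL}. Fix the universal trace set $T_u = (2^\mathit{AP})^\omega$, finitely representable as a one-state Kripke structure with every atomic proposition free. Since every trace set over $\mathit{AP}$ is a subset of $T_u$, a HyperLTL formula $\varphi$ is satisfiable iff some non-empty $T' \subseteq T_u$ satisfies $\varphi$. I would therefore construct an \seins formula $\psi$ such that $T_u \models \psi$ iff $\varphi$ is satisfiable, yielding the desired reduction.

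The key step is to use second-order quantification in \seins to existentially select the sub-trace-set $T'$ inside $T_u$. I introduce a second-order variable $X$ intended to encode $T' \times \nat$; the condition that $X$ is a union of trace-fibres (cylindrical) and non-empty is expressible as
\[
\mathit{cyl}(X) := \forall x.\, (x \in X \leftrightarrow \mathit{min}(x) \in X), \qquad \mathit{ne}(X) := \exists x.\, x \in X.
\]
Next, I would relativize the HyperLTL-to-\foe translation of~\cite{Martin} (analogous to the construction in Lemma~\ref{S1SESubsumesHyperQPTL}) by bounding every trace quantifier to $X$: $\forall \pi.\, \varphi'$ becomes $\forall x_\pi.\, \mathit{min}(x_\pi) \in X \rightarrow \text{tr}(\varphi')$, and dually for $\exists \pi$. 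Inner temporal modalities only traverse successors of a single trace and therefore need no relativization. Taking $\psi := \exists X.\, \mathit{cyl}(X) \wedge \mathit{ne}(X) \wedge \text{tr}(\varphi)$, a straightforward induction on $\varphi$ establishes that $T_u \models \psi$ iff some non-empty $T' \subseteq T_u$ satisfies $\varphi$.

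The main conceptual point---rather than a technical obstacle---is that second-order quantification in \seins faithfully internalizes the existential quantification over trace sets implicit in satisfiability: any witness $X$ for $\psi$ on $T_u$ corresponds to the witness $T' := \{t \mid (t,0) \in X\}$ for satisfiability of $\varphi$, and vice versa. The inductive equivalence parallels the \foe translation of~\cite{Martin}, now with bounded quantifiers, and the undecidability of HyperLTL satisfiability thus transfers to \seins model checking over the fixed structure $T_u$.
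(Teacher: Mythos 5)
Your proof is correct, but it takes a genuinely different route from the paper. The paper reduces directly from the halting problem for $2$-counter machines: it fixes a regular trace set $T$ containing one trace per machine configuration $(i,m,n)$ (encoding the counter values and the instruction label by the positions of dedicated propositions) and writes an \seins formula $\exists X.\ \mathit{halting}(X)$ asserting that some subset of $T$ is a finite, predecessor-closed collection of configuration traces that reaches a halting configuration. You instead reduce from HyperLTL satisfiability, an undecidability result the paper already relies on elsewhere (Theorem~\ref{thm:HyperQPTLUndec}). The crux of your argument is sound: unrestricted second-order quantification over subsets of $T\times\nat$ internalizes the existential quantification over trace sets inherent in the satisfiability problem; $\mathit{cyl}(X)$ correctly forces $X$ to be a union of trace fibres because $\mathit{min}(x)$ projects to position $0$ of the trace of $x$; and only the trace quantifiers need relativizing to $X$, since the temporal part of the translation (time variables and the inner $\exists x$ in the atom case) stays on traces already bound. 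Your approach buys brevity and modularity by leveraging a known hardness result, and it shows undecidability already over the maximally simple model $(2^{\mathit{AP}})^\omega$; the paper's approach is self-contained and illustrates concretely what \seins can encode over a fixed structure. Two cosmetic remarks: presenting $(2^{\mathit{AP}})^\omega$ as a Kripke structure requires $2^{|\mathit{AP}|}$ states (one per labeling) rather than one, though the set is of course regular; and the orderings $\leq$, $<$ you use on positions are not primitives of \seins as defined here, but they are definable via second-order quantification and are used freely in the paper's own translation in Lemma~\ref{S1SESubsumesHyperQPTL}, so this is not a gap.
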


\begin{proof}
	The \seins model checking problem is to decide for a formula $\varphi$ and a regular trace set $T$, whether $T \models \varphi$. A trace set is regular, if it can be described by a Kripke structure.
	We prove this Lemma by a reduction from $2$-counter machines ($2$CM), which are known to be Turing complete. 
	We describe a simple trace set $T$ such that given a $2$CM $\mathcal M$ with an initial configuration $s_0$, we can construct an \seins formula $\varphi_{\mathcal M, s_0}$ such that $\mathcal M$ halts iff $T \models \varphi_{\mathcal M, s_0}$.
	A $2$CM consists of a finite set of instructions $l_1:\mathit{instr}_1 ; \ldots ; l_{k-1} : \mathit{instr}_{k-1} ; l_{k} : \mathit{instr}_{\mathit halt}$, where the last instruction is the instruction to halt and all other instructions are of one of the following forms:
	\begin{itemize}
		\item $c_i \coloneqq c_i +1 ~;~ \mathtt{goto} ~ l_j$ \hspace{1em} (for $i \in \{1,2\}$ and $1 \leq j \leq k$)
		\item $\mathtt{if} ~ c_i = 0 ~\mathtt{then~goto}~l_j~\mathtt{else}~c_i \coloneqq c_i-1 ; ~ \mathtt{goto} ~ l_{j'}$ \hspace{1em} (for $i \in \{1,2\}$ and $1 \leq j, j' \leq k$).
	\end{itemize}
	A $2$CM configuration $s$ is a triple $(i, m, n)$, which indicates that the values of the two counters are currently $m$ and $n$ and that the next instruction to be executed is $l_i$.
	We call each configuration in which $i$ denotes the halting instruction a halting configuration $s_\mathit{halt}$.
	Furthermore, we say that a $2$CM $\mathcal{M}$ halts for a given initial configuration $s_0$ if there is a finite sequence $s_0, s_1, \ldots , s_{\mathit halt}$ such that for all two successive configurations $s_i, s_{i+1}$, the latter one is a result of applying the instruction specified in $s_i$ to configuration $s_i$.
	The main ideas to encode the halting problem of a $2$CM $\mathcal{M}$ with initial configuration $s_0$ into \seins are the following:
	
	\begin{itemize}
		\item We can express in \seins that a set $X_t$ contains exactly all nodes of a trace $t \in T$.
		\item Each $2$CM configuration $s$ is encoded as a trace $t_s$ over atomic propositions $c_1, c_2$ and $l$ which are true exactly once on the trace. A state where the first counter has value $m$ is encoded as a trace $t$ with $c_1 \in t[m]$.
		\item We choose the trace set $T$ to be the infinite trace set containing a trace $t_s$ for all possible $2$CM configurations $s \in \set{(i,m,n) \in \nat^3}$. Note that $T$ is clearly regular.
		\item We can give an \seins formula $\mathit{succ}(X_{t_s}, X_{t_{s'}})$, which is true for traces $t_s, t_{s'}$ iff configuration $s'$ is the result of applying the instruction $l_i$ to configuration $s = (i, m, n)$.
		\item Given a machine $\mathcal{M}$ with initial configuration $s_0$, we give an \seins formula $\mathit{halting}(X)$ which is true iff $X$ encodes a halting computation of $\mathcal{M}$. The formula $\mathit{halting}(X)$ is a conjunct of the following requirements:
		\begin{itemize}
			\item $X$ is a union of finitely many encodings $X_{t_s}$. This can be formulated in \seins by expressing that there is an upper bound on the positions where $c_1, c_2$, and $l$ occur on traces in $X$.
			\item $X$ is predecessor closed with respect to the instructions of the machine, i.e., if $X_{t_s}$ is a subset of $X$, then either $X_{t_s}$ is the trace encoding of the initial configuration $s_0$, or there is a $X_{t_{s'}} \subseteq X$ such that $\mathit{succ}(X_{t_{s'}}, X_{t_{s}})$.
			\item There is a halting configuration in $X$, i.e., there is a trace $X_{t_{s}} \subseteq X$ where $l$ holds at position $k$ in $t_{s}$.
		\end{itemize}
	\end{itemize}
	Using the ideas presented above, we can define $\varphi_{\mathcal M, s_0}$ as a formula that checks whether there is a subset $X$ of $T$ which encodes a halting computation of $\mathcal M$ starting in $s_0$, i.e., $\varphi_{\mathcal M, s_0} \coloneqq \exists X.~ \mathit{halting}(X)$.
\end{proof}

\begin{thm}\label{S1SHyperQPTL}
	\seins is strictly more expressive than HyperQPTL.
\end{thm}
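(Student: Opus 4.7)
The plan is to separate the two logics by exploiting the decidability gap between their model checking problems. Lemma~\ref{S1SESubsumesHyperQPTL} already gives one direction, so it remains to exhibit an \seins-definable property that no HyperQPTL formula captures. Rather than constructing such a property directly (which would require an Ehrenfeucht--Fra\"iss\'e-style argument tailored to HyperQPTL), I would argue by contradiction using Lemma~\ref{S1SEModelChecking}.

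Concretely, the first step is to observe that HyperQPTL model checking against finite Kripke structures is decidable. This follows by a standard automata-theoretic construction: every HyperQPTL formula in prenex form can be translated, quantifier by quantifier, into an alternating parity automaton over the self-composition of the Kripke structure (propositional quantifiers behave exactly as in QPTL~\cite{QPTL}, while trace quantifiers are handled as in the HyperLTL model checking algorithm). Emptiness of this automaton is decidable, so whether $K \models \varphi$ can be decided for any finite Kripke structure~$K$ and HyperQPTL formula~$\varphi$.

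The second step is the contradiction: suppose, for the sake of argument, that every \seins formula is equivalent to some HyperQPTL formula. Then, given an \seins formula~$\varphi$ and a regular trace set~$T$ described by a Kripke structure~$K$, one could first translate $\varphi$ into an equivalent HyperQPTL formula and then apply the HyperQPTL model checking procedure to $K$, yielding a decision procedure for $T \models \varphi$. This contradicts Lemma~\ref{S1SEModelChecking}. Together with Lemma~\ref{S1SESubsumesHyperQPTL}, this establishes that \seins is strictly more expressive than HyperQPTL.

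The main obstacle is making the decidability of HyperQPTL model checking precise enough to be convincing without getting bogged down in the automaton construction; I would either cite the HyperLTL model checking pipeline and remark that propositional quantification is absorbed by the existing alternation, or state the result explicitly as a short auxiliary lemma. Note that the translation from \seins to HyperQPTL need not be effective for this argument to go through: the contradiction only requires that an equivalent HyperQPTL formula \emph{exists} for every \seins formula, since one can enumerate HyperQPTL formulas and test them in parallel with a hypothetical semi-decision for the complement, yielding a decision procedure; alternatively, the nonexistence of a computable translation already suffices once one is careful about what ``at least as expressive'' means here.
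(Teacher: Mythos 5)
Your argument is exactly the paper's: Theorem~\ref{S1SHyperQPTL} is proved there in one line by combining Lemma~\ref{S1SESubsumesHyperQPTL} with Lemma~\ref{S1SEModelChecking} and the decidability of HyperQPTL model checking, which the paper simply cites from~\cite{MarkusThesis} rather than re-deriving via the automaton construction you sketch. The effectiveness caveat you raise at the end is a genuine subtlety (a purely semantic equivalence would not by itself compose with the decision procedure), but the paper's proof glosses over it in the same way, implicitly reading ``at least as expressive'' as witnessed by an effective translation such as the one in Lemma~\ref{S1SESubsumesHyperQPTL}; so your proposal matches the intended argument.
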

\begin{proof}
	This follows from Lemma~\ref{S1SESubsumesHyperQPTL} and Lemma~\ref{S1SEModelChecking}, since the HyperQPTL model checking problem is decidable~\cite{MarkusThesis}.
\end{proof}

\subsection{The Branching-Time Hierarchy}
The question studied in this section is whether the equi-expressiveness of CTL$^*$ and MPL, and of QCTL$^*$ and MSO, translates to the corresponding hyperlogics. 
We establish that \mple (even if restricted to bisimulation-invariant properties) is strictly more expressive than HyperCTL$^*$. 
We then show that HyperQCTL$^*$ is more expressive than \mple. 
Lastly, we show the rather surprising result that \msoe is not more expressive than HyperQCTL$^*$. In fact, the two logics are equally expressive. 

\begin{defn}[HyperCTL$^*$]
	\label{hyperctl}
HyperCTL$^*$~\cite{HyperLTL} generalizes CTL$^*$ (Def.~\ref{ctl}) by adding explicit path variables and quantification.
Quantification in HyperCTL$^*$ ranges over the paths in a tree.
Let $\pi \in \mathcal{V}$ be a \emph{path variable} from an infinite supply of path variables $\mathcal{V}$ and let $\exists \pi.~\varphi$ be the explicit existential \emph{path quantification}. 
HyperCTL$^*$ formulas are generated by the following grammar:
\[
\varphi \Coloneqq a_\pi \mid \neg \varphi \mid \varphi \vee \varphi \mid \LTLnext \varphi \mid \varphi \LTLuntil \varphi \mid \exists \pi.~ \varphi.
\]
The semantics of a HyperCTL$^*$ formula are defined with respect to a tree $\mathcal{T}$ and a \emph{path assignment} $\Pi:\mathcal{V} \rightarrow \mathit{Paths}(\mathcal{T})$, which is a partial mapping from path variables to actual paths in the tree. 
The satisfaction relation $\models_{\mathcal T}$ is given as follows:
\begin{alignat*}{3}
&\Pi, i \models_\mathcal{T} a_\pi &&\text{ iff } &&a \in L(\Pi(\pi)[i]) \\
&\Pi, i \models_\mathcal{T} \neg \varphi &&\text{ iff } &&\Pi, i \not \models_\mathcal{T} \varphi \\
&\Pi, i \models_\mathcal{T} \varphi_1 \vee \varphi_2 &&\text{ iff } &&\Pi, i \models_\mathcal{T} \varphi_1 \text{ or } \Pi, i \models_\mathcal{T} \varphi_2 \\
&\Pi, i \models_\mathcal{T} \X \varphi &&\text{ iff } &&\Pi, i + 1 \models_\mathcal{T} \varphi\\
&\Pi, i \models_\mathcal{T} \varphi_1 \U \varphi_2 &&\text{ iff } && \exists j \geq i.~ \Pi, j \models_\mathcal{T} \varphi_2\\
& && && \land \forall i \leq k < j. ~\Pi, k \models_\mathcal{T} \varphi_1\\
&\Pi, i \models_\mathcal{T} \exists \pi. \varphi &&\text{ iff } && \exists p \in \mathit{Paths}(\mathcal{T}).~ p[0, i] = \varepsilon[0, i] \\
& && && \land \Pi[\pi \mapsto p, \varepsilon \mapsto p], i \models_\mathcal{T} \varphi,
\end{alignat*}
where we use $\varepsilon$ to denote the last path that was added to the path assignment $\Pi$. We say that a tree $\mathcal{T}$ satisfies a HyperCTL$^*$ formula $\varphi$, written as $\mathcal{T} \models \varphi$, if $\emptyset,0 \models_\mathcal{T} \varphi$, where $\emptyset$ denotes the empty path assignment.
\end{defn}

Without loss of generality, we assume that HyperCTL$^*$ formulas are given in negation normal form (NNF) where negations only occur directly in front of atomic propositions. 
This can be achieved by a straight-forward extension of the syntax and semantics with \emph{conjunction}, the \emph{universal path quantifier} $\forall \pi.~ \varphi$ and the temporal modality $\LTLrelease$ (release), which has the following semantics~\cite{PrinciplesOfModelChecking}:
\begin{alignat*}{3}
&\Pi, i \models_\mathcal{T} \varphi \R \psi && \text{ iff } && \forall j \geq i.~ \Pi, j \models_\mathcal{T} \psi ~ \lor \\
& && && (\exists j \geq i.~ \Pi, j \models_\mathcal{T} \varphi \\
& && && \land \forall i \leq k \leq j.~ \Pi, k \models_\mathcal{T} \psi).
\end{alignat*}

\begin{defn}[\mple]
MPL equipped with the equal-level predicate (\mple) is, syntactically, \foe (Def.~\ref{foe}) with additional second-order set variables $\set{X, Y, \ldots}$, i.e., with atomic formulas $x \in X$ and second-order quantification $\exists X. \varphi$. 
\mple, interpreted over trees $\mathcal{T}$, maps first-order variables to nodes in the tree and second-order variables to sets of nodes. 
$x < y$ indicates that $x$ is an ancestor of $y$. 
Atomic formulas $x \in X$ and $x = y$ are interpreted as expected as set membership and equality on nodes. 
The equal-level predicate $E(x,y)$ denotes that two nodes $x$ and $y$ are on the same level, i.e., have the same number of ancestors.
As for MPL, we require that \mple's second-order quantification is restricted to full paths, i.e., each quantified second-order variable $X$ is mapped to a set whose nodes constitute exactly one path of the tree.
We write $\mathcal{T} \models \varphi$ if the tree $\mathcal{T}$ satisfies the \mple formula~$\varphi$.
\end{defn}

\begin{lem}
	\label{MPLEsubsumesHyperKCTL*}
	\mple is at least as expressive as HyperCTL$^*$.
\end{lem}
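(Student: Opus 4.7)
The plan is to give an inductive translation that maps each HyperCTL$^*$ formula $\varphi$ to an \mple formula $\text{tr}(\varphi, x, X_\varepsilon)$, where $x$ is a first-order variable holding the ``current node'' and $X_\varepsilon$ is a second-order variable (ranging over full initial paths of the tree) holding the ``current path'', i.e., the most recently introduced path. For each free HyperCTL$^*$ path variable $\pi$ of $\varphi$, a second-order variable $X_\pi$ of the same kind is kept in scope and plays the role of $\pi$.

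The key clauses I would write are as follows. An atomic proposition $a_\pi$ becomes $\exists y.~ y \in X_\pi \land E(y,x) \land P_a(y)$, which picks out the unique node on $X_\pi$ at the same level as $x$. Boolean connectives are transparent. The formula $\X \varphi$ becomes $\exists x'.~ x' \in X_\varepsilon \land \text{succ}(x,x') \land \text{tr}(\varphi, x', X_\varepsilon)$, where $\text{succ}(x,x')$ is definable from $<$ as $x < x' \land \neg\exists z.~ (x < z \land z < x')$. The formula $\varphi_1 \U \varphi_2$ is translated by existentially quantifying a later node $x'$ on $X_\varepsilon$ witnessing $\varphi_2$, together with a universal clause enforcing $\varphi_1$ at every node of $X_\varepsilon$ strictly between $x$ and $x'$. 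Finally, $\exists \pi. \varphi$ becomes $\exists X_\pi.~ x \in X_\pi \land \text{tr}(\varphi, x, X_\pi)$: a new full initial path through $x$ is guessed, and it becomes the new current path. At the top level I would open the translation with $\exists x.~ (\forall z.~ \neg(z < x)) \land \dots$ to fix $x$ as the root before the outermost HyperCTL$^*$ quantifier is handled.

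The main obstacle is the $\exists \pi$ case, since the HyperCTL$^*$ semantics demands that the newly quantified path agree with the previous current path on positions $[0, i]$. The crucial observation will be that in a tree the ancestors of any node form a unique initial segment, and that \mple's second-order quantification already ranges only over full initial paths; hence the single conjunct $x \in X_\pi$ forces $X_\pi$ to contain exactly the ancestors of $x$ as its first $i$ positions, which coincide with the first $i$ positions of $X_\varepsilon$ because $x$ lies on $X_\varepsilon$ at level $i$. The equal-level predicate then lets me express ``at the current level $i$'' without an explicit counter, which is what makes the atomic-proposition and $a_\pi$ clauses work across different paths.

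Correctness is proved by a routine induction on the structure of $\varphi$. The invariant to maintain is: for every tree $\mathcal{T}$, path assignment $\Pi$, and index $i$, if each $X_\pi$ is interpreted as $\Pi(\pi)$, $X_\varepsilon$ as $\Pi(\varepsilon)$, and $x$ as $\Pi(\varepsilon)[i]$, then $\Pi, i \models_{\mathcal T} \varphi$ iff $\mathcal{T}$ satisfies $\text{tr}(\varphi, x, X_\varepsilon)$ under these interpretations. All cases other than $\exists \pi$ are immediate from the definitions of the operators and the equal-level predicate; the $\exists\pi$ case reduces to the tree-theoretic observation above.
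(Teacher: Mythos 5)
Your translation follows essentially the same route as the paper's (given in Appendix~A): an inductive map with one second-order path variable $X_\pi$ per HyperCTL$^*$ path variable, the equal-level predicate to look up the label of $X_\pi$ at the current level, and first-order quantification along the current path to handle $\X$ and $\U$. The one place you genuinely diverge is the $\exists\pi$ clause: the paper explicitly writes a formula $\mathit{prefix}(X_\pi, X_\varepsilon, y_i)$ forcing the newly quantified path to agree with the current one on all levels up to the current position, whereas you observe that the single conjunct $x \in X_\pi$ already entails this, because in a tree the ancestors of $x$ are totally ordered and hence every full initial path through $x$ has the same prefix up to $x$'s level, namely $X_\varepsilon[0,i]$. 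That observation is correct and buys a slightly shorter formula at the cost of leaning on the tree axioms rather than stating the constraint syntactically; both yield a linear translation. One slip to fix: in the $\U$ clause you require $\varphi_1$ only at nodes \emph{strictly} between $x$ and the witness $x'$, but the semantics of $\varphi_1 \U \varphi_2$ demands $\varphi_1$ at all positions $k$ with $i \leq k < j$, so the universal clause must also cover $x$ itself (cf.\ the paper's $\forall y_k.\ y_i \leq y_k < y_j$); as written, your formula would wrongly accept a path where $\varphi_1$ fails at the current position while $\varphi_2$ first holds strictly later.
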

\begin{proof}
	We can give a linear translation from  HyperCTL$^*$ to \mple, which is very similar to the one in Lemma~\ref{S1SESubsumesHyperQPTL}.
	As before, we assume sets of first-order variables \mbox{$Tr = \set{x_1, x_2, \ldots}$} for trace variables and \mbox{$Ti = \set{y_1, y_2, \ldots}$} to indicate time.
	Since we are now in a branching-time setting, we translate the quantification of paths with \mple second-order quantification.
\end{proof}

\begin{thm}\label{MPLEHyperCTLStar}
	\mple is strictly more expressive than HyperCTL$^*$, even if restricted to bisimulation-invariant properties.
\end{thm}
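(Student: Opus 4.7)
Given Lemma~\ref{MPLEsubsumesHyperKCTL*}, it remains to exhibit a bisimulation-invariant property $P$ that is expressible in \mple but not in HyperCTL$^*$. The plan is to exploit a structural asymmetry between the two logics: the \mple pattern $\forall x.\, \exists y.\, E(x,y) \land \varphi(x,y)$ picks a level-mate $y$ of $x$ freely, whereas HyperCTL$^*$'s $\exists \pi'$ introduced inside a temporal context with active path $\varepsilon$ must coincide with $\varepsilon$ on the entire past, up to and including the current position. Consequently HyperCTL$^*$ cannot independently select a witness at each level in the tree, while \mple can.

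As a candidate, I would take a property of the form ``at every level $\ell$, some node at level $\ell$ witnesses a condition $\varphi$ that couples behaviour across arbitrarily many distinct levels.'' Expressibility in \mple is a direct translation using $E$ and first-order (plus possibly second-order path) quantification. Bisimulation-invariance follows because every bisimulation between rooted tree unfoldings of Kripke structures is level-preserving --- the bisimulation pairs the roots and propagates through transitions, so both labels and the equal-level relation are preserved --- hence the truth of such a property is invariant under bisimilar tree unfoldings.

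The main obstacle is the inexpressibility direction: showing $P$ cannot be captured by any HyperCTL$^*$ formula. I would develop an Ehrenfeucht-Fra\"iss\'e style argument for HyperCTL$^*$, constructing for every formula $\psi$ of path-quantifier depth at most $n$ a pair of trees $T_n, T_n'$ that satisfy the same depth-$n$ HyperCTL$^*$ formulas but disagree on $P$. The combinatorial core is that a HyperCTL$^*$ formula of depth $n$ can maintain only $n$ path cursors simultaneously, and every cursor introduced under a temporal operator is pinned to a previously chosen path on its entire past; consequently, depth-$n$ HyperCTL$^*$ is blind to witness selections that must be made independently at arbitrarily many distant levels, which is exactly the flexibility $P$ requires. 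This step can be carried out either by a direct game argument or by adapting known linear-time inexpressibility techniques (such as those separating \foe from HyperLTL) to the branching setting.
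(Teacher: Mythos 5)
Your overall architecture is sound --- given Lemma~\ref{MPLEsubsumesHyperKCTL*}, exhibiting one bisimulation-invariant property in \mple $\setminus$ HyperCTL$^*$ suffices --- and your intuition about the structural asymmetry is exactly the right one: the semantics of $\exists \pi'$ pins the new path to the previously quantified path on its entire past ($p[0,i] = \varepsilon[0,i]$), whereas \mple can select a level-mate freely. However, there is a genuine gap: the entire inexpressibility direction is left as a plan rather than a proof. You never pin down the separating property $P$ concretely (``a condition $\varphi$ that couples behaviour across arbitrarily many distinct levels'' is not a definition), and no Ehrenfeucht--Fra\"iss\'e machinery for HyperCTL$^*$ exists off the shelf --- developing a sound game characterization that accounts for both path quantification and the temporal operators, and then constructing the families $T_n, T_n'$, is the hard core of the theorem and is precisely what your proposal defers. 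The informal claim that ``depth-$n$ HyperCTL$^*$ can maintain only $n$ path cursors'' does not by itself yield inexpressibility, since a single quantifier under a $\G$ already re-selects a witness at every level (cf.\ the formula $\exists\pi.\G\exists\pi'.\psi$ treated in Section~\ref{sec:HyperCTLStarSat}); the limitation you need to exploit is subtler, namely that the re-selected witness must agree with the current path on its full past rather than only on an observable projection of it.

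The paper sidesteps this difficulty entirely by reduction to a known separation: it shows that \mple can express the epistemic knowledge modality $\K_{A,\pi}$ (by quantifying a second-order path variable constrained to agree with the current one on the observable propositions $A$ up to the present), so \mple subsumes HyperKCTL$^*$ and in particular KCTL$^*$; it then invokes the result of Bozzelli et al.~\cite{Bozzelli} that HyperCTL$^*$ does not subsume KCTL$^*$, and notes that KCTL$^*$ properties are bisimulation-invariant. If you want to complete your direct approach, you would essentially have to reprove a result of that strength; alternatively, you could close the gap the same way the paper does, by taking $P$ to be a KCTL$^*$-definable property already known not to be HyperCTL$^*$-definable. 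One further caution on your bisimulation-invariance argument: a bisimulation relation between two trees need not be level-preserving as a relation, so you should argue via the level-synchronous bisimulation game from the roots (or, as the paper does, inherit invariance from the branching-time semantics of the knowledge logic) rather than asserting that every bisimulation preserves the equal-level relation.
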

\begin{proof}
	We proceed by arguing that in contrast to HyperCTL$^*$, \mple can simulate the epistemic knowledge modality known from KLTL and KCTL$^*$~\cite{fagin2004reasoning, halpern1989}.
	Epistemic logics describe multiple agents and the local knowledge they gain from observing different aspects of the system. 
	Consider the extension of HyperCTL$^*$ with the modality $\mathcal{K}_{A, \pi} \varphi$ (HyperKCTL$^*$), stating that the agent who can observe only the propositions $A \subseteq AP$ on path $\pi$ knows that $\varphi$ holds. 
	The modality has the following semantics:
	\begin{alignat*}{3}
	&\Pi,i \models_\mathcal{T} \K_{A,\pi} \varphi &&\text{ iff } &&\forall p \in \mathit{Paths}(\mathcal{T}).~ p[0,i] =_A \Pi(\pi)[0,i] \\
	& && && \rightarrow \Pi[\pi \mapsto p], i \models_{\mathcal T} \varphi.
	\end{alignat*}
	Here, $=_A$ is used to state that two paths are $A$-observationally equivalent, i.e., equal when projected to the set $A$. 
	One can extend the translation from HyperCTL$^*$ to \mple
	to also handle $\mathcal{K}_{A, \pi} \varphi$. 
	The idea is to quantify a new second-order set for variable $\pi$ which agrees with the previously quantified set on all atomic propositions in $A$. 
	This shows that \mple subsumes HyperCTL$^*$ extended with the knowledge operator. 
	Note that due to the branching-time character of HyperKCTL$^*$, all properties expressible in that logic must be bisimulation invariant. 
	Thus, also the fragment of \mple that is equivalent to HyperKCTL$^*$ can only characterize bisimulation-invariant properties.
	It is, however, known that HyperCTL$^*$ does not subsume KCTL$^*$~\cite{Bozzelli}, which is a syntactic subset of HyperKCTL$^*$.
	Hence, \mple must be strictly more expressive than HyperCTL$^*$.
	
\end{proof}

We now show that HyperQCTL$^*$ is more expressive than \mple. 
This result coincides with the fact that QCTL$^*$ is known to be more expressive than MPL, which is equivalent to CTL$^*$~\cite{CTLStar-SAT,moller1999}.

\begin{defn}[HyperQCTL$^*$]
	\label{hyperqctl}
HyperQCTL$^*$ extends HyperCTL$^*$ (Def.~\ref{hyperctl}) with quantification over atomic propositions. 
We add atomic formulas $q_\pi$ and propositional quantification $\exists q. \varphi$ to the syntax of HyperCTL$^*$ to obtain HyperQCTL$^*$.
Let $\mathcal{T}$ be an $AP$-labeled tree with labeling function $L$.
The satisfaction relation of HyperQCTL$^*$ extends the one of HyperCTL$^*$ by adding a rule for the propositional quantification to the definition of the semantics:
\begin{alignat*}{3}
&\Pi, i \models_\mathcal{T} \exists q. \varphi &&\text{ iff } && \exists L': S \rightarrow 2^{AP \cup \set{q}}.~ \forall s \in S. \\
& && && L'(s) =_{AP \setminus \set{q}} L(s) \land \Pi, i \models_{\mathcal{T}[L' / L]} \varphi.
\end{alignat*}
Here, $L'$ is the updated labeling function which (re)assigns the atomic proposition $q$ to nodes in $\mathcal{T}$. 
We write $\mathcal{T}[L' / L]$ for the tree $\mathcal{T}$ labeled with elements from $AP \cup \set{q}$ with the new labeling function $L'$.
Note that compared to the definition of HyperQPTL, we do not just add an independent $q$-trace to the model. Instead, $q$ is (re-)assigned at every node in the tree.
\end{defn}
We say that a tree $\mathcal{T}$ satisfies a HyperQCTL$^*$ formula $\varphi$, written $\mathcal{T} \models \varphi$, if $\emptyset, 0 \models_\mathcal{T} \varphi$.
\begin{thm}\label{HyperQCTLStarMPLE}
	HyperQCTL$^*$ is strictly more expressive than \mple.
\end{thm}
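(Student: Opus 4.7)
The plan is to establish both containment and strictness, mirroring the structure of Theorem~\ref{HyperQPTLmoreExpressiveFOE} but in the branching-time setting.

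For containment, I would give a translation from \mple to HyperQCTL$^*$. Each \mple first-order variable $x$ is encoded by a fresh path variable $\pi_x$ together with a fresh quantified proposition $q_x$ that is forced to hold at exactly one node of $\pi_x$ (expressible as $(\neg q_x) \U (q_x \land \X \G \neg q_x)$, as in Lemma~\ref{HyperQPTLsubsumesFOE}). Each \mple second-order path variable is encoded directly by a HyperCTL$^*$ path variable. The ordering $x < y$ (ancestor relation) is translated by requiring the marker $q_y$ to appear on $\pi_y$ strictly after a position at which $q_x$ appears on $\pi_x$, together with the assertion that the two paths coincide up to that earlier position; the coincidence is enforced using propositional quantification to uniquely label the ancestor node and then forcing both paths to witness that label. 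The equal-level predicate $E(x,y)$ becomes the natural synchronous statement that $q_x$ and $q_y$ appear at the same time index on their respective paths, which is exactly what the HyperCTL$^*$ semantics, with its common time counter across all path variables, was designed to express. Membership $x \in X$ (for a path variable $X$) translates into the requirement that the marker $q_x$ and a suitable marker witnessing the path $X$ agree at the position of $q_x$.

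For strictness, I would restrict attention to the class of trees consisting of a single infinite path (linear trees). On this subclass, HyperQCTL$^*$ collapses to QPTL, because path quantifiers become vacuous and only propositional quantification and temporal modalities remain. Similarly, \mple collapses to \fo on this subclass: second-order quantification over paths has a unique witness, and the equal-level predicate degenerates to equality because two nodes of a linear tree are at the same level iff they coincide. By Kamp's theorem, \fo equals LTL, and by known results, QPTL is strictly more expressive than LTL (it captures all $\omega$-regular properties, including non-counting ones that LTL cannot express). Hence any QPTL-definable property over linear trees that separates QPTL from LTL also separates HyperQCTL$^*$ from \mple over the class of all trees, giving the required strict inclusion.

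The main obstacle will be the containment direction, specifically the translation of the ordering $x < y$ and of second-order membership $x \in X$. The subtlety is that in HyperCTL$^*$ path variables are chosen synchronously and later-introduced paths must agree with previously introduced paths up to the current time; one has to carefully sequence the quantifier introductions in the translated formula so that, whenever two \mple first-order variables need to be compared, the corresponding path variables can be meaningfully related at the necessary positions. The trick is to resynchronize by always introducing a fresh quantified proposition that uniquely marks the relevant node and then letting a later path quantifier witness that node via the marker, rather than relying on implicit path-prefix agreement. The strictness direction, by contrast, is a clean adaptation of the technique of restricting to a degenerate model class so that the hyper dimension trivializes and a classical separation can be imported.
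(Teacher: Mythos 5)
Your separation argument is exactly the paper's proof: restrict to linear trees, observe that HyperQCTL$^*$ collapses to QPTL and \mple collapses to \fo (the equal-level predicate degenerates to equality and second-order path quantification has a unique witness), and import the classical QPTL~$>$~LTL~$=$~\fo separation. The only difference is that you additionally spell out the containment direction (\mple $\leq$ HyperQCTL$^*$) via a direct translation, whereas the paper's proof of this theorem does not argue containment at all and instead obtains it from Theorem~\ref{thm:HyperQCTL*equivMSOE}: \mple embeds into \msoe by relativizing second-order quantifiers to paths, and \msoe is shown equivalent to HyperQCTL$^*$ (with a translation in the appendix that, like yours, encodes a first-order variable as a path plus a proposition holding exactly once and uses universal propositional quantification to test node identity across paths). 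Your extra care in translating $x<y$ --- forcing $\pi_y$ to actually pass through the node marked by $q_x$ rather than merely comparing time indices --- is the right instinct, since the ancestor relation is not captured by index comparison alone. One minor slip in your strictness paragraph: QPTL exceeds LTL on \emph{counting} $\omega$-regular properties (e.g., ``$a$ at every even position''); LTL captures exactly the non-counting ones, so the parenthetical should be reversed, though this does not affect the argument.
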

\begin{proof}
	The proof proceeds as the one for Theorem~\ref{HyperQPTLmoreExpressiveFOE}.
	Consider the model of \emph{linear trees}, i.e., trees in which each node has a unique successor. 
	For this class of models, \mple is equivalent to \fo, since the equal-level predicate collapses to equality and second-order quantification in \mple can only quantify the unique single path. 
	Likewise, HyperQCTL$^*$ collapses to QPTL. 
	But it is known that QPTL can express all $\omega$-regular properties~\cite{sistla1987complementation} and \fo, which is equivalent to LTL, cannot~\cite{mcnaughton1971}.
	Thus, HyperQCTL$^*$ has to be more expressive than \mple. 
\end{proof}

Lastly, we prove the equivalence of \msoe and HyperQCTL$^*$.
The syntax and semantics of \msoe is the same as for \mple.
The only difference is that the second-order quantification is not restricted to range over full paths.

\begin{thm}
	\label{thm:HyperQCTL*equivMSOE}
	HyperQCTL$^*$ and \msoe are expressively equivalent.
\end{thm}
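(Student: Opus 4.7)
The plan is to establish both inclusions HyperQCTL$^*$ $\subseteq$ \msoe and \msoe $\subseteq$ HyperQCTL$^*$ by inductive syntactic translations, each proved correct by induction on formula structure.

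For the first inclusion, we extend the translation underlying Lemma~\ref{MPLEsubsumesHyperKCTL*} from HyperCTL$^*$ into \mple. The only new syntactic construct of HyperQCTL$^*$ is propositional quantification $\exists q\ldot\varphi$, which maps directly to unrestricted second-order quantification $\exists X_q\ldot\varphi'$ in \msoe; this works because \msoe drops the restriction of \mple that second-order variables range only over paths. The atom $q_\pi$ is translated as ``the node on the path encoded by $\pi$ at the current level belongs to $X_q$''. Temporal operators, path quantifiers, and atoms $a_\pi$ are handled exactly as in the translation of Lemma~\ref{MPLEsubsumesHyperKCTL*}, and an induction on formula structure yields semantic equivalence.

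For the converse, we translate each \msoe formula into an equivalent HyperQCTL$^*$ formula. The key observation is that HyperQCTL$^*$ propositional quantification already expresses arbitrary second-order set quantification over tree nodes: $\exists q\ldot\varphi$ selects an arbitrary relabeling, hence an arbitrary subset. Thus \msoe second-order quantifiers $\exists X\ldot\varphi$ translate to $\exists q_X\ldot\varphi'$, with $x \in X$ becoming ``the node encoded by $x$ carries label $q_X$''. First-order \msoe quantifiers $\exists x\ldot\varphi$ are reduced to singleton-restricted second-order quantification, written $\exists q_x\ldot \mathit{sing}(q_x) \land \varphi'$, where $\mathit{sing}(q_x)$ is a HyperQCTL$^*$ formula forcing $q_x$ to hold at exactly one tree node. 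The atomic relations are then straightforward to encode: $x < y$ becomes ``some path passes through the $q_x$-marked node strictly before the $q_y$-marked node'', and $E(x,y)$ becomes ``some path carries $q_x$ and some path carries $q_y$ at the same temporal index''.

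The main obstacle is expressing $\mathit{sing}(q_x)$, since HyperQCTL$^*$ offers no syntactic node-identity test and two distinct nodes may share the same $\mathit{AP}$-labels. The decisive observation is that node equality $\pi_1[i] = \pi_2[i]$ is itself expressible in HyperQCTL$^*$ by the formula $\forall r\ldot (r_{\pi_1} \leftrightarrow r_{\pi_2})$ evaluated at time $i$: if the two positions are the same node, every relabeling of a fresh proposition $r$ must agree on both, whereas if they are distinct, a relabeling can be chosen that separates them. Using this, $\mathit{sing}(q_x)$ is realized as the conjunction of (i)~$\exists \pi\ldot \F q_\pi$ (at least one $q_x$-node exists), (ii)~$\forall \pi\ldot \G(q_\pi \to \X \G \neg q_\pi)$ (at most one $q_x$ per path), (iii)~$\forall \pi_1\forall\pi_2\ldot (\F q_{\pi_1} \land \F q_{\pi_2}) \to \G(q_{\pi_1} \leftrightarrow q_{\pi_2})$ (all $q_x$-carrying paths do so at the same time), and (iv)~$\forall\pi_1\forall\pi_2\ldot \G((q_{\pi_1} \land q_{\pi_2}) \to \forall r\ldot (r_{\pi_1}\leftrightarrow r_{\pi_2}))$ (when both paths witness $q_x$ at the same time, they point to the same node). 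With $\mathit{sing}$ in hand the remaining inductive cases are mechanical, and the two inclusions together yield the claimed expressive equivalence.
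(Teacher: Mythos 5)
Your proposal is correct and follows the same overall strategy as the paper (mutual inductive translations, with propositional quantification playing the role of unrestricted second-order quantification), but the two arguments diverge on how first-order \msoe variables are represented in HyperQCTL$^*$. The paper encodes a first-order variable $x$ as a \emph{pair} consisting of a quantified path $\pi_x$ and a quantified proposition $q^x$ required to hold exactly once \emph{on that path}, so a node is always addressed as a (path, position) pair; you instead encode $x$ directly as a singleton subset of tree nodes via a quantified proposition $q_x$, paying for this with the $\mathit{sing}(q_x)$ gadget and, crucially, the node-identity test $\forall r\ldot(r_{\pi_1}\leftrightarrow r_{\pi_2})$, which is sound under the paper's semantics of propositional quantification (relabelings are per node, so distinct nodes can always be separated by some relabeling) and is in fact the same device the paper uses globally to test path equality in its translation of $x=y$. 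Your encoding buys cleaner atomic clauses: $x<y$ becomes a single-path reachability check $\exists\pi\ldot\F(q_{x,\pi}\land\X\F q_{y,\pi})$, which directly captures the ancestor relation, whereas the paper's (path, position) representation must reason about two possibly different representative paths denoting the same node. The paper's encoding, in turn, avoids the whole-tree singleton constraint and keeps all propositional quantifiers outside temporal operators. One small imprecision on your side: in the direction HyperQCTL$^*\subseteq{}$\msoe you say path quantifiers are handled ``exactly as'' in Lemma~\ref{MPLEsubsumesHyperKCTL*}, but since \msoe second-order variables are no longer restricted to range over paths, the translation of $\exists\pi$ must additionally conjoin an explicit first-order assertion that $X_\pi$ encodes a full path (root membership plus unique direct successors), as the paper does in its appendix; this is routine but cannot be omitted.
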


\begin{proof}
	We give linear translations in both directions.
	To translate an \msoe formula into HyperQCTL$^*$, we can use propositional quantification to mimic second-order quantification of a set of nodes in the tree. 
	Furthermore, to translate first-order quantification into HyperQCTL$^*$, we quantify a path and an atomic proposition, of which we require that it holds exactly once on that path (as in Lemma~\ref{HyperQPTLsubsumesFOE}). 
	This allows us to translate the equal-level predicate $E(x,y)$ by checking whether the atomic proposition for $x$ holds at the same time on the path for $x$ as the atomic proposition for $y$ on the path for $y$.
	For the other direction, we translate a  HyperQCTL$^*$ formula into an \msoe formula by using distinct first-order variables to indicate time and paths (as in Lemma~\ref{S1SESubsumesHyperQPTL}). 
	Quantification of atomic propositions can be mimicked by second-order quantification.
\end{proof}

\section{Satisfiability}
\label{sec:satisfiability}

The satisfiability problem of a standard linear-time logic is to decide,
for a given formula $\varphi$,  whether there exists a trace $t$ such that $t \models \varphi$. For a linear-time hyperlogic, we ask whether a trace set $T$
exists such that  $T  \models \varphi$. For branching-time logics, both for standard and for hyperlogics, we ask whether there is a tree $\mathcal{T}$ such that $\mathcal{T} \models \varphi$.

The satisfiability problems of LTL and CTL$^*$ are decidable~\cite{LTL-SAT, CTLStar-SAT}. In the linear-time spectrum, it is known that the fragment of $\exists^* \forall^*$~HyperLTL is decidable, and that already a single $\forall \exists$~trace quantifier alternation leads to undecidability~\cite{DecidingHyperLTL}. 
We show in Section~\ref{sec:HyperQPTLSat} that the same rule also applies to the more expressive linear-time hyperlogic HyperQPTL: the $\exists^* \forall^*$~HyperQPTL fragment is decidable whereas a $\forall \exists$~trace quantifier alternation causes undecidability.

In the branching-time spectrum, no decidable hyperlogics have been identified so far. 
In Section~\ref{sec:HyperCTLStarSat},
we first consider the satisfiability problem of the existential fragment of HyperCTL$^*$. We then obtain the general result that the  $\exists^* \forall^*$ fragment is decidable and that a single $\forall\exists$ quantifier alternation leads to undecidability.

\subsection{Linear-Time Satisfiability}
\label{sec:HyperQPTLSat}

We define the fragments of HyperQPTL on the basis of the formulas \emph{trace} quantification (analogously to HyperLTL~\cite{DecidingHyperLTL}).
Note that formulas in these fragments may contain arbitrary propositional quantification.

\begin{lem}
	\label{lem:hyperqptl-sat-forall}
The satisfiability problem of HyperQPTL formulas in the $\forall^*$ fragment is decidable.
\end{lem}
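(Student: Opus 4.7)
The plan is to reduce satisfiability of $\forall^*$~HyperQPTL to satisfiability of QPTL, which is known to be decidable. The key observation is that universal trace quantification becomes trivial on a one-element trace set, and propositional quantification in HyperQPTL is insensitive to the number of traces in the model (the quantified proposition is interpreted as an independent trace over $2^{\{q\}}$, not tied to any $\pi \in T$).

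First I would prove a \emph{singleton-model property}: if $\varphi = \forall \pi_1 \ldots \forall \pi_k.~\psi$ (with $\psi$ possibly containing propositional quantifiers, Boolean connectives, and temporal operators, but no further trace quantifiers) is satisfied by some trace set $T$, then for any trace $t \in T$ the singleton $\{t\}$ also satisfies $\varphi$. This is immediate: every assignment $\Pi$ of $\pi_1,\ldots,\pi_k$ into $\{t\}$ is in particular an assignment into $T$, so $\Pi, 0 \models_T \psi$ implies $\Pi, 0 \models_{\{t\}} \psi$ (by a straightforward induction on $\psi$, using that the semantics of propositional quantifiers $\exists q$ introduces an auxiliary trace $\pi_q \in (2^{\{q\}})^\omega$ chosen independently of the trace set). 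Hence satisfiability of $\varphi$ is equivalent to the existence of some trace $t$ with $\{t\} \models \varphi$.

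Next I would translate the problem to QPTL. Given a single-trace model $\{t\}$, every path variable $\pi_i$ ranges over the singleton $\{t\}$, so $a_{\pi_i}$ is equivalent to the atomic proposition $a$ evaluated on $t$. Concretely, define a translation $(\cdot)^*$ on the quantifier-free body $\psi$ by replacing each $a_{\pi_i}$ with $a$, commuting the translation with Boolean connectives, temporal operators, and propositional quantifiers. Since the trace quantifiers $\forall \pi_1 \ldots \forall \pi_k$ over a singleton are vacuous, $\{t\} \models \varphi$ iff $t \models \psi^*$ in the usual QPTL sense. Therefore $\varphi$ is HyperQPTL-satisfiable iff $\psi^*$ is QPTL-satisfiable.

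Finally, since QPTL satisfiability is decidable~\cite{QPTL}, the reduction yields decidability of $\forall^*$~HyperQPTL satisfiability. The main step is the singleton-model argument; the translation itself is syntactic and the use of QPTL decidability is black-box, so I do not anticipate genuine obstacles beyond being careful that propositional quantifiers in HyperQPTL and QPTL have matching semantics, which they do by definition.
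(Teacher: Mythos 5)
Your proposal is correct and follows essentially the same route as the paper's proof: reduce to a single-trace model (possible because all trace quantifiers are universal), then erase the trace quantifiers and the trace subscripts on atomic propositions to obtain an equisatisfiable QPTL formula, and invoke decidability of QPTL. The paper states this more tersely, but the singleton-model argument and the syntactic translation you spell out are exactly the intended justification.
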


\begin{proof}
We employ a similar reasoning as in~\cite{DecidingHyperLTL}.
First note that a $\forall^*$ HyperQPTL formula is satisfiable iff it is satisfiable by a model containing only a single trace. Since all quantifiers are universal, a model with more than one trace can always be reduced to a model with exactly one trace.
We therefore reduce the satisfiability problem of $\forall^*$~HyperQPTL to the satisfiability problem of QPTL, which is decidable~\cite{QPTL}.
Given a $\forall^*$~HyperQPTL formula $\varphi$, consider the QPTL formula $\psi$ obtained by removing the trace quantifiers from $\varphi$ and also removing all trace variables from atomic propositions $a_\pi$.
The resulting formula is equisatisfiable to $\varphi$.
\end{proof}

\begin{thm}
	\label{hyperqptl-sat-decidable}
The satisfiability problem of HyperQPTL formulas in the $\exists^* \forall^*$~fragment is decidable.
\end{thm}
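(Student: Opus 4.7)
The plan is to reduce $\exists^*\forall^*$ HyperQPTL satisfiability to QPTL satisfiability via a small-model theorem, in the spirit of the analogous reduction for $\exists^*\forall^*$ HyperLTL~\cite{DecidingHyperLTL}. Fix a formula $\varphi = \exists \pi_1 \ldots \exists \pi_n \ldot \forall \pi'_1 \ldots \forall \pi'_m \ldot \psi$, where $\psi$ may contain arbitrary nested propositional quantification but no further trace quantifiers. The first step is a small-model claim: if $T \models \varphi$ using existential witnesses $t_1, \ldots, t_n \in T$, then the restricted model $T' = \set{t_1, \ldots, t_n}$ already satisfies $\varphi$ with the same witnesses. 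The argument is pure monotonicity: the witnesses remain available in $T'$, and universal quantification over $T' \subseteq T$ is weaker than over $T$. Crucially, the truth of $\psi$ under a fixed trace assignment and propositional valuation does not depend on the ambient trace set, since HyperQPTL's propositional quantifiers range over arbitrary $\omega$-sequences independently of $T$ (Def.~\ref{hyperqptl}).

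Given the small-model property, satisfiability of $\varphi$ is equivalent to the existence of traces $t_1, \ldots, t_n \in (2^{AP})^\omega$ such that for every tuple $(k_1, \ldots, k_m) \in \set{1, \ldots, n}^m$, the body $\psi$ holds under the assignment $\pi_i \mapsto t_i$ and $\pi'_j \mapsto t_{k_j}$. Unrolling the universal quantifier over the $n$-trace model yields
\[
  \exists t_1, \ldots, t_n \ldot \bigwedge_{(k_1, \ldots, k_m) \in \set{1, \ldots, n}^m} \psi[\pi_i \mapsto t_i, \pi'_j \mapsto t_{k_j}].
\]

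To eliminate the remaining trace variables, I encode each witness $t_i$ by $|AP|$ fresh propositional variables $\set{q_i^a \mid a \in AP}$, replacing every occurrence of $a_{t_i}$ by $q_i^a$, and prepend $n \cdot |AP|$ existential propositional quantifiers. After $\alpha$-renaming the propositional quantifiers occurring inside the $n^m$ copies of $\psi$ to avoid capture, the resulting formula mentions no trace variables at all and is a plain QPTL formula. Since QPTL satisfiability is decidable~\cite{QPTL}, so is the satisfiability problem of $\exists^*\forall^*$ HyperQPTL.

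The main obstacle is the small-model theorem, which requires a short structural induction to show that each sub-formula of $\psi$ retains its truth value when the ambient trace set is replaced by any superset of the assigned traces. The remaining steps --- unrolling $\forall^m$ into a conjunction of $n^m$ copies and encoding existential witness traces by fresh propositional variables --- are purely syntactic and introduce only the expected exponential blow-up in $m$.
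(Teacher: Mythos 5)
Your proposal is correct and follows essentially the same route as the paper: restrict attention to the model consisting of the existential witnesses, unroll the universal trace quantifiers into a conjunction of $n^m$ instantiations, encode each witness trace by $|AP|$ existentially quantified propositions, and conclude by the decidability of QPTL. The only (easily repaired) omissions relative to the paper are the degenerate case with no existential trace quantifier, which the paper delegates to the $\forall^*$ lemma, and propositional quantifiers occurring in the quantifier prefix before or between the trace quantifiers --- the paper's fragment definition permits these, and since an $\exists q$ sitting between the existential and universal trace blocks cannot be pushed into the body past $\forall\pi'$, your normal form does not literally cover such formulas, although in the construction these quantifiers simply remain in place.
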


\begin{proof}
Let a HyperQPTL formula $\psi$ of the form $\vec{Q}_0.~\exists \pi_0, \ldots, \pi_n.~\vec{Q}_1.~\forall \pi'_0, \ldots, \pi'_m.~\vec{Q}_2.~ \varphi$ over $\mathit{AP} = \{a^0,\ldots,a^{k-1}\}$ be given.
$\vec{Q}_i$ denotes arbitrary propositional quantification.
In the following, we assume that $\psi$ has at least one existential trace quantifier.
If this is not the case, Lemma~\ref{lem:hyperqptl-sat-forall} handles the formula. 
We construct an equisatisfiable QPTL formula $\psi_\mathit{QPTL}$ for $\psi$.
As first step, we eliminate the universal quantification by explicitly enumerating every possible interaction between the universal and existential quantifiers, a technique already used to prove the decidability of the $\exists^* \forall^*$~HyperLTL fragment~\cite{DecidingHyperLTL}:
$$
\vec{Q}_0\exists \pi_0,\ldots, \pi_n. \vec{Q}_1 \bigwedge_{j_1 = 1}^n \hspace{-1.5mm}\ldots \hspace{-1.5mm} \bigwedge_{j_m = 1}^n \hspace{-1.5mm} \vec{Q}_2.\varphi[\pi_{j_1} / \pi'_1, \ldots, \pi_{j_m} / \pi'_m],
$$
where $\varphi[\pi_{j} / \pi'_i]$ denotes that the trace variable $\pi'_i$ in $\varphi$ is replaced by $\pi_j$. 
Like this, every combination of trace assignments for the universal quantification is covered and the resulting formula is equisatisfiable and of size $\mathcal{O}(n^m)$. 
The formula is technically not a HyperQPTL formula yet, since $\vec{Q}_2$ is in the scope of a conjunction. 
An equisatisfiable HyperQPTL formula without $\vec{Q}_2$ can be constructed by introducing a fresh existential quantifier in the quantifier prefix for each existential quantification in a conjunct (denoted by $\vec{\exists}$) and by simply moving the universal quantification in $\vec{Q}_2$ in front of the conjunction, into the quantifier prefix (denoted by $\vec{\forall}$).
We use $\varphi'$ to denote the body of the resulting HyperQPTL formula.
As second step, we eliminate the existential trace quantification by replacing each trace quantifier $\exists \pi_i$ with $k$ existential quantifiers over propositions, one for every atomic proposition in $\mathit{AP}$:
$$
\psi_\mathit{QPTL} \coloneqq \vec{Q}_0\bigexists_{\pi = \pi_0}^{\pi_n} a_\pi^0, \ldots, a_\pi^{k-1} \vec{Q}_1 \vec{\exists}~ \vec{\forall}.~ \varphi'.
$$
By the semantics of HyperQPTL, the resulting formula is equisatisfiable to $\psi$:
Assume $\psi$ is satisfied by a trace set $T_\psi$.
We construct a set of witnesses for $\psi_\mathit{QPTL}$ by splitting every witness $t \in T_\psi$ into $k$ traces $t^j$ ($j < k$), one for every atomic proposition $a^j \in \mathit{AP}$. 
For all such $j$, we require the traces $t^j \in (2^{\set{a_\pi^j}})^\omega$ to agree with $t$ on the translated atomic proposition.
The resulting trace set $\{t^j \mid 0 \leq j < k \}$ satisfies $\psi_\mathit{QPTL}$ by construction.
Constructing $T_\psi$ from a set of witnesses for a formula $\psi_\mathit{QPTL}$ obtained from the above construction works analogously by computing, for every point in time $k$, the union of all the witness positions $t^j[k]$ of witnesses for $\psi_\mathit{QPTL}$ for each $\pi_i$.
\end{proof}

\begin{thm}
\label{thm:HyperQPTLUndec}
	The satisfiability problem for HyperQPTL formulas in the $\forall\exists$ fragment is undecidable.
\end{thm}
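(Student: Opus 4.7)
The plan is to derive the theorem as an immediate corollary of the known undecidability of $\forall\exists$~HyperLTL satisfiability established in~\cite{DecidingHyperLTL}, by observing that HyperQPTL is a conservative syntactic extension of HyperLTL.

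Concretely, Definition~\ref{hyperqptl} obtains HyperQPTL from HyperLTL (Definition~\ref{hyperltl}) by adding the atomic formulas $q$ and the prenex propositional quantifier $\exists q.\, \varphi$, while leaving the remaining syntax and all previously defined semantic clauses untouched. Hence every HyperLTL formula is a syntactically valid HyperQPTL formula, and its satisfaction relation under a trace set $T$ is identical in both logics.

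The reduction from $\forall\exists$~HyperLTL satisfiability to $\forall\exists$~HyperQPTL satisfiability is therefore the identity map: given a HyperLTL formula $\varphi = \forall \pi_1 \ldots \forall \pi_n.\, \exists \pi'_1 \ldots \exists \pi'_m.\, \psi$ with quantifier-free body $\psi$, one simply views $\varphi$ as a HyperQPTL formula that happens to contain no propositional quantifiers. By the observation above, $T \models \varphi$ holds in HyperLTL iff it holds in HyperQPTL, and the trace quantifier prefix is preserved. Undecidability therefore transfers directly.

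The only point to verify carefully is that the definitional extension of HyperQPTL over HyperLTL does not alter the semantics on the common syntactic fragment, which is immediate by inspection of the two definitions; no genuine obstacle arises. Should a self-contained argument be preferred, one could instead reduce directly from the halting problem of $2$-counter machines along the lines of the encoding used in the proof of Lemma~\ref{S1SEModelChecking}, using the $\forall$ quantifier to enforce consistency between consecutive configurations and the $\exists$ quantifier to witness the successor configuration, but this is unnecessary in view of the existing HyperLTL result.
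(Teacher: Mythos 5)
Your proof is correct and takes essentially the same approach as the paper's own proof: both derive the result immediately from the known undecidability of the $\forall\exists$~HyperLTL fragment~\cite{DecidingHyperLTL}, using the observation that HyperQPTL subsumes HyperLTL as a syntactic fragment with unchanged semantics. Your additional remarks on conservativity and the optional direct $2$CM reduction are fine but not needed.
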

\begin{proof}
	HyperQPTL subsumes HyperLTL as a syntactic fragment. Since the satisfiability problem for HyperLTL is undecidable for formulas with at least one trace quantifier alternation starting with a universal quantifier~\cite{DecidingHyperLTL}, this also holds for HyperQPTL.
\end{proof}

We have thus identified  the $\exists^* \forall^*$~fragment as the largest decidable fragment of the linear-time hyperlogic HyperQPTL.

\subsection{Branching-Time Satisfiability}
\label{sec:HyperCTLStarSat}
In the following, we study the satisfiability problem of HyperCTL$^*$. 
We assume formulas to be given in negated normal form.
A HyperCTL$^*$ formula in NNF is in the $\exists^*$ and $\forall^*$~fragment, respectively, if it contains exclusively universal or exclusively existential path quantifiers. 
The union of the two fragments is the alternation-free fragment.  
The formula is in the $\exists^* \forall^*$~fragment, if there is no existential path quantifier in the scope of a universal path quantifier. 
It is in the $\forall\exists$~fragment if there is exactly one existential path quantifier in the scope of a single universal path quantifier. 
\begin{lem}
	\label{lem:HyperCTL-SAT-forall}
	The satisfiability problem of HyperCTL$^*$ formulas in the $\forall^*$~fragment is decidable.
\end{lem}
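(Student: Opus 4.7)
The approach I would take mirrors the proof of Lemma~\ref{lem:hyperqptl-sat-forall}: since every path quantifier is universal, a $\forall^*$ HyperCTL$^*$ formula is satisfiable precisely when it is satisfied by a model with only one path, i.e., a \emph{linear tree}. On such a tree each universally quantified path variable is forced to map to the unique path, so the formula collapses to an LTL formula in which the trace subscripts disappear. Decidability of LTL satisfiability~\cite{LTL-SAT} then closes the argument.

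Concretely, I would first define a syntactic translation $\varphi \mapsto \varphi^*$ from $\forall^*$ HyperCTL$^*$ formulas in NNF to LTL formulas by (i)~erasing every prefix $\forall \pi.$, wherever it occurs in the formula tree, and (ii)~replacing each atomic proposition $a_\pi$ by $a$. Observe that no other construct needs adjustment, since HyperCTL$^*$ bodies are built from atomic formulas, boolean connectives, and the temporal operators $\X$ and $\U$, all of which are already LTL operators. Then I would prove the key claim: $\varphi$ is satisfiable as a HyperCTL$^*$ formula iff $\varphi^*$ is satisfiable as an LTL formula.

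The ``if'' direction is the easy one. Given a trace $t$ with $t \models \varphi^*$, I would form the linear tree $\mathcal{T}_t$ whose unique initial path carries the labelling of $t$, and verify by structural induction that $\mathcal{T}_t \models \varphi$: in $\mathcal{T}_t$ the set of paths agreeing with any $\varepsilon$ up to position $i$ is always the singleton $\{p_t\}$, so the universal quantifier cases reduce to the induction hypothesis applied with all path variables mapped to $p_t$, and atomic formulas $a_\pi$ agree with $a$ on the current position.

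The ``only if'' direction is the main obstacle because path quantifiers in HyperCTL$^*$ rebind the implicit $\varepsilon$ variable, and we must ensure this rebinding is harmless once we project onto a single path. Given a tree $\mathcal{T}$ with $\mathcal{T} \models \varphi$, I would fix an arbitrary initial path $p$ of $\mathcal{T}$, let $t_p$ be its labelling, and prove by structural induction on subformulas $\psi$ of $\varphi$ the strengthened statement: for every $i \geq 0$, every path assignment $\Pi$ mapping all free variables of $\psi$ (including $\varepsilon$) to $p$ satisfies $\Pi, i \models_{\mathcal{T}} \psi \Longrightarrow t_p, i \models \psi^*$. The universal quantifier case is where the rebinding must be tracked: when $\Pi,i \models_{\mathcal{T}} \forall \pi.\psi$ holds, the body is in particular true under $\Pi[\pi \mapsto p,\varepsilon \mapsto p]$, because $p$ agrees with $\Pi(\varepsilon)=p$ on $[0,i]$; the updated assignment still collapses every variable to $p$, so the induction hypothesis applies and yields $t_p,i \models \psi^*$. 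Taking $\psi = \varphi$ and $i = 0$ gives $t_p \models \varphi^*$, completing the reduction; decidability of the $\forall^*$ fragment then follows from that of LTL satisfiability.
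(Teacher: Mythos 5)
Your proposal is correct and follows essentially the same route as the paper: observe that a $\forall^*$ formula is satisfiable iff it is satisfiable on a linear tree, where the path quantifiers become vacuous, and then reduce to a decidable standard logic. The only (inconsequential) difference is that you target LTL directly while the paper phrases the reduction via CTL$^*$ satisfiability restricted to linear trees; your version is somewhat more detailed, in particular in tracking the rebinding of $\varepsilon$ in the induction, which the paper leaves implicit.
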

\begin{proof}
	The proof is similar to the proof for Lemma~\ref{lem:hyperqptl-sat-forall}.
	We reduce the satisfiability problem of $\forall^*$~HyperCTL$^*$ to the satisfiability problem of CTL$^*$ and consider models that are linear trees, i.e., trees having only a single path. 
\end{proof}
We now prove in two steps that the $\exists^*$ fragment is decidable, regardless of the temporal modalities and the nesting depth of the existential quantifiers. We start with formulas of a specific form and then generalize the result to the full fragment. 
Subsequently, we establish that the satisfiability problem for HyperCTL$^*$ formulas in the full $\exists^* \forall^*$ fragment remains decidable. 

\begin{lem}\label{lem:HyperCTL-SAT-Release}
	The satisfiability problem for HyperCTL$^*$ formulas of the form $\varphi\coloneqq \exists \pi.(\exists \pi'. \psi') \LTLrelease (\exists \pi''. \psi'')$, where $\psi'$ and $\psi''$ are quantifier free, is decidable.
\end{lem}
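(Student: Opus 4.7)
My plan is to reduce the satisfiability of $\varphi \coloneqq \exists\pi.(\exists\pi'.\psi') \R (\exists\pi''.\psi'')$ to the emptiness of an $\omega$-regular language on $(2^{AP})^\omega$, exploiting that $\varphi$ is purely existential in its path quantification: the outer $\exists\pi$ together with the inner $\exists\pi'$ and $\exists\pi''$ can all be witnessed in a sufficiently rich tree. The first step is a \emph{full-tree normalization}. Let $\mathcal{T}_{\mathrm{full}}$ be the tree in which every node has, for each $\ell \in 2^{AP}$, exactly one child labeled $\ell$, so that its paths realize every trace in $(2^{AP})^\omega$. I would argue $\varphi$ is satisfiable iff $\mathcal{T}_{\mathrm{full}} \models \varphi$: given a satisfying tree $\mathcal{T}$ with witnessing main path $\pi$, witnesses $\pi''_k$ for every position $k$ at which the release has not yet fired, and (if the release fires) a witness $\pi'$ at the release point $j$, the corresponding label sequences are realized as paths in $\mathcal{T}_{\mathrm{full}}$. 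Since $\psi'$ and $\psi''$ are quantifier-free and hence only sensitive to labels, and since prefix agreement of paths in $\mathcal{T}_{\mathrm{full}}$ amounts to label-prefix agreement, the transferred paths still witness $\varphi$.

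With $\mathcal{T}_{\mathrm{full}}$ in hand, the inner quantifiers can be eliminated. The condition $\Pi, k \models_{\mathcal{T}_{\mathrm{full}}} \exists\pi''.\psi''$, with $\Pi = [\pi \mapsto \pi]$, becomes: there exists $v \in (2^{AP})^\omega$ with $v[0] = \pi[k]$ such that the trace pair $(\pi[k,\infty], v)$ satisfies the quantifier-free (hence LTL) formula $\psi''$ at position $0$. Compiling $\psi''$ into a nondeterministic B\"uchi automaton $B_{\psi''}$ over $(2^{AP})^2$ and projecting out the second component under the first-letter agreement constraint yields a B\"uchi automaton $A''$ over $2^{AP}$ that accepts exactly those suffixes $u$ for which $\exists\pi''.\psi''$ holds at a position of $\pi$ whose suffix is $u$. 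An analogous construction gives an automaton $A'$ for $\exists\pi'.\psi'$.

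Satisfiability of $\varphi$ thereby reduces to the existence of $\pi \in (2^{AP})^\omega$ such that $p' \R p''$ holds at position~$0$, where $p'$ and $p''$ are the suffix-membership predicates ``$\pi[k,\infty] \in L(A')$'' and ``$\pi[k,\infty] \in L(A'')$''. Using the equivalence $p' \R p'' \equiv \G p'' \lor (p'' \U (p' \wedge p''))$ together with the closure of $\omega$-regular languages under Boolean operations and under the temporal operators $\G, \F, \U$ when applied to suffix-membership predicates of $\omega$-regular languages, the set $L_\varphi$ of satisfying traces is $\omega$-regular, and its emptiness is decidable. The main technical obstacle I anticipate is Step~1: the $\varepsilon$-agreement constraint in the semantics of $\exists\pi'$ and $\exists\pi''$ is evaluated at the position of quantification, which lies inside the unrolling of the release modality, so one must verify that the label-level transfer to $\mathcal{T}_{\mathrm{full}}$ preserves this agreement uniformly across all (possibly infinitely many) positions $k$ at which a release-witness is required.
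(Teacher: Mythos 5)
Your proof is correct, but it takes a genuinely different route from the paper's. The paper argues model-theoretically: it extracts the witness paths from an arbitrary model, arranges them in a comb labeled by runs of a B\"uchi automaton for $\psi''$, and uses a pumping argument (cutting between diagonals with equal ``frontiers'' of automaton states) to show that every model can be contracted to a comb prefix of bounded depth and then re-inflated to an ultimately periodic model; the decision procedure enumerates these bounded prefixes. You instead exploit that $\varphi$ is purely existential to normalize the model to the full $2^{AP}$-branching tree, after which the inner quantifiers collapse to $\omega$-regular suffix predicates on the label sequence of the main path and the release becomes an emptiness check. Your key observation---that node-prefix agreement in an arbitrary model implies label-prefix agreement, which in the full tree restores node-prefix agreement, so the $\varepsilon$-constraint transfers uniformly over all (infinitely many) positions $k$---is exactly right, and it is the reason the comb surgery can be avoided here. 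Two points to tighten: first, a tree has a single root with a single label, so no one tree realizes \emph{all} of $(2^{AP})^\omega$; you need one full tree $\mathcal{T}^{\ell_0}_{\mathrm{full}}$ per root label $\ell_0 \in 2^{AP}$ and a disjunction over these finitely many cases. Second, the closure step for $\G p''$ and $p'' \U (p' \wedge p'')$ over suffix-membership predicates deserves an explicit justification, e.g.\ by expressing each predicate ``$u[k,\infty] \in L(A'')$'' as an S1S formula with free variable $k$ (relativizing all quantifiers to positions $\geq k$) and invoking B\"uchi's theorem; note that the $\forall k$ branch of the release implicitly uses B\"uchi complementation. What each approach buys: yours yields a much shorter proof and a cleaner complexity bound via automata emptiness; the paper's comb technique is built to be lifted dimension-by-dimension to the full $\exists^*$ fragment in Lemma~\ref{hyperCTL$^*$-exists-decidabel}, although your normalization would also scale to nested existential quantification by iterating the projection construction.
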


\begin{proof}
	The key idea of the proof is to show that every model of a $\varphi$-shaped formula has a finite representation.
	More concretely, we show that we can represent an arbitrary model $\mathcal{T}$ satisfying $\varphi$ as a tree $\mathcal{T}_{\mathit{fin}}$ of bounded size. 
	We then show that $\mathcal{T}_{\mathit{fin}}$ can be extended to an infinite tree $\mathcal{\tilde T}$ which satisfies $\varphi$.
	We conclude by describing a naive decision procedure which enumerates all bounded trees $\mathcal{T_{\mathit{fin}}}$ and checks whether they can be extended into an infinite model $\mathcal{\tilde T}$ for $\varphi$.
	
	\textbf{Intuition.}
	We first give some intuition on how to construct $\mathcal{T}_{\mathit{fin}}$ out of $\mathcal{T}$.
	Assume a formula $\exists \pi. \LTLglobally (\exists \pi''. \psi)$ (which belongs to the fragment described in the statement) and a model $\mathcal{T}$ satisfying it.
	In this model, there needs to be a path $p$ witnessing $\pi$, and at each point in time $i$, there must be a path $p_i$, which branches off of $p$ and serves as a witness for $\pi''$ at point in time $i$.
	Extracting these witnesses from the model results in a comb-like structure as depicted in Figure~\ref{intro1}.
	
	\begin{figure}[t]
		\centering
		{\scalebox{0.9}{
				\begin{tikzpicture}
					\draw [line width = 0.4mm] (0,1.5) -- (0,5);
					\draw [line width = 0.4mm] (0,5) node (root0) {} -- (6.5,5);
					\draw [line width = 0.4mm] (0,4) node (root1) {} -- (6.5,4);
					\draw [line width = 0.4mm] (0,3) node (root2) {} -- (6.5,3);
					\draw [line width = 0.4mm] (0,2) node (root3) {} -- (6.5,2);

					\draw [draw = none, fill = black] (root0) circle [radius=0.7mm, label=left:A] node[label=left:{$p[0]$}, label=north east:{$p_0[0]$}] (dot00) {};
					\draw [draw = none, fill = black] (root0) [right = 2cm] circle [radius=0.7mm] node[label=above:{$p_0[1]$}] (dot01) {};
					\draw [draw = none, fill = black] (root0) [right = 4cm] circle [radius=0.7mm] node[label=above:{$p_0[2]$}] (dot02) {};
					\draw [draw = none, fill = black] (root0) [right = 6cm] circle [radius=0.7mm] node[label=above:{$p_0[3]$}] (dot03) {};
					
					\draw [draw = none, fill = black] (root1) circle [radius=0.7mm] node[label=left:{$p[1]$}, label=north east:{$p_1[0]$}] (dot10) {};
					\draw [draw = none, fill = black] (root1) [right = 2cm] circle [radius=0.7mm] node[label=above:{$p_1[1]$}] (dot11) {};
					\draw [draw = none, fill = black] (root1) [right = 4cm] circle [radius=0.7mm] node[label=above:{$p_1[2]$}] (dot12) {};
					\draw [draw = none, fill = black] (root1) [right = 6cm] circle [radius=0.7mm] node[label=above:{$p_1[3]$}] (dot13) {};
					
					\draw [draw = none, fill = black] (root2) circle [radius=0.7mm] node[label=left:{$p[2]$}, label=north east:{$p_2[0]$}] (dot20) {};
					\draw [draw = none, fill = black] (root2) [right = 2cm] circle [radius=0.7mm] node[label=above:{$p_2[1]$}] (dot21) {};
					\draw [draw = none, fill = black] (root2) [right = 4cm] circle [radius=0.7mm] node[label=above:{$p_2[2]$}] (dot22) {};
					\draw [draw = none, fill = black] (root2) [right = 6cm] circle [radius=0.7mm] node[label=above:{$p_2[3]$}] (dot23) {};
					
					\draw [draw = none, fill = black] (root3) circle [radius=0.7mm] node[label=left:{$p[3]$}, label=north east:{$p_3[0]$}, label=south west:{\color{darkgray!90}$D_3$}] (dot30) {};
					\draw [draw = none, fill = black] (root3) [right = 2cm] circle [radius=0.7mm] node[label=above:{$p_3[1]$}] (dot31) {};
					\draw [draw = none, fill = black] (root3) [right = 4cm] circle [radius=0.7mm] node[label=above:{$p_3[2]$}] (dot32) {};
					\draw [draw = none, fill = black] (root3) [right = 6cm] circle [radius=0.7mm] node[label=above:{$p_3[3]$}] (dot33) {};
					
					\draw [draw = none] (dot03) node [right=1cm] (tau0) {$p_0$};
					\draw [draw = none] (dot13) node [right=1cm] (tau1) {$p_1$};
					\draw [draw = none] (dot23) node [right=1cm] (tau2) {$p_2$};
					\draw [draw = none] (dot33) node [right=1cm] (tau3) {$p_3$};
					\draw [draw = none] (root3) node [below = 1cm, label=west:{$p$}] (tau) {};
					
					\draw [line width = 0.3mm, dotted] (dot03) -- (tau0);
					\draw [line width = 0.3mm, dotted] (dot13) -- (tau1);
					\draw [line width = 0.3mm, dotted] (dot23) -- (tau2);
					\draw [line width = 0.3mm, dotted] (dot33) -- (tau3);
					\draw [line width = 0.3mm, dotted] (root3) -- (tau);
					\draw (root3) [below = 0.8cm, right = 2.5cm] node {$\mathbb{\ldots}$};
					
					\draw [line width = 0.9mm, color=gray, opacity=0.5] ($(dot30)+(-2.2mm, -1.1mm)$) -- (dot03.north east);
					
				\end{tikzpicture}
		}}
		\caption{The witness $p$ for $\pi$ and the sequence of witnesses $p_i$ for $\pi''$ arranged in a comb-like structure. Nodes $p_3[0], p_2[1], p_1[2]$, and $p_0[3]$ reside on diagonal $D_3$.}
		\label{intro1}
	\end{figure}
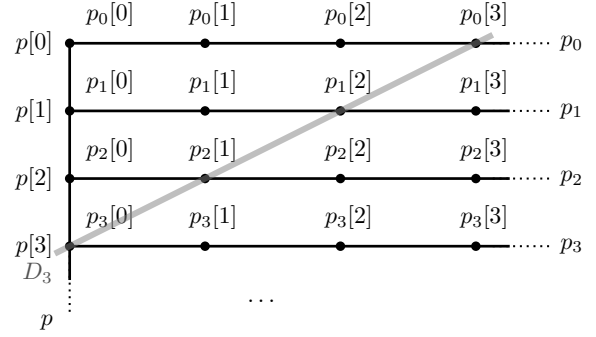
	
	Through formula $\psi$, each pair of nodes $p[i+j]$ and $p_i[j]$ are related with each other, e.g., if $\psi = \LTLglobally(a_\pi \leftrightarrow a_{\pi'})$, then all $p[i+j]$ and $p_i[j]$ must agree on $a$.
	The nodes $p[i+j]$ and $p_i[j]$ always reside on the same diagonal in the comb.
	Like this, all nodes on the same diagonal are related with each other through $p$.	
	When transforming the witnesses, it is therefore important to only consider one diagonal as a whole and to not alter just a single node on it.
	Diagonal $D_3$ is also depicted in Figure~\ref{intro1}.
	Next, note that $\psi$ can be transformed into a B\"uchi automaton which accepts each pair of witness paths $(p[i,\infty], p_i)$.
	We label each $p_i$ in the comb with the corresponding accepting automaton run.
	Now, the crucial observation is that if two diagonals in the comb are labeled with the same set of automaton states, then we can just \emph{cut out} the part between those two diagonals and still have accepting runs, i.e., the resulting paths $p$ and all $p_i$ are still witnesses for $\pi$ and $\pi''$.
	The proof proceeds by repeatedly cutting out nonessential parts of the comb until it has a suitable prefix of bounded size which we call $\mathcal{T}_{\mathit{fin}}$.
	
	\textbf{Formal Proof.}	
	We assume w.l.o.g. that no $\X$-modality occurs in the formula; any $\X$-modality can only add an offset to the operations which we describe in the following.
	Assume a tree $\mathcal{T}$ over nodes $S$ with labeling function $L$ that satisfies $\varphi$, i.e., there exist a path $p$ through $\mathcal{T}$ serving as witness for $\pi$.
	By the semantics of the $\R$-modality, either (Case~1) an infinite sequence $p_0, p_1, p_2, \ldots$ of witnesses for $\pi''$ or (Case~2) a finite sequence $p_0, p_1, \ldots, p_n$ of witnesses for $\pi''$ and a final witness $p'$ for $\pi'$.

	\emph{Case~1.}
	We first construct a B\"uchi automaton $A_{\psi''}$ that accepts all possible pairs of paths $(p,p_i)$ that satisfy $\psi''$. Then, we formally describe the comb structure and how it can be labeled with the accepting runs of $A_{\psi''}$ on each $(p[i,\infty],p_i)$. Furthermore, we give the necessary definitions of \emph{frontiers} and \emph{cuts} in order to define how to cut parts out of the comb without changing the fact that its paths constitute witnesses for $\varphi$. We show how to construct $\mathcal{T}_{\mathit{fin}}$ by repeatedly cutting out nonessential parts of the comb and give a maximal bound for its size. Lastly, we describe how to extend $\mathcal{T}_{\mathit{fin}}$ to an infinite model $\mathcal{\tilde T}$.
	
	\emph{Automaton construction.}
	Since the formula $\psi''$ is quantifier-free, it is an LTL formula where each atomic proposition $a_\pi$ is interpreted as a unique LTL proposition.
	Let $A'_{\psi''} = (Q', q'_0, \Sigma', \delta', F')$ be the nondeterministic B\"uchi automaton obtained from this LTL interpretation of $\psi''$~\cite{LTLAutomata}.
	We transform $A'_{\psi''}$ into a nondeterministic B\"uchi automaton $A_{\psi''} = (Q,q_0,\Sigma,\delta,F)$, which reasons separately over $\pi$ and $\pi''$:
	\begin{itemize}
		\item $Q: Q'$, $q_0: q'_0$
		
		\item $\Sigma: S \times S$
		
		\item $\delta: Q \times \Sigma \rightarrow 2^Q$ where $q' \in \delta(q, (s_0, s_1))$ iff $q' \in \delta'(q, A)$ and $L(s_0) = \{a \in AP ~|~ a_\pi \in A \} $ and $L(s_1) = \{a \in AP ~|~ a_{\pi''} \in A \} $, where $A \subseteq \{ a_\pi, a_{\pi''} ~|~ a \in AP\}.$ 
		
		\item $F: F'$
		
	\end{itemize}
Note that $A_{\psi''}$ reasons over pairs of paths, while $A'_{\psi''}$ reasons over traces.
The automaton $A_{\psi''}$ yields accepting runs $r_i$ for all pairs of witnesses $(p[i,\infty],p_i)$.
We can thus \emph{associate} with each node $p_i[j]$ the automaton state $r_i[j]$.

\emph{Frontiers.}
We arrange the witness paths $p$ and $p_0, p_1, p_2, \ldots$ in a comb-like structure as shown in Figure~\ref{intro1}.
For all $k \in \nat$, we address the sequence of nodes $p_0[k-0], \ldots , p_k[k-k]$ as the $k$-th diagonal of the comb, denoted by $D_k$.
We use the usual sequence notation for diagonals, e.g., $D_k[i]$ to address the $i$-th element of the sequence.
We call the set of automaton states associated with nodes in $D_k$ \emph{frontier} $F_k$, formally $F_k \coloneqq \{r_i[k-i] ~|~ i \leq k\}$.
Note that for every $k$, $F_k \subseteq Q$.

\emph{Cuts.}
We now establish how to safely remove parts of the comb structure, i.e., in such a way that the witness paths in the altered comb still have accepting runs in $A_{\psi''}$. To this end, we define the \emph{cut operation} and refine it to a \emph{preserving cut}, which requires the definition of an additional property which we call \emph{frontier-preserving cuttable}.

For two diagonals $D_k$ and $D_{k'}$ with $F_k = F_{k'}$, a cut modifies the comb in such a way that the suffix of every node in $D_k$ is replaced by the suffix of a node in $D_{k'}$, where both nodes have to be associated with the same automaton state. 
Formally, we replace every $p_i[k-i, \infty]$ with some $p_{i'}[k'-i', \infty]$, requiring that $D_k[i]$ and $D_{k'}[i']$ are associated with the same state.
Additionally, to preserve the relation of the modified paths with $p$, we replace the sub-comb with origin in $p[k]$ with the sub-comb with origin in $p[k']$.
Note that because of the requirement that $D_k[i]$ and $D_{k'}[i']$ are associated with the same state, the modified witness paths still have accepting runs through $A_{\psi''}$.

For $k \leq k'$, we say that two diagonals $D_k$, $D_{k'} $ with $F_k = F_{k'}$ are frontier-preserving cuttable (for short: cuttable) if for every $q \in F_{k'}$,
$q$ is either associated with at least as many nodes on $D_k$ as on $D_{k'}$, or it is associated with $|Q|$ nodes on $D_k$.

For $k \leq k' \leq k''$, a cut preserving $F_{k''}$ is a cut between two cuttable diagonals $D_k$ and $D_{k'}$, such that the set $F_{k''}$ is not modified by the operation.
For each $q \in F_{k''}$, pick a position $i_{q} \leq k''$ as a representative such that the state associated with $D_{k''}[i_{q}]$ is $q$.
All states $q$ with representative position $i_q \geq k'$ will not be affected by the cut.
For representative positions $i_q < k'$, ensure that when choosing suffices from $D_{k'}$ for the cut, each suffix $p_{i_q}[k'-i_q, \infty]$ is chosen at least once.
This is possible since we require $D_k$ and $D_{k'}$ to be cuttable.
Like this, we ensured that all representative states are not deleted by the cut.
Figure~\ref{cutting1} and Figure~\ref{cutting2} show the choice of representative positions in a comb and the resulting preserving cut.

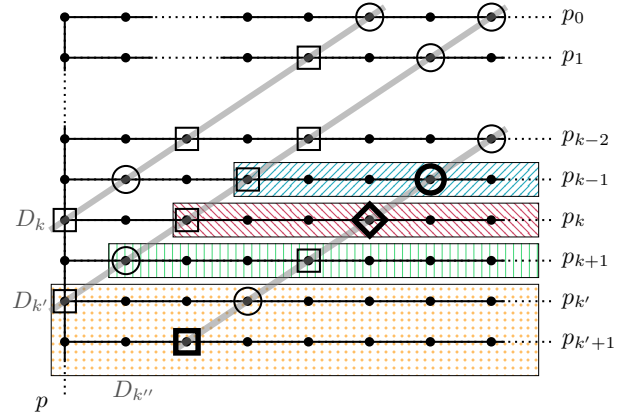
\begin{figure}[t]
	\centering
	{\scalebox{0.9}{
			\begin{tikzpicture}
			\pgfdeclarelayer{bg}    
			\pgfsetlayers{bg,main}
			
			\draw [line width = 0.3mm] (0,0.7) -- (0,4.2);
			\draw [line width = 0.3mm] (0,5) -- (0,5.8);
			
			\draw [line width = 0.3mm] (0,5.8) node (root0) {} -- (1.3,5.8);
			\draw [line width = 0.3mm] (2.3,5.8) -- (6.5,5.8);
			
			\draw [line width = 0.3mm] (0,5.2) node (root1) {} -- (1.3,5.2);
			\draw [line width = 0.3mm] (2.3,5.2) -- (6.5,5.2);
			
			\draw [line width = 0.3mm] (0,4) node (root2) {} -- (6.5,4);
			\draw [line width = 0.3mm] (0,3.4) node (root3) {} -- (6.5,3.4);
			\draw [line width = 0.3mm] (0,2.8) node (root4) {} -- (6.5,2.8);
			\draw [line width = 0.3mm] (0,2.2) node (root5) {} -- (6.5,2.2);
			\draw [line width = 0.3mm] (0,1.6) node (root6) {} -- (6.5,1.6);
			\draw [line width = 0.3mm] (0,1) node (root7) {} -- (6.5,1);
			
			\draw [draw = none, fill = black] (root0) circle [radius=0.7mm] node (dot00) {};
			\draw [draw = none, fill = black] (root0) [right = .9cm] circle [radius=0.7mm] node (dot01) {};
			\draw [draw = none, fill = black] (root0) [right = 2.7cm] circle [radius=0.7mm] node (dot03) {};
			\draw [draw = none, fill = black] (root0) [right = 3.6cm] circle [radius=0.7mm] node (dot04) {};
			\draw [draw = none, fill = black] (root0) [right = 4.5cm] circle [radius=0.7mm] node (dot05) {};
			\draw [draw = none, fill = black] (root0) [right = 5.4cm] circle [radius=0.7mm] node (dot06) {};
			\draw [draw = none, fill = black] (root0) [right = 6.3cm] circle [radius=0.7mm] node (dot07) {};
			
			\draw [draw = none, fill = black] (root1) circle [radius=0.7mm] node (dot10) {};
			\draw [draw = none, fill = black] (root1) [right = .9cm] circle [radius=0.7mm] node (dot11) {};
			\draw [draw = none, fill = black] (root1) [right = 2.7cm] circle [radius=0.7mm] node (dot13) {};
			\draw [draw = none, fill = black] (root1) [right = 3.6cm] circle [radius=0.7mm] node (dot14) {};
			\draw [draw = none, fill = black] (root1) [right = 4.5cm] circle [radius=0.7mm] node (dot15) {};
			\draw [draw = none, fill = black] (root1) [right = 5.4cm] circle [radius=0.7mm] node (dot16) {};
			\draw [draw = none, fill = black] (root1) [right = 6.3cm] circle [radius=0.7mm] node (dot17) {};
			
			\draw [draw = none, fill = black] (root2) circle [radius=0.7mm] node (dot20) {};
			\draw [draw = none, fill = black] (root2) [right = .9cm] circle [radius=0.7mm] node (dot21) {};
			\draw [draw = none, fill = black] (root2) [right = 1.8cm] circle [radius=0.7mm] node (dot22) {};
			\draw [draw = none, fill = black] (root2) [right = 2.7cm] circle [radius=0.7mm] node (dot23) {};
			\draw [draw = none, fill = black] (root2) [right = 3.6cm] circle [radius=0.7mm] node (dot24) {};
			\draw [draw = none, fill = black] (root2) [right = 4.5cm] circle [radius=0.7mm] node (dot25) {};
			\draw [draw = none, fill = black] (root2) [right = 5.4cm] circle [radius=0.7mm] node (dot26) {};
			\draw [draw = none, fill = black] (root2) [right = 6.3cm] circle [radius=0.7mm] node (dot27) {};
			
			\draw [draw = none, fill = black] (root3) circle [radius=0.7mm] node (dot30) {};
			\draw [draw = none, fill = black] (root3) [right = .9cm] circle [radius=0.7mm] node (dot31) {};
			\draw [draw = none, fill = black] (root3) [right = 1.8cm] circle [radius=0.7mm] node (dot32) {};
			\draw [draw = none, fill = black] (root3) [right = 2.7cm] circle [radius=0.7mm] node (dot33) {};
			\draw [draw = none, fill = black] (root3) [right = 3.6cm] circle [radius=0.7mm] node (dot34) {};
			\draw [draw = none, fill = black] (root3) [right = 4.5cm] circle [radius=0.7mm] node (dot35) {};
			\draw [draw = none, fill = black] (root3) [right = 5.4cm] circle [radius=0.7mm] node (dot36) {};
			\draw [draw = none, fill = black] (root3) [right = 6.3cm] circle [radius=0.7mm] node (dot37) {};
			
			\draw [draw = none, fill = black] (root4) circle [radius=0.7mm] node (dot40) {};
			\draw [draw = none, fill = black] (root4) [right = .9cm] circle [radius=0.7mm] node (dot41) {};
			\draw [draw = none, fill = black] (root4) [right = 1.8cm] circle [radius=0.7mm] node (dot42) {};
			\draw [draw = none, fill = black] (root4) [right = 2.7cm] circle [radius=0.7mm] node (dot43) {};
			\draw [draw = none, fill = black] (root4) [right = 3.6cm] circle [radius=0.7mm] node (dot44) {};
			\draw [draw = none, fill = black] (root4) [right = 4.5cm] circle [radius=0.7mm] node (dot45) {};
			\draw [draw = none, fill = black] (root4) [right = 5.4cm] circle [radius=0.7mm] node (dot46) {};
			\draw [draw = none, fill = black] (root4) [right = 6.3cm] circle [radius=0.7mm] node (dot47) {};
			
			\draw [draw = none, fill = black] (root5) circle [radius=0.7mm] node (dot50) {};
			\draw [draw = none, fill = black] (root5) [right = .9cm] circle [radius=0.7mm] node (dot51) {};
			\draw [draw = none, fill = black] (root5) [right = 1.8cm] circle [radius=0.7mm] node (dot52) {};
			\draw [draw = none, fill = black] (root5) [right = 2.7cm] circle [radius=0.7mm] node (dot53) {};
			\draw [draw = none, fill = black] (root5) [right = 3.6cm] circle [radius=0.7mm] node (dot54) {};
			\draw [draw = none, fill = black] (root5) [right = 4.5cm] circle [radius=0.7mm] node (dot55) {};
			\draw [draw = none, fill = black] (root5) [right = 5.4cm] circle [radius=0.7mm] node (dot56) {};
			\draw [draw = none, fill = black] (root5) [right = 6.3cm] circle [radius=0.7mm] node (dot57) {};
			
			\draw [draw = none, fill = black] (root6) circle [radius=0.7mm] node (dot60) {};
			\draw [draw = none, fill = black] (root6) [right = .9cm] circle [radius=0.7mm] node (dot61) {};
			\draw [draw = none, fill = black] (root6) [right = 1.8cm] circle [radius=0.7mm] node (dot62) {};
			\draw [draw = none, fill = black] (root6) [right = 2.7cm] circle [radius=0.7mm] node (dot63) {};
			\draw [draw = none, fill = black] (root6) [right = 3.6cm] circle [radius=0.7mm] node (dot64) {};
			\draw [draw = none, fill = black] (root6) [right = 4.5cm] circle [radius=0.7mm] node (dot65) {};
			\draw [draw = none, fill = black] (root6) [right = 5.4cm] circle [radius=0.7mm] node (dot66) {};
			\draw [draw = none, fill = black] (root6) [right = 6.3cm] circle [radius=0.7mm] node (dot67) {};
			
			\draw [draw = none, fill = black] (root7) circle [radius=0.7mm] node (dot70) {};
			\draw [draw = none, fill = black] (root7) [right = .9cm] circle [radius=0.7mm] node (dot71) {};
			\draw [draw = none, fill = black] (root7) [right = 1.8cm] circle [radius=0.7mm] node (dot72) {};
			\draw [draw = none, fill = black] (root7) [right = 2.7cm] circle [radius=0.7mm] node (dot73) {};
			\draw [draw = none, fill = black] (root7) [right = 3.6cm] circle [radius=0.7mm] node (dot74) {};
			\draw [draw = none, fill = black] (root7) [right = 4.5cm] circle [radius=0.7mm] node (dot75) {};
			\draw [draw = none, fill = black] (root7) [right = 5.4cm] circle [radius=0.7mm] node (dot76) {};
			\draw [draw = none, fill = black] (root7) [right = 6.3cm] circle [radius=0.7mm] node (dot77) {};
						
			\draw [draw = none] (dot07) node [right = 0.8cm] (tau0) {$p_0$};
			\draw [draw = none] (dot17) node [right = 0.8cm] (tau1) {$p_1$};
			\draw [draw = none] (dot27) node [right = 0.8cm] (tau2) {$p_{k-2}$};
			\draw [draw = none] (dot37) node [right = 0.8cm] (tau3) {$p_{k-1}$};
			\draw [draw = none] (dot47) node [right = 0.8cm] (tau4) {$p_{k}$};
			\draw [draw = none] (dot57) node [right = 0.8cm] (tau5) {$p_{k+1}$};
			\draw [draw = none] (dot67) node [right = 0.8cm] (tau6) {$p_{k'}$};
			\draw [draw = none] (dot77) node [right = 0.8cm] (tau7) {$p_{k'+1}$};
			\draw [draw = none] (dot70) node [below = 0.8cm, label=west:{$p$}] (tau) {};
			
			\draw [line width = 0.3mm, dotted] (root0) -- (tau0);
			\draw [line width = 0.3mm, dotted] (root1) -- (tau1);
			\draw [line width = 0.3mm, dotted] (root2) -- (tau2);
			\draw [line width = 0.3mm, dotted] (root3) -- (tau3);
			\draw [line width = 0.3mm, dotted] (root4) -- (tau4);
			\draw [line width = 0.3mm, dotted] (root5) -- (tau5);
			\draw [line width = 0.3mm, dotted] (root6) -- (tau6);
			\draw [line width = 0.3mm, dotted] (root7) -- (tau7);
			\draw [line width = 0.3mm, dotted] (root0) -- (tau);
			
			
			\draw [line width = 0.9mm, color=gray, opacity=0.5] ($(dot40)+(-2.2mm, -1.5mm)$) -- ($(dot05)+(1mm, 1.5mm)$) ;
			\draw [draw=none] (dot40) node [xshift = -0.5cm] (F3) {\color{darkgray!90}$D_k$};
			
			\draw [line width = 0.9mm, color=gray, opacity=0.5] ($(dot60)+(-2.2mm, -1.5mm)$) -- ($(dot07)+(1mm, 1.5mm)$) ;
			\draw [draw=none] (dot60) node [xshift = -0.5cm] (F3) {\color{darkgray!90}$D_{k'}$};
			
			\draw [line width = 1mm, color=gray, opacity=0.5] ($(dot72)+(-2.8mm, -1mm)$) -- ($(dot27)+(1mm, 1.5mm)$);
			\draw [draw=none] (dot71) node [below = 4.5mm] (F3) {\color{darkgray!90}$D_{k''}$};
			
			\draw [line width = 0.3mm] (dot05) circle [radius=2mm, xshift = -1.2mm] node (acc10) {};
			\draw [line width = 0.3mm] (dot31) circle [radius=2mm, xshift = -1.2mm] node (acc10) {};
			\node[draw,fit=(dot40),line width = 0.3mm, inner sep = 0.4mm] (acc01) {};
			\node[draw,fit=(dot22),line width = 0.3mm, inner sep = 0.4mm, xshift = -1.2mm] (acc01) {};
			\node[draw,fit=(dot14),line width = 0.3mm, inner sep = 0.4mm, xshift = -1.2mm] (acc01) {};
			
			\draw [line width = 0.3mm] (dot51) circle [radius=2mm, xshift = -1.2mm] node (acc10) {};
			\draw [line width = 0.3mm] (dot16) circle [radius=2mm, xshift = -1.2mm] node (acc10) {};
			\draw [line width = 0.3mm] (dot07) circle [radius=2mm, xshift = -1.2mm] node (acc10) {};
			\node[draw,fit=(dot60),line width = 0.3mm, inner sep = 0.4mm] (acc01) {};
			\node[draw,fit=(dot42),line width = 0.3mm, inner sep = 0.4mm, xshift = -1.2mm] (acc01) {};
			\node[draw,fit=(dot33),line width = 0.3mm, inner sep = 0.4mm, xshift = -1.2mm] (acc01) {};
			\node[draw,fit=(dot24),line width = 0.3mm, inner sep = 0.4mm, xshift = -1.2mm] (acc01) {};
			
			\draw [line width = 0.3mm] (dot63) circle [radius=2mm, xshift = -1.2mm] node (acc10) {};
			\draw [line width = 0.8mm] (dot36) circle [radius=2mm, xshift = -1.2mm] node (acc10) {};
			\draw [line width = 0.3mm] (dot27) circle [radius=2mm, xshift = -1.2mm] node (acc10) {};
			
			\node[draw,fit=(dot72),line width = 0.8mm, inner sep = 0.4mm, xshift = -1.2mm] (acc01) {};
			\node[draw,fit=(dot54),line width = 0.3mm, inner sep = 0.4mm, xshift = -1.2mm] (acc01) {};
			
			\node[draw,fit=(dot45),line width = 0.8mm, inner sep = 0.4mm, xshift = -1.2mm, rotate=45] (acc01) {};
			
			\begin{pgfonlayer}{bg}
				\draw[pattern=north east lines, step = 2cm, pattern color = MidnightBlue!60] (2.5,3.65) rectangle (7,3.15);
				\draw[pattern=north west lines, pattern color = Maroon!60] (1.6,3.05) rectangle (7,2.55);
				\draw[pattern=vertical lines, pattern color = PineGreen!60] (0.65,2.45) rectangle (7,1.95);
				\draw[pattern=dots, pattern color = YellowOrange!60] (-.2,0.5) rectangle (7,1.85);
		 	\end{pgfonlayer}
			\end{tikzpicture}
	}}
	\caption{A comb structure with highlighted diagonals $D_k$, $D_{k'}$, and $D_{k''}$, together
		with their associated automaton states (depicted as square, circle and diamond).
		States at representative positions are printed in bold, suffices used for the cut are highlighted.}
	\label{cutting1}
\end{figure}
\begin{figure}[t]
	\centering
	{\scalebox{0.9}{
			\begin{tikzpicture}
			\pgfdeclarelayer{bg}    
			\pgfsetlayers{bg,main}
			
			\draw [line width = 0.3mm] (0,0.7) -- (0,4.2);
			\draw [line width = 0.3mm] (0,5) -- (0,5.8);
			
			\draw [line width = 0.3mm] (0,5.8) node (root0) {} -- (1.3,5.8);
			\draw [line width = 0.3mm] (2.3,5.8) -- (6.5,5.8);
			
			\draw [line width = 0.3mm] (0,5.2) node (root1) {} -- (1.3,5.2);
			\draw [line width = 0.3mm] (2.3,5.2) -- (6.5,5.2);
			
			\draw [line width = 0.3mm] (0,4) node (root2) {} -- (6.5,4);
			\draw [line width = 0.3mm] (0,3.4) node (root3) {} -- (6.5,3.4);
			\draw [line width = 0.3mm] (0,2.8) node (root4) {} -- (6.5,2.8);
			\draw [line width = 0.3mm] (0,2.2) node (root5) {} -- (6.5,2.2);
			\draw [line width = 0.3mm] (0,1.6) node (root6) {} -- (6.5,1.6);
			\draw [line width = 0.3mm] (0,1) node (root7) {} -- (6.5,1);
			
			\draw [draw = none, fill = black] (root0) circle [radius=0.7mm] node (dot00) {};
			\draw [draw = none, fill = black] (root0) [right = .9cm] circle [radius=0.7mm] node (dot01) {};
			\draw [draw = none, fill = black] (root0) [right = 2.7cm] circle [radius=0.7mm] node (dot03) {};
			\draw [draw = none, fill = black] (root0) [right = 3.6cm] circle [radius=0.7mm] node (dot04) {};
			\draw [draw = none, fill = black] (root0) [right = 4.5cm] circle [radius=0.7mm] node (dot05) {};
			\draw [draw = none, fill = black] (root0) [right = 5.4cm] circle [radius=0.7mm] node (dot06) {};
			\draw [draw = none, fill = black] (root0) [right = 6.3cm] circle [radius=0.7mm] node (dot07) {};
			
			\draw [draw = none, fill = black] (root1) circle [radius=0.7mm] node (dot10) {};
			\draw [draw = none, fill = black] (root1) [right = .9cm] circle [radius=0.7mm] node (dot11) {};
			\draw [draw = none, fill = black] (root1) [right = 2.7cm] circle [radius=0.7mm] node (dot13) {};
			\draw [draw = none, fill = black] (root1) [right = 3.6cm] circle [radius=0.7mm] node (dot14) {};
			\draw [draw = none, fill = black] (root1) [right = 4.5cm] circle [radius=0.7mm] node (dot15) {};
			\draw [draw = none, fill = black] (root1) [right = 5.4cm] circle [radius=0.7mm] node (dot16) {};
			\draw [draw = none, fill = black] (root1) [right = 6.3cm] circle [radius=0.7mm] node (dot17) {};
			
			\draw [draw = none, fill = black] (root2) circle [radius=0.7mm] node (dot20) {};
			\draw [draw = none, fill = black] (root2) [right = .9cm] circle [radius=0.7mm] node (dot21) {};
			\draw [draw = none, fill = black] (root2) [right = 1.8cm] circle [radius=0.7mm] node (dot22) {};
			\draw [draw = none, fill = black] (root2) [right = 2.7cm] circle [radius=0.7mm] node (dot23) {};
			\draw [draw = none, fill = black] (root2) [right = 3.6cm] circle [radius=0.7mm] node (dot24) {};
			\draw [draw = none, fill = black] (root2) [right = 4.5cm] circle [radius=0.7mm] node (dot25) {};
			\draw [draw = none, fill = black] (root2) [right = 5.4cm] circle [radius=0.7mm] node (dot26) {};
			\draw [draw = none, fill = black] (root2) [right = 6.3cm] circle [radius=0.7mm] node (dot27) {};
			
			\draw [draw = none, fill = black] (root3) circle [radius=0.7mm] node (dot30) {};
			\draw [draw = none, fill = black] (root3) [right = .9cm] circle [radius=0.7mm] node (dot31) {};
			\draw [draw = none, fill = black] (root3) [right = 1.8cm] circle [radius=0.7mm] node (dot32) {};
			\draw [draw = none, fill = black] (root3) [right = 2.7cm] circle [radius=0.7mm] node (dot33) {};
			\draw [draw = none, fill = black] (root3) [right = 3.6cm] circle [radius=0.7mm] node (dot34) {};
			\draw [draw = none, fill = black] (root3) [right = 4.5cm] circle [radius=0.7mm] node (dot35) {};
			\draw [draw = none, fill = black] (root3) [right = 5.4cm] circle [radius=0.7mm] node (dot36) {};
			\draw [draw = none, fill = black] (root3) [right = 6.3cm] circle [radius=0.7mm] node (dot37) {};
			
			\draw [draw = none, fill = black] (root4) circle [radius=0.7mm] node (dot40) {};
			\draw [draw = none, fill = black] (root4) [right = .9cm] circle [radius=0.7mm] node (dot41) {};
			\draw [draw = none, fill = black] (root4) [right = 1.8cm] circle [radius=0.7mm] node (dot42) {};
			\draw [draw = none, fill = black] (root4) [right = 2.7cm] circle [radius=0.7mm] node (dot43) {};
			\draw [draw = none, fill = black] (root4) [right = 3.6cm] circle [radius=0.7mm] node (dot44) {};
			\draw [draw = none, fill = black] (root4) [right = 4.5cm] circle [radius=0.7mm] node (dot45) {};
			\draw [draw = none, fill = black] (root4) [right = 5.4cm] circle [radius=0.7mm] node (dot46) {};
			\draw [draw = none, fill = black] (root4) [right = 6.3cm] circle [radius=0.7mm] node (dot47) {};
			
			\draw [draw = none, fill = black] (root5) circle [radius=0.7mm] node (dot50) {};
			\draw [draw = none, fill = black] (root5) [right = .9cm] circle [radius=0.7mm] node (dot51) {};
			\draw [draw = none, fill = black] (root5) [right = 1.8cm] circle [radius=0.7mm] node (dot52) {};
			\draw [draw = none, fill = black] (root5) [right = 2.7cm] circle [radius=0.7mm] node (dot53) {};
			\draw [draw = none, fill = black] (root5) [right = 3.6cm] circle [radius=0.7mm] node (dot54) {};
			\draw [draw = none, fill = black] (root5) [right = 4.5cm] circle [radius=0.7mm] node (dot55) {};
			\draw [draw = none, fill = black] (root5) [right = 5.4cm] circle [radius=0.7mm] node (dot56) {};
			\draw [draw = none, fill = black] (root5) [right = 6.3cm] circle [radius=0.7mm] node (dot57) {};
			
			\draw [draw = none, fill = black] (root6) circle [radius=0.7mm] node (dot60) {};
			\draw [draw = none, fill = black] (root6) [right = .9cm] circle [radius=0.7mm] node (dot61) {};
			\draw [draw = none, fill = black] (root6) [right = 1.8cm] circle [radius=0.7mm] node (dot62) {};
			\draw [draw = none, fill = black] (root6) [right = 2.7cm] circle [radius=0.7mm] node (dot63) {};
			\draw [draw = none, fill = black] (root6) [right = 3.6cm] circle [radius=0.7mm] node (dot64) {};
			\draw [draw = none, fill = black] (root6) [right = 4.5cm] circle [radius=0.7mm] node (dot65) {};
			\draw [draw = none, fill = black] (root6) [right = 5.4cm] circle [radius=0.7mm] node (dot66) {};
			\draw [draw = none, fill = black] (root6) [right = 6.3cm] circle [radius=0.7mm] node (dot67) {};
			
			\draw [draw = none, fill = black] (root7) circle [radius=0.7mm] node (dot70) {};
			\draw [draw = none, fill = black] (root7) [right = .9cm] circle [radius=0.7mm] node (dot71) {};
			\draw [draw = none, fill = black] (root7) [right = 1.8cm] circle [radius=0.7mm] node (dot72) {};
			\draw [draw = none, fill = black] (root7) [right = 2.7cm] circle [radius=0.7mm] node (dot73) {};
			\draw [draw = none, fill = black] (root7) [right = 3.6cm] circle [radius=0.7mm] node (dot74) {};
			\draw [draw = none, fill = black] (root7) [right = 4.5cm] circle [radius=0.7mm] node (dot75) {};
			\draw [draw = none, fill = black] (root7) [right = 5.4cm] circle [radius=0.7mm] node (dot76) {};
			\draw [draw = none, fill = black] (root7) [right = 6.3cm] circle [radius=0.7mm] node (dot77) {};
						
			\draw [draw = none] (dot07) node [right = 0.8cm] (tau0) {$p_0$};
			\draw [draw = none] (dot17) node [right = 0.8cm] (tau1) {$p_1$};
			\draw [draw = none] (dot27) node [right = 0.8cm] (tau2) {$p_{k-2}$};
			\draw [draw = none] (dot37) node [right = 0.8cm] (tau3) {$p_{k-1}$};
			\draw [draw = none] (dot47) node [right = 0.8cm] (tau4) {$p_{k}$};
			\draw [draw = none] (dot57) node [right = 0.8cm] (tau5) {$p_{k+1}$};
			\draw [draw = none] (dot67) node [right = 0.8cm] (tau6) {$p_{k'}$};
			\draw [draw = none] (dot77) node [right = 0.8cm] (tau7) {$p_{k'+1}$};
			\draw [draw = none] (dot70) node [below = 0.8cm, label=west:{$p$}] (tau) {};
			
			\draw [line width = 0.3mm, dotted] (root0) -- (tau0);
			\draw [line width = 0.3mm, dotted] (root1) -- (tau1);
			\draw [line width = 0.3mm, dotted] (root2) -- (tau2);
			\draw [line width = 0.3mm, dotted] (root3) -- (tau3);
			\draw [line width = 0.3mm, dotted] (root4) -- (tau4);
			\draw [line width = 0.3mm, dotted] (root5) -- (tau5);
			\draw [line width = 0.3mm, dotted] (root6) -- (tau6);
			\draw [line width = 0.3mm, dotted] (root7) -- (tau7);
			\draw [line width = 0.3mm, dotted] (root0) -- (tau);
			
			\draw [line width = 0.9mm, color=gray, opacity=0.5] ($(dot40)+(-2.2mm, -1.5mm)$) -- ($(dot05)+(1mm, 1.5mm)$) ;
			\draw [draw=none] (dot40) node [xshift = -0.5cm] (F3) {\color{darkgray!90}$D_{k'}$};
			
			\draw [line width = 0.9mm, color=gray, opacity=0.5] ($(dot70)+(-2.2mm, -1.5mm)$) -- ($(dot17)+(1mm, 1.5mm)$) ;
			\draw [draw=none] (dot70) node [xshift = -0.5cm] (F3) {\color{darkgray!90}$D_{k''}$};
									
			\draw [line width = 0.3mm] (dot31) circle [radius=2mm, xshift = -1.2mm] node (acc10) {};
			\draw [line width = 0.3mm] (dot05) circle [radius=2mm, xshift = -1.2mm] node (acc10) {};
			\node[draw,fit=(dot40),line width = 0.3mm, inner sep = 0.4mm] (acc01) {};
			\node[draw,fit=(dot22),line width = 0.3mm, inner sep = 0.4mm, xshift = -1.2mm] (acc01) {};
			\node[draw,fit=(dot14),line width = 0.3mm, inner sep = 0.4mm, xshift = -1.2mm] (acc01) {};
			
			\draw [line width = 0.3mm] (dot43) circle [radius=2mm, xshift = -1.2mm] node (acc10) {};
			\draw [line width = 0.8mm] (dot25) circle [radius=2mm, xshift = -1.2mm] node (acc10) {};
			
			\node[draw,fit=(dot61),line width = 0.3mm, inner sep = 0.4mm, xshift = -1.2mm] (acc01) {};
			\node[draw,fit=(dot70),line width = 0.3mm, inner sep = 0.4mm] (acc01) {};
			\node[draw,fit=(dot34),line width = 0.3mm, inner sep = 0.4mm, xshift = -1.2mm] (acc01) {};
			\node[draw,fit=(dot52),line width = 0.8mm, inner sep = 0.4mm, xshift = -1.2mm] (acc01) {};
			
			\node[draw,fit=(dot17),line width = 0.8mm, inner sep = 0.4mm, xshift = -1.2mm, rotate=45] (acc01) {};
			
			\begin{pgfonlayer}{bg}	
				\draw[pattern=dots, pattern color = YellowOrange!60] (-.2,0.5) rectangle (7,3.05);
				
				\draw[pattern=vertical lines, pattern color = PineGreen!60] (4.25,5.55) rectangle (7,6.05);
				\draw[pattern=north west lines, pattern color = Maroon!60] (3.35,4.95) rectangle (7,5.45);
				\draw[pattern=north east lines, step = 2cm, pattern color = MidnightBlue!60] (1.55,3.75) rectangle (7,4.25);			
				\draw[pattern=vertical lines, pattern color = PineGreen!60] (0.65,3.15) rectangle (7,3.65);
		 	\end{pgfonlayer}
			\end{tikzpicture}
	}}
	\caption{The result of the cutting operation prepared in Figure~\ref{cutting1}. 
		The highlights show which suffix was shifted to which node in the comb.}
	\label{cutting2}
\end{figure}
\emph{Construct $\mathcal{T_{\mathit{fin}}}$.}
We describe how to perform a series of preserving cuts to ensure that sufficiently many accepting states can be found in a bounded-size prefix $\mathcal{T_{\mathit{fin}}}$ of the comb.
First, note that there are at most $2^{|Q|}$ different frontiers.
Furthermore, there are at most $c = (|Q|+1)^{|Q|}$ many different equivalence classes of the cuttable property, i.e. for $c+1$ many diagonals, at least two are cuttable.
We say that a diagonal $D_{k}$ is \emph{close} to $D_{k'}$ if $|k' - k| \leq c$.
By the pigeonhole principle, for every two diagonals $D_k, D_{k'}$ and state set $F_{k''}$ with $k \leq k' \leq k''$, we can perform a number of cuts on diagonals situated between $D_k$ and $D_{k'}$, each preserving $F_{k''}$, such that at the end, $D_{k'}$ is close to $D_k$ and the set $F_{k''}$ did not change.

There are only finitely many different frontiers in the infinite comb, so at least one frontier occurs on infinitely many diagonals. We call that frontier $F_\omega$.
Pick the smallest number $\mathit{inf} \in \nat$ such that $\qinf = F_\omega$ and cut diagonal $\finf$ as close as possible to $D_0$ while preserving $F_{\mathit{inf}}$.
Note that the first $|Q|^{|Q|}$ diagonals are, in general, not cuttable; therefore, in the worst case, $D_\mathit{inf}$ will be cut close to $D_{|Q|^{|Q|}}$.
As a result of these cuts, frontier $F_\omega$ might not occur infinitely often anymore.
More concretely, diagonals which were previously associated with frontier $F_\omega$ will now have frontiers which are a subset of $F_\omega$.
Since there are only finitely many different subsets of any finite set, we know that there exists at least one frontier $F_{\omega'} \subseteq F_\omega$ that occurs on infinitely many diagonals.

For every automaton state $q \in F_{\omega'}$, there exists by construction an $i_q \leq \mathit{inf}$ such that $\finf[i_q]$ is associated with $q$.
We call the set of all $i_q$ the set of designated positions~$P$. 
Now, find the smallest $\mathit{inf'} > \mathit{inf}$, such that $F_{\mathit{inf'}} = F_{\omega'}$, and for all $i \in P$, $r_i$ has an accepting state between $\mathit{inf}$ and $\mathit{inf'}$.
Such an $\mathit{inf'}$ exist because of the B\"uchi acceptance condition.
We now perform a series of cuts to cut $\finff$ as close to $\finf$ as possible, each of which preserves $\qinff$.
Find the $i \in P$ whose accepting state is closest to $\finf$ and cut the corresponding diagonal close to $\finf$.
Continue with the $i' \in P$ whose accepting state comes next and cut it close to the last diagonal that was cut close.
Proceed, until the diagonal of the last accepting state of designated position was cut close.
Finally, cut $\finff$ close to that last diagonal.

We choose $\mathcal{T_{\mathit{fin}}}$ to be the finite prefix of the resulting comb up to (and including) $\finff$.
The depth of $\mathcal{T_{\mathit{fin}}}$ is bounded by 
$b = |Q|^{|Q|} 
+ (2 + |Q|) \cdot (|Q| + 1)^{|Q|} 
$.
This is because $\mathcal{T_{\mathit{fin}}}$ consists of a prefix of diagonals up to the first cuttable diagonal (at the most $|Q|^{|Q|}$ many), followed by $D_\mathit{inf}$. 
Then, $|P| \leq |Q| $ many diagonals have been cut close and the distance between them is at the most $(|Q| + 1)^{|Q|}$. 
Lastly, $\finff$ was cut close, again with a maximal distance of $(|Q| + 1)^{|Q|}$.

\begin{figure}[t]
	\centering
	{\scalebox{0.9}{
			\begin{tikzpicture}
			\pgfdeclarelayer{bg}    
			\pgfsetlayers{bg,main}
			
			\draw [line width = 0.3mm] (0,1.6) -- (0,4.2);
			\draw [line width = 0.3mm] (0,5) -- (0,5.8);
			
			\draw [line width = 0.3mm] (0,5.8) node (root0) {} -- (1.3,5.8);
			\draw [line width = 0.3mm] (2.3,5.8) -- (6.3,5.8);
			
			\draw [line width = 0.3mm] (0,5.2) node (root1) {} -- (1.3,5.2);
			\draw [line width = 0.3mm] (2.3,5.2) -- (5.4,5.2);
			
			\draw [line width = 0.3mm] (0,4) node (root2) {} -- (3.6,4);
			\draw [line width = 0.3mm] (0,3.4) node (root3) {} -- (2.7,3.4);
			\draw [line width = 0.3mm] (0,2.8) node (root4) {} -- (1.8,2.8);
			\draw [line width = 0.3mm] (0,2.2) node (root5) {} -- (0.9,2.2);
			
			\draw [draw = none, fill = black] (root0) circle [radius=0.7mm] node (dot00) {};
			\draw [draw = none, fill = black] (root0) [right = .9cm] circle [radius=0.7mm] node (dot01) {};
			\draw [draw = none, fill = black] (root0) [right = 2.7cm] circle [radius=0.7mm] node (dot03) {};
			\draw [draw = none, fill = black] (root0) [right = 3.6cm] circle [radius=0.7mm] node (dot04) {};
			\draw [draw = none, fill = black] (root0) [right = 4.5cm] circle [radius=0.7mm] node (dot05) {};
			\draw [draw = none, fill = black] (root0) [right = 5.4cm] circle [radius=0.7mm] node (dot06) {};
			\draw [draw = none, fill = black] (root0) [right = 6.3cm] circle [radius=0.7mm] node (dot07) {};
			
			\draw [draw = none, fill = black] (root1) circle [radius=0.7mm] node (dot10) {};
			\draw [draw = none, fill = black] (root1) [right = .9cm] circle [radius=0.7mm] node (dot11) {};
			\draw [draw = none, fill = black] (root1) [right = 2.7cm] circle [radius=0.7mm] node (dot13) {};
			\draw [draw = none, fill = black] (root1) [right = 3.6cm] circle [radius=0.7mm] node (dot14) {};
			\draw [draw = none, fill = black] (root1) [right = 4.5cm] circle [radius=0.7mm] node (dot15) {};
			\draw [draw = none, fill = black] (root1) [right = 5.4cm] circle [radius=0.7mm] node (dot16) {};
			
			\draw [draw = none, fill = black] (root2) circle [radius=0.7mm] node (dot20) {};
			\draw [draw = none, fill = black] (root2) [right = .9cm] circle [radius=0.7mm] node (dot21) {};
			\draw [draw = none, fill = black] (root2) [right = 1.8cm] circle [radius=0.7mm] node (dot22) {};
			\draw [draw = none, fill = black] (root2) [right = 2.7cm] circle [radius=0.7mm] node (dot23) {};
			\draw [draw = none, fill = black] (root2) [right = 3.6cm] circle [radius=0.7mm] node (dot24) {};
			
			\draw [draw = none, fill = black] (root3) circle [radius=0.7mm] node (dot30) {};
			\draw [draw = none, fill = black] (root3) [right = .9cm] circle [radius=0.7mm] node (dot31) {};
			\draw [draw = none, fill = black] (root3) [right = 1.8cm] circle [radius=0.7mm] node (dot32) {};
			\draw [draw = none, fill = black] (root3) [right = 2.7cm] circle [radius=0.7mm] node (dot33) {};
			
			\draw [draw = none, fill = black] (root4) circle [radius=0.7mm] node (dot40) {};
			\draw [draw = none, fill = black] (root4) [right = .9cm] circle [radius=0.7mm] node (dot41) {};
			\draw [draw = none, fill = black] (root4) [right = 1.8cm] circle [radius=0.7mm] node (dot42) {};
			
			\draw [draw = none, fill = black] (root5) circle [radius=0.7mm] node (dot50) {};
			\draw [draw = none, fill = black] (root5) [right = .9cm] circle [radius=0.7mm] node (dot51) {};
			
			\draw [draw = none, fill = black] (root6) circle [radius=0.7mm] node (dot60) {};
		
			\draw [line width = 0.3mm, dotted] (root0) -- (dot03);
			\draw [line width = 0.3mm, dotted] (root1) -- (dot13);
			\draw [line width = 0.3mm, dotted] (root0) -- (dot30);
			
			\draw [line width = 0.9mm, color=gray, opacity=0.5] ($(dot40)+(-2.2mm, -1.5mm)$) -- ($(dot05)+(1mm, 1.5mm)$) ;
			\draw [draw=none] (dot40) node [xshift = -0.7cm] (F3) {\color{darkgray!90}$D_{\mathit{inf}}$};
			
			\draw [line width = 0.9mm, color=gray, opacity=0.5] ($(dot60)+(-2.2mm, -1.5mm)$) -- ($(dot07)+(1mm, 1.5mm)$) ;
			\draw [draw=none] (dot60) node [xshift = -0.7cm] (F3) {\color{darkgray!90}$D_{\mathit{inf'}}$};
			
			\draw [line width = 0.3mm] (dot05) circle [radius=2mm, xshift = -1.2mm] node (acc10) {};
			\draw [line width = 0.3mm] (dot22) circle [radius=2mm, xshift = -1.2mm] node (acc10) {};
			\node[draw,fit=(dot40),line width = 0.8mm, inner sep = 0.4mm] (acc01) {};
			\node[draw,fit=(dot31),line width = 0.3mm, inner sep = 0.4mm, xshift = -1.2mm] (acc01) {};
			\node[draw,fit=(dot14),line width = 0.8mm, inner sep = 0.4mm, xshift = -1.2mm, rotate=45] (acc01) {};
			
			\node[draw,fit=(dot07),line width = 0.3mm, inner sep = 0.4mm, xshift = -1.2mm] (acc01) {};
			\node[draw,fit=(dot16),line width = 0.3mm, inner sep = 0.4mm, xshift = -1.2mm] (acc01) {};
			\node[draw,fit=(dot60),line width = 0.3mm, inner sep = 0.4mm] (acc01) {};
			\node[draw,fit=(dot42),line width = 0.3mm, inner sep = 0.4mm, xshift = -1.2mm] (acc01) {};
			\node[draw,fit=(dot51),line width = 0.3mm, inner sep = 0.4mm, xshift = -1.2mm] (acc01) {};
			\node[draw,fit=(dot24),line width = 0.3mm, inner sep = 0.4mm, xshift = -1.2mm, rotate=45] (acc01) {};
			\node[draw,fit=(dot33),line width = 0.3mm, inner sep = 0.4mm, xshift = -1.2mm, rotate=45] (acc01) {};
			
			\begin{pgfonlayer}{bg}
				\draw[pattern=north east lines, step = 2cm, pattern color = MidnightBlue!60] (4.3,5.45) rectangle (5.8,4.95);
				\draw[pattern=vertical lines, pattern color = PineGreen!60] (0.65,3.05) rectangle (2.15,2.55);
				\draw[pattern=dots, pattern color = YellowOrange!60] (-.25,2.45) -- (1.8,2.45) -- (-.25,1) -- cycle;
		 	\end{pgfonlayer}
			\end{tikzpicture}
	}}
	\caption{The finite prefix $\mathcal{T_{\mathit{fin}}}$ with diagonals $\finf$ and $\finff$. States in designated positions on $\finf$ are printed in bold. Suffices that will be copied to extend the prefix comb are highlighted.}
	\label{pumping1}
\end{figure}
\begin{figure}[t]
	\centering
	{\scalebox{0.9}{
			\begin{tikzpicture}
			\pgfdeclarelayer{bg}    
			\pgfsetlayers{bg,main}
			
			\draw [line width = 0.3mm] (0,0.4) node (root8) {} -- (0,4.2);
			\draw [line width = 0.3mm] (0,5) -- (0,5.8);
			
			\draw [line width = 0.3mm] (0,5.8) node (root0) {} -- (1.3,5.8);
			\draw [line width = 0.3mm] (2.3,5.8) -- (8.1,5.8);
			
			\draw [line width = 0.3mm] (0,5.2) node (root1) {} -- (1.3,5.2);
			\draw [line width = 0.3mm] (2.3,5.2) -- (7.2,5.2);
			
			\draw [line width = 0.3mm] (0,4) node (root2) {} -- (5.4,4);
			\draw [line width = 0.3mm] (0,3.4) node (root3) {} -- (4.5,3.4);
			\draw [line width = 0.3mm] (0,2.8) node (root4) {} -- (3.6,2.8);
			\draw [line width = 0.3mm] (0,2.2) node (root5) {} -- (2.7,2.2);
			\draw [line width = 0.3mm] (0,1.6) node (root6) {} -- (1.8,1.6);
			\draw [line width = 0.3mm] (0,1) node (root7) {} -- (0.9,1);
			
			\draw [draw = none, fill = black] (root0) circle [radius=0.7mm] node (dot00) {};
			\draw [draw = none, fill = black] (root0) [right = .9cm] circle [radius=0.7mm] node (dot01) {};
			\draw [draw = none, fill = black] (root0) [right = 2.7cm] circle [radius=0.7mm] node (dot03) {};
			\draw [draw = none, fill = black] (root0) [right = 3.6cm] circle [radius=0.7mm] node (dot04) {};
			\draw [draw = none, fill = black] (root0) [right = 4.5cm] circle [radius=0.7mm] node (dot05) {};
			\draw [draw = none, fill = black] (root0) [right = 5.4cm] circle [radius=0.7mm] node (dot06) {};
			\draw [draw = none, fill = black] (root0) [right = 6.3cm] circle [radius=0.7mm] node (dot07) {};
			\draw [draw = none, fill = black] (root0) [right = 7.2cm] circle [radius=0.7mm] node (dot08) {};
			\draw [draw = none, fill = black] (root0) [right = 8.1cm] circle [radius=0.7mm] node (dot09) {};
			
			\draw [draw = none, fill = black] (root1) circle [radius=0.7mm] node (dot10) {};
			\draw [draw = none, fill = black] (root1) [right = .9cm] circle [radius=0.7mm] node (dot11) {};
			\draw [draw = none, fill = black] (root1) [right = 2.7cm] circle [radius=0.7mm] node (dot13) {};
			\draw [draw = none, fill = black] (root1) [right = 3.6cm] circle [radius=0.7mm] node (dot14) {};
			\draw [draw = none, fill = black] (root1) [right = 4.5cm] circle [radius=0.7mm] node (dot15) {};
			\draw [draw = none, fill = black] (root1) [right = 5.4cm] circle [radius=0.7mm] node (dot16) {};
			\draw [draw = none, fill = black] (root1) [right = 6.3cm] circle [radius=0.7mm] node (dot17) {};
			\draw [draw = none, fill = black] (root1) [right = 7.2cm] circle [radius=0.7mm] node (dot18) {};
			
			\draw [draw = none, fill = black] (root2) circle [radius=0.7mm] node (dot20) {};
			\draw [draw = none, fill = black] (root2) [right = .9cm] circle [radius=0.7mm] node (dot21) {};
			\draw [draw = none, fill = black] (root2) [right = 1.8cm] circle [radius=0.7mm] node (dot22) {};
			\draw [draw = none, fill = black] (root2) [right = 2.7cm] circle [radius=0.7mm] node (dot23) {};
			\draw [draw = none, fill = black] (root2) [right = 3.6cm] circle [radius=0.7mm] node (dot24) {};
			\draw [draw = none, fill = black] (root2) [right = 4.5cm] circle [radius=0.7mm] node (dot25) {};
			\draw [draw = none, fill = black] (root2) [right = 5.4cm] circle [radius=0.7mm] node (dot26) {};
			
			\draw [draw = none, fill = black] (root3) circle [radius=0.7mm] node (dot30) {};
			\draw [draw = none, fill = black] (root3) [right = .9cm] circle [radius=0.7mm] node (dot31) {};
			\draw [draw = none, fill = black] (root3) [right = 1.8cm] circle [radius=0.7mm] node (dot32) {};
			\draw [draw = none, fill = black] (root3) [right = 2.7cm] circle [radius=0.7mm] node (dot33) {};
			\draw [draw = none, fill = black] (root3) [right = 3.6cm] circle [radius=0.7mm] node (dot34) {};
			\draw [draw = none, fill = black] (root3) [right = 4.5cm] circle [radius=0.7mm] node (dot35) {};
			
			\draw [draw = none, fill = black] (root4) circle [radius=0.7mm] node (dot40) {};
			\draw [draw = none, fill = black] (root4) [right = .9cm] circle [radius=0.7mm] node (dot41) {};
			\draw [draw = none, fill = black] (root4) [right = 1.8cm] circle [radius=0.7mm] node (dot42) {};
			\draw [draw = none, fill = black] (root4) [right = 2.7cm] circle [radius=0.7mm] node (dot43) {};
			\draw [draw = none, fill = black] (root4) [right = 3.6cm] circle [radius=0.7mm] node (dot44) {};
			
			\draw [draw = none, fill = black] (root5) circle [radius=0.7mm] node (dot50) {};
			\draw [draw = none, fill = black] (root5) [right = .9cm] circle [radius=0.7mm] node (dot51) {};
			\draw [draw = none, fill = black] (root5) [right =1.8cm] circle [radius=0.7mm] node (dot52) {};
			\draw [draw = none, fill = black] (root5) [right =2.7cm] circle [radius=0.7mm] node (dot53) {};
			
			\draw [draw = none, fill = black] (root6) circle [radius=0.7mm] node (dot60) {};
			\draw [draw = none, fill = black] (root6) [right = .9cm] circle [radius=0.7mm] node (dot61) {};
			\draw [draw = none, fill = black] (root6) [right = 1.8cm] circle [radius=0.7mm] node (dot62) {};
			
			\draw [draw = none, fill = black] (root7) circle [radius=0.7mm] node (dot70) {};
			\draw [draw = none, fill = black] (root7) [right = .9cm] circle [radius=0.7mm] node (dot71) {};

			\draw [draw = none, fill = black] (root8) circle [radius=0.7mm] node (dot80) {};
					
			\draw [line width = 0.3mm, dotted] (root0) -- (dot03);
			\draw [line width = 0.3mm, dotted] (root1) -- (dot13);
			\draw [line width = 0.3mm, dotted] (root0) -- (dot30);
			
			\draw [line width = 0.9mm, color=gray, opacity=0.5] ($(dot40)+(-2.2mm, -1.5mm)$) -- ($(dot05)+(1mm, 1.5mm)$) ;
			\draw [draw=none] (dot40) node [xshift = -0.7cm] (F3) {\color{darkgray!90}$D_{\mathit{inf}}$};
			
			\draw [line width = 0.9mm, color=gray, opacity=0.5] ($(dot60)+(-2.2mm, -1.5mm)$) -- ($(dot07)+(1mm, 1.5mm)$) ;
			\draw [draw=none] (dot60) node [xshift = -0.7cm] (F3) {\color{darkgray!90}$D_{\mathit{inf'}}$};
			
			\draw [line width = 0.9mm, color=gray, opacity=0.5] ($(dot80)+(-2.2mm, -1.5mm)$) -- ($(dot09)+(1mm, 1.5mm)$) ;
			\draw [draw=none] (dot80) node [xshift = -0.7cm] (F3) {\color{darkgray!90}$D_{\mathit{inf''}}$};
			
			\draw [line width = 0.3mm] (dot05) circle [radius=2mm, xshift = -1.2mm] node (acc10) {};
			\draw [line width = 0.3mm] (dot22) circle [radius=2mm, xshift = -1.2mm] node (acc10) {};
			\node[draw,fit=(dot40),line width = 0.8mm, inner sep = 0.4mm] (acc01) {};
			\node[draw,fit=(dot31),line width = 0.3mm, inner sep = 0.4mm, xshift = -1.2mm] (acc01) {};
			\node[draw,fit=(dot14),line width = 0.8mm, inner sep = 0.4mm, xshift = -1.2mm, rotate=45] (acc01) {};
			
			\node[draw,fit=(dot07),line width = 0.3mm, inner sep = 0.4mm, xshift = -1.2mm] (acc01) {};
			\node[draw,fit=(dot16),line width = 0.3mm, inner sep = 0.4mm, xshift = -1.2mm] (acc01) {};
			\node[draw,fit=(dot60),line width = 0.3mm, inner sep = 0.4mm] (acc01) {};
			\node[draw,fit=(dot42),line width = 0.3mm, inner sep = 0.4mm, xshift = -1.2mm] (acc01) {};
			\node[draw,fit=(dot51),line width = 0.3mm, inner sep = 0.4mm, xshift = -1.2mm] (acc01) {};
			\node[draw,fit=(dot24),line width = 0.3mm, inner sep = 0.4mm, xshift = -1.2mm, rotate=45] (acc01) {};
			\node[draw,fit=(dot33),line width = 0.3mm, inner sep = 0.4mm, xshift = -1.2mm, rotate=45] (acc01) {};
			
			\node[draw,fit=(dot09),line width = 0.3mm, inner sep = 0.4mm, xshift = -1.2mm] (acc01) {};
			\node[draw,fit=(dot18),line width = 0.3mm, inner sep = 0.4mm, xshift = -1.2mm] (acc01) {};
			\node[draw,fit=(dot26),line width = 0.3mm, inner sep = 0.4mm, xshift = -1.2mm] (acc01) {};
			\node[draw,fit=(dot35),line width = 0.3mm, inner sep = 0.4mm, xshift = -1.2mm] (acc01) {};
			\node[draw,fit=(dot44),line width = 0.3mm, inner sep = 0.4mm, xshift = -1.2mm] (acc01) {};
			\node[draw,fit=(dot53),line width = 0.3mm, inner sep = 0.4mm, xshift = -1.2mm] (acc01) {};
			\node[draw,fit=(dot62),line width = 0.3mm, inner sep = 0.4mm, xshift = -1.2mm] (acc01) {};
			\node[draw,fit=(dot71),line width = 0.3mm, inner sep = 0.4mm, xshift = -1.2mm] (acc01) {};
			\node[draw,fit=(dot80),line width = 0.3mm, inner sep = 0.4mm] (acc01) {};
			
			\begin{pgfonlayer}{bg}
				\draw[pattern=north east lines, step = 2cm, pattern color = MidnightBlue!60] (4.25,4.25) rectangle (5.75,3.75);
				\draw[pattern=north east lines, step = 2cm, pattern color = MidnightBlue!60] (3.35,3.65) rectangle (4.85,3.15);
				\draw[pattern=vertical lines, pattern color = PineGreen!60] (6.95,6.05) rectangle (8.45,5.55);
				\draw[pattern=vertical lines, pattern color = PineGreen!60] (6.05,5.45) rectangle (7.55,4.95);
				\draw[pattern=vertical lines, pattern color = PineGreen!60] (2.45,3.05) rectangle (3.95,2.55);
				\draw[pattern=vertical lines, pattern color = PineGreen!60] (1.55,2.45) rectangle (3.05,1.95);
				\draw[pattern=vertical lines, pattern color = PineGreen!60] (0.65,1.85) rectangle (2.15,1.35);
				\draw[pattern=dots, pattern color = YellowOrange!60] (-.25,1.25) -- (1.8,1.25) -- (-.25,-.15) -- cycle;
		 	\end{pgfonlayer}
			\end{tikzpicture}
	}}
	\caption{The resulting larger finite prefix after extending every witness path in Figure~\ref{pumping1} once. The highlights show which part of $\mathcal{T_{\mathit{fin}}}$ was used to extend the prefix.}
	\label{pumping2}
\end{figure}

\emph{Construct $\mathcal{\tilde T}$.}
We now extend $\mathcal{T_{\mathit{fin}}}$ into an infinite tree $\mathcal{\tilde T}$ also satisfying $\varphi$. 
By construction, for each $i \in P$, run $r_i$ has an accepting state between $\finf$ and $\finff$.
Furthermore, for each $q \in \qinff$, there is a designated position $i_q \in P$ such that $\finf[i_q]$ is associated with $q$.
We extend $\mathcal{T_{\mathit{fin}}}$ by extending each node in $\finff$ as follows:
For each $i \leq \mathit{inf'}$ with $q$ associated to $\finff[i]$ and designated position $i_q$, we append a copy of $p_{i_q}[\mathit{inf}-i_q+1, \mathit{inf}' - i_q]$ to $p_i[\mathit{inf'} - i]$.
Additionally, we copy the sub-comb starting in node $p[\mathit{inf}+1]$ and append it to node $p[\mathit{inf'}]$, thus completing the extension.
By construction, we now have a larger finite comb ending in a diagonal $D_{\mathit{inf}''}$ with $F_{\mathit{inf}''} \subseteq F_{\mathit{inf}'}$. 
Figure~\ref{pumping1} shows a possible prefix comb $\mathcal{T_{\mathit{fin}}}$ and Figure~\ref{pumping2} shows how it is extended.
Repeating this process indefinitely, we get an infinite, ultimately periodic 
model $\mathcal{\tilde T}$ where each pair $(p[i, \infty], p_i)$ of witness paths in the comb of $\mathcal{\tilde T}$ is accepted by $A_{\psi''}$. It is thus a model for $\varphi$.

\emph{Case~2.}
In the case where the release modality is witnessed by path $p$ for $\pi$ and a sequence of paths $\{p_0, p_1, \ldots, p_n, p'\}$ for $\pi''$ and $\pi'$, we proceed very similar to Case~1. 
We again arrange the witnesses in a comb-like graph, with the only difference that at $p[n]$, there are the two witnesses $p_n$ and $p'$ branching from $p$.
In order to get the same structure as in Case~1, we zip $p'$ and $p_n$ into one witness path $\bar p_n$.
Furthermore, for all $m > n$, we add dummy witnesses $p_\top = \emptyset^\omega$ branching from $p$ at $p[m]$.

\emph{Automata construction.} As in Case~1, we associate the paths with the corresponding automaton runs.
For $(p[i, \infty], p_i)$ with $i < n$, we use the automaton $A_{\psi''}$, as in Case~1. 
For $(p[i, \infty], p_\top)$ with $i > n$, we use the automaton $A_\top$, which unconditionally accepts every pair of traces. 
For $(p[n, \infty], \bar p_n)$, we construct a new automaton $A_{\psi' \land \psi''}$ based on the LTL automaton $A'_{\psi' \land \psi''}$ for $\psi' \land \psi''$, similar to the construction of $A_{\psi''}$. 
The automaton $A_{\psi' \land \psi''}$ has the alphabet $\Sigma = S \times (S \times S)$ and the transition function $\delta: Q \times \Sigma \rightarrow 2^Q$, where $q' \in \delta(q, (s_0, (s_1,s_2)))$ iff $q' \in \delta'(q, A)$, and $L(s_0) = \{a \in AP ~|~ a_\pi \in A \} $, $L(s_1) = \{a \in AP ~|~ a_{\pi'} \in A \} $, and $L(s_2) = \{a \in AP ~|~ a_{\pi''} \in A \}$. We denote the set of states of automaton $A_{\psi''}$ with $Q$ and the set of states of automaton $A_{\psi' \land \psi''}$ with $Q_{\psi' \land \psi''}$. 

\emph{Construct $\mathcal{T_{\mathit{fin}}}$.}
Again, we construct a tree $\mathcal{T_{\mathit{fin}}}$ of bounded depth.
First, cut diagonal $D_n$ close to diagonal $D_0$.
Following $D_n$, there are again finitely many different cuttable diagonals (containing states from all three automata). 
Proceeding as in Case~1, construct $\mathcal{T_{\mathit{fin}}}$ such that after $D_n$, there are two diagonals $\finf$ and $\finff$ with sufficiently many accepting states in between.
The bound $b'$ on the depth of $\mathcal{T_{\mathit{fin}}}$ is obtained analogously to the bound in Case~1.
We only remark that $n$ is bounded by $|Q|^{|Q|} + (|Q| + 1)^{|Q|}$, and the maximal number of different cuttable diagonals is described in terms of the number of states of all three automata, i.e., $(|Q| + |Q_{\psi' \land \psi''}| + |Q_\top| + 1)^{(|Q| + |Q_{\psi' \land \psi''}| + |Q_\top|)}$. 
We conclude by noting that $b'$ can be used as an over-approximation of bound $b$ in Case~1.
Finally, as in Case~1, we construct an infinite satisfying tree $\mathcal{\tilde T}$ using $\mathcal{T_{\mathit{fin}}}$.

\emph{Decision Algorithm.}
Enumerate all comb-like prefixes $\mathcal{T_{\mathit{fin}}}$ of bounded depth $b'$ to find a suitable prefix (for either of both cases).
Whether a prefix is suitable or not can be decided by labeling it with corresponding runs from the automata of Cases 1 and 2 and checking whether it contains a segment between two diagonals $\finf$ and $\finff$ which qualifies to be extended into a model $\mathcal{\tilde T}$ as described above.
When associating the comb prefix with runs from the automata, we have to take into account all finitely-many points in time $n$ where $\psi''$ could be released by $\psi'$.
If some prefix $\mathcal{T_{\mathit{fin}}}$ is suitable, $\varphi$ is satisfiable (namely by the described tree $\mathcal{\tilde T}$).
As shown above, there is a suitable finite prefix of bounded depth $b'$ whenever $\varphi$ is satisfiable.
\end{proof}

\begin{cor}\label{cor:HyperCTL-SAT-Until}
	The satisfiability problem for HyperCTL$^*$ formulas of the form $\exists \pi.(\exists \pi''. \psi'') \LTLuntil (\exists \pi'. \psi')$, where $\psi'$ and $\psi''$ are quantifier free, is decidable.
\end{cor}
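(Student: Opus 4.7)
The plan is to adapt the proof of Lemma~\ref{lem:HyperCTL-SAT-Release}, essentially specialising it to what was Case~2 of that proof. The until modality $\varphi_1 \LTLuntil \varphi_2$ differs from its dual release only in that (a)~$\varphi_2$ must eventually hold, so there is no infinite ``always'' witness (no analogue of Case~1), and (b)~$\varphi_1$ is not required at the release position. Consequently, any model of $\varphi \coloneqq \exists \pi.(\exists \pi''. \psi'') \LTLuntil (\exists \pi'. \psi')$ contains a witness path $p$ for $\pi$, a natural number $n$, witness paths $p_0, \ldots, p_{n-1}$ branching from $p[0], \ldots, p[n-1]$ such that each pair $(p[i,\infty],p_i)$ satisfies $\psi''$, and a final witness path $p'$ branching from $p[n]$ with $(p[n,\infty],p')$ satisfying $\psi'$. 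Note that no witness for $\pi''$ is required at position $n$.

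First, I would construct, as in the lemma, a B\"uchi automaton $A_{\psi''}$ over the alphabet $S \times S$ that accepts exactly the pairs $(p[i,\infty],p_i)$ satisfying $\psi''$, and an automaton $A_{\psi'}$ over $S \times S$ for pairs $(p[n,\infty],p')$ satisfying $\psi'$. For positions $m > n$ along $p$ that carry no required witness, I would again introduce a dummy path $p_\top = \emptyset^\omega$ with the trivially accepting automaton $A_\top$. Unlike Case~2 of the release proof, there is no need to zip $p_n$ with $p'$, since the until semantics imposes no obligation on $\pi''$ at position $n$; this in fact simplifies the bookkeeping.

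Second, I would arrange the witness paths in the same comb-like structure from Lemma~\ref{lem:HyperCTL-SAT-Release} and label each row with the accepting run of its associated automaton. The frontier-preserving cut machinery applies verbatim: I perform an initial cut that brings $D_n$ close to $D_0$, then choose cuttable diagonals $\finf$ and $\finff$ beyond $D_n$ with enough accepting states of $A_{\psi''}$ and $A_\top$ in between to allow pumping, bounding the depth of the resulting finite prefix $\mathcal T_{\mathit{fin}}$ by the analogue of the bound $b'$, expressed in terms of $|Q_{\psi''}| + |Q_{\psi'}| + |Q_\top|$. The infinite model $\mathcal{\tilde T}$ is then obtained by iteratively copying the segment between $\finf$ and $\finff$, exactly as in the lemma.

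Third, the decision procedure enumerates all candidate comb prefixes up to this depth bound and, for each, guesses the release position $n \leq b'$, labels the comb with runs from $A_{\psi'}$, $A_{\psi''}$, and $A_\top$, and checks whether a suitable $\finf$--$\finff$ segment exists. The main obstacle is purely a matter of careful bookkeeping: ensuring that the distinguished position $n$, where the automaton responsibility switches from $A_{\psi''}$ to $A_{\psi'}$, lies in the fixed prefix before $\finf$ so that subsequent pumping never disturbs the single non-repeating row associated with $\psi'$. Placing the first cut immediately after $D_n$, as in Case~2 of Lemma~\ref{lem:HyperCTL-SAT-Release}, handles this cleanly.
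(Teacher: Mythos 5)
Your proposal is correct and follows essentially the same route as the paper: reduce to Case~2 of Lemma~\ref{lem:HyperCTL-SAT-Release}, observe that the until semantics removes the obligation on $\psi''$ at the position $n$ where $\psi'$ holds (so no zipping of $p_n$ with $p'$ is needed), and run the separate automaton $A_{\psi'}$ on the single row $(p[n,\infty],p')$ while reusing the comb, cutting, and pumping machinery unchanged. The paper states this in a few sentences by reference to Case~2; your additional remarks (no analogue of Case~1 since the eventuality must be realised, and keeping $D_n$ in the fixed prefix before $\finf$) are consistent elaborations of the same argument.
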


\begin{proof}
	We proceed similarly to Case~2 in the proof above. 
	The only difference is that formula $\psi''$ does not have to hold at the same point in time $n$ where formula $\psi'$ holds. 
	Therefore, the resulting comb does not have two witnesses branching from $p[n]$ that we have to zip. We use an automaton $A_{\psi'}$ instead of $A_{\psi' \land \psi''}$ to obtain the run $r_n$ for $(p[n, \infty], p')$. 
\end{proof}

We lift the arguments of the above proof to arbitrary formulas in the existential fragment of HyperCTL$^*$. 

\begin{lem}
	\label{hyperCTL$^*$-exists-decidabel}
	The satisfiability problem for HyperCTL$^*$ formulas in the $\exists^*$~fragment is decidable.
\end{lem}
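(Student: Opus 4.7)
The plan is to generalize the comb-structure argument of Lemma~\ref{lem:HyperCTL-SAT-Release} and Corollary~\ref{cor:HyperCTL-SAT-Until} by induction on the formula structure, with the inductive parameter being the nesting depth of existential path quantifiers that occur inside temporal operators. The base case covers formulas of the form $\exists \pi_1. \cdots \exists \pi_n.~\psi$ with $\psi$ quantifier-free and all quantifiers in the prefix: treating each atom $a_{\pi_i}$ as a fresh symbol, $\psi$ becomes an LTL formula over the product alphabet $(2^{AP})^n$, whose satisfiability is decidable and yields a Büchi automaton recognising admissible $n$-tuples of witness paths. Any tree containing a witness tuple satisfies the original formula.

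For the inductive step, consider a top-level quantifier $\exists \pi.~\varphi$ where $\varphi$ contains further existential quantifiers under temporal operators. Each such nested quantifier $\exists \pi'.~\varphi'$ corresponds to a ``ribbon'' that branches off from the witness for $\pi$ at the point in time when the quantifier fires; the ribbon itself may recursively carry further sub-ribbons, giving a tree-shaped multi-dimensional witness structure rather than a simple comb. By the induction hypothesis, each strict subformula $\varphi'$ has an associated Büchi-like acceptor that recognises its valid witness tuples, and these acceptors can be composed into a single product automaton whose state space tracks all nested levels simultaneously.

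I would then lift the frontier/cut/pumping machinery of Lemma~\ref{lem:HyperCTL-SAT-Release} to this product automaton: a \emph{multi-frontier} now records, for each nested ribbon crossing a given diagonal, the tuple of local automaton states; two multi-frontiers are \emph{cuttable} iff the state-count inequalities required for safe cutting hold at every level simultaneously. Since the number of distinct multi-frontiers is bounded (albeit exponentially in the formula size), the pigeonhole argument again locates close cuttable pairs. Performing cuts preserving designated positions at all levels produces a bounded-depth prefix $\mathcal{T}_{\mathit{fin}}$, which is then extended into an infinite ultimately periodic model $\mathcal{\tilde T}$ exactly as in Lemma~\ref{lem:HyperCTL-SAT-Release}. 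Case~2 of the original proof (for the $\LTLrelease$ eventuality witness) generalises mutatis mutandis, using product zippings of all sub-witnesses branching at the release point.

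The main obstacle is the combinatorial bookkeeping across levels: a cut at the outer level must not break the accepting runs of any inner sub-automaton attached to a ribbon in the cut region, so the cutting operation is constrained by \emph{all} nested Büchi conditions at once. Making this precise requires that the multi-frontier abstraction is fine enough to preserve eventual acceptance at every level while remaining of bounded cardinality, which is the crux of the generalisation and yields a tower-of-exponentials bound on the depth of $\mathcal{T}_{\mathit{fin}}$ in terms of the formula's quantifier nesting depth. Decidability then follows by enumerating all candidate prefixes of bounded depth, labelling them with product-automaton runs, and checking for the existence of a suitable pair of diagonals as before.
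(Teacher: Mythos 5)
Your proposal follows essentially the same route as the paper's proof: the witnesses are arranged in a multi-dimensional comb (your tree of ``ribbons''), frontiers become nested, per-level collections of automaton states, the cuttable condition is imposed at every level simultaneously, and the pigeonhole--cut--pump argument of Lemma~\ref{lem:HyperCTL-SAT-Release} is replayed on this richer structure to obtain a bounded-depth prefix that is then extended to an ultimately periodic model and found by enumeration. One step is not right as stated, though: the induction hypothesis (decidability of satisfiability at lower nesting depth) does not hand you ``a B\"uchi-like acceptor that recognises the valid witness tuples'' of a nested subformula, and no such acceptor over flat tuples of paths can exist in general, because the witness of a subformula such as $\exists\pi'.\,\G\,(\exists\pi''.\,\psi)$ is itself a branching sub-comb rather than a tuple of paths read in lockstep by a word automaton. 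The paper avoids this by attaching B\"uchi automata only to the innermost, quantifier-free formulas (read over $d$-tuples of nodes) and pushing all of the quantifier nesting into the recursive definitions of diagonals, frontiers and cuttability on the $d$-dimensional comb; the remainder of your proposal (multi-frontiers recording per-ribbon local states, cuts preserving designated positions at all levels) already does exactly this, so the ``compose the sub-acceptors into a single product automaton'' step should simply be dropped in favour of that machinery rather than treated as a consequence of the induction hypothesis.
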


\begin{proof}
	Define the existential quantifier depth of a $\exists ^*$HyperCTL$^*$ formula as the maximal number of alternations between existential quantifiers and the temporal modalities $\R$ and $\U$ in the syntax tree.
	The witnesses of a formula with quantifier depth $d$ can be arranged as a $d$-dimensional comb.
	We assume, again, w.l.o.g. that no $\X$-modality occurs in the formula.
	Lemma~\ref{lem:HyperCTL-SAT-Release} and Corollary~\ref{cor:HyperCTL-SAT-Until} cover the case where the comb is 2-dimensional.
	We now lift the arguments to the general case.
	Given a $d$-dimensional comb, we associate the innermost witnesses with the corresponding runs on the $d$-tuple automata, which we build as before from the inner LTL formulas.
	A $3$-dimensional comb and the corresponding automaton runs are exemplarily depicted in Figure~\ref{3dcomb}.

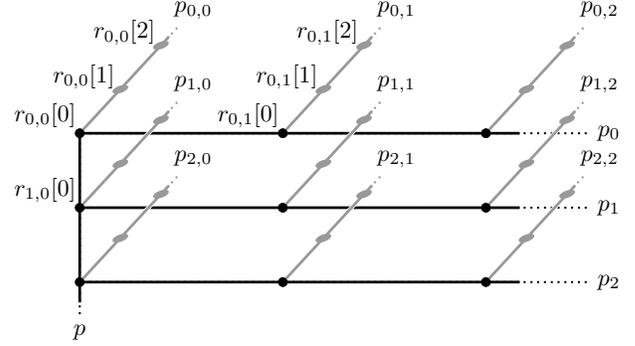
\begin{figure}[t]
	\centering
	{\scalebox{0.9}{
		\begin{tikzpicture}[x  = {(0cm,-1cm)}, 
							y  = {(1cm,0cm)},
							z  = {(0.6cm,0.65cm)},
							scale = 1]
			
			\begin{scope}[canvas is xy plane at z=0]
 				\draw[line width=0.4mm] (0,0) -- (2.5,0);
			
 				\draw[line width=0.4mm] (0,0) node (root0) {} -- (0,6.5);
				\draw[line width=0.4mm] (1.1,0) node (root1) {} -- (1.1,6.5);
				\draw[line width=0.4mm] (2.2,0) node (root2) {} -- (2.2,6.5);
				
				\draw [draw = none, fill = black] (root0) circle [radius=0.7mm] node (dot000) {};
				\draw [draw = none, fill = black] (root0) [right=3cm] circle [radius=0.7mm] node (dot010) {};
				\draw [draw = none, fill = black] (root0) [right=6cm] circle [radius=0.7mm] node (dot020) {};
				
				\draw [draw = none, fill = black] (root1) circle [radius=0.7mm] node (dot100) {};
				\draw [draw = none, fill = black] (root1) [right=3cm] circle [radius=0.7mm] node (dot110) {};
				\draw [draw = none, fill = black] (root1) [right=6cm] circle [radius=0.7mm] node (dot120) {};
				
				\draw [draw = none, fill = black] (root2) circle [radius=0.7mm] node (dot200) {};
				\draw [draw = none, fill = black] (root2) [right=3cm] circle [radius=0.7mm] node (dot210) {};
				\draw [draw = none, fill = black] (root2) [right=6cm] circle [radius=0.7mm] node (dot220) {};
				
				\draw [draw = none] (dot200) node [below = 0.5cm] (tau) {$p$};
				
				\draw [draw = none] (dot020) node [right = 1.4cm] (tau0) {$p_0$};
				\draw [draw = none] (dot120) node [right = 1.4cm] (tau1) {$p_1$};
				\draw [draw = none] (dot220) node [right = 1.4cm] (tau2) {$p_2$};
				
				\draw [line width = 0.3mm, dotted] (root0) -- (tau);
				
				\draw [line width = 0.3mm, dotted] (root0) -- (tau0);
				\draw [line width = 0.3mm, dotted] (root1) -- (tau1);
				\draw [line width = 0.3mm, dotted] (root2) -- (tau2);
			\end{scope}
			
			\begin{scope}[canvas is yz plane at x=0]
				\draw [line width = 0.4mm, color = gray!80]  (0,0) node (root00) {} -- (0,2.2);
				\draw [line width = 0.4mm, color = gray!80]  (3,0) node (root01) {} -- (3,2.2);
				\draw [line width = 0.4mm, color = gray!80]  (6,0) node (root02) {} -- (6,2.2);
				
				\draw [draw = none, fill = gray!80] (root00) [above=1cm] circle [radius=1mm] node (dot001) {};
				\draw [draw = none, fill = gray!80] (root00) [above=2cm] circle [radius=1mm] node (dot002) {};
				
				\draw [draw = none, fill = gray!80] (root01) [above=1cm] circle [radius=1mm] node (dot011) {};
				\draw [draw = none, fill = gray!80] (root01) [above=2cm] circle [radius=1mm] node (dot012) {};
				
				\draw [draw = none, fill = gray!80] (root02) [above=1cm] circle [radius=1mm] node (dot021) {};
				\draw [draw = none, fill = gray!80] (root02) [above=2cm] circle [radius=1mm] node (dot022) {};
				
				\draw [draw = none] (0,2.8) node (tau00) {$p_{0,0}$};
				\draw [draw = none] (3,2.8) node (tau01) {$p_{0,1}$};
				\draw [draw = none] (6,2.8) node (tau02) {$p_{0,2}$};
				
				\draw [line width = 0.3mm, dotted, color = gray!80] (root00) -- (tau00);
				\draw [line width = 0.3mm, dotted, color = gray!80] (root01) -- (tau01);
				\draw [line width = 0.3mm, dotted, color = gray!80] (root02) -- (tau02);
				
				\draw [draw = none] ($(root00) + (-0.75,0.4)$) node (run00) {$r_{0,0}[0]$};
				\draw [draw = none] ($(dot001) + (-0.6,0.1)$) node (run01) {$r_{0,0}[1]$};
				\draw [draw = none] ($(dot002) + (-0.6,0.1)$) node (run01) {$r_{0,0}[2]$};
				
				\draw [draw = none] ($(root01) + (-0.75,0.4)$) node (run01) {$r_{0,1}[0]$};
				\draw [draw = none] ($(dot011) + (-0.6,0.1)$) node (run01) {$r_{0,1}[1]$};
				\draw [draw = none] ($(dot012) + (-0.6,0.1)$) node (run01) {$r_{0,1}[2]$};
			\end{scope}
			
			\begin{scope}[canvas is yz plane at x=1.1]
 				\draw [line width = 0.65mm, color = white]  (0,0) -- (0,2.2);
				\draw [line width = 0.65mm, color = white]  (3,0) -- (3,2.2);
				\draw [line width = 0.65mm, color = white]  (6,0) -- (6,2.2);
				\draw [line width = 0.4mm, color = gray!80]  (0,0) node (root10) {} -- (0,2.2);
				\draw [line width = 0.4mm, color = gray!80]  (3,0) node (root11) {} -- (3,2.2);
				\draw [line width = 0.4mm, color = gray!80]  (6,0) node (root12) {} -- (6,2.2);
				
				\draw [draw = none, fill = gray!80] (root10) [above=1cm] circle [radius=1mm] node (dot101) {};
				\draw [draw = none, fill = gray!80] (root10) [above=2cm] circle [radius=1mm] node (dot102) {};
				
				\draw [draw = none, fill = gray!80] (root11) [above=1cm] circle [radius=1mm] node (dot111) {};
				\draw [draw = none, fill = gray!80] (root11) [above=2cm] circle [radius=1mm] node (dot112) {};
				
				\draw [draw = none, fill = gray!80] (root12) [above=1cm] circle [radius=1mm] node (dot121) {};
				\draw [draw = none, fill = gray!80] (root12) [above=2cm] circle [radius=1mm] node (dot122) {};
				
				\draw [draw = none] (0,2.8) node (tau10) {$p_{1,0}$};
				\draw [draw = none] (3,2.8) node (tau11) {$p_{1,1}$};
				\draw [draw = none] (6,2.8) node (tau12) {$p_{1,2}$};
				
				\draw [line width = 0.3mm, dotted, color = gray!80] (root10) -- (tau10);
				\draw [line width = 0.3mm, dotted, color = gray!80] (root11) -- (tau11);
				\draw [line width = 0.3mm, dotted, color = gray!80] (root12) -- (tau12);
				
				\draw [draw = none] ($(root10) + (-0.75,0.4)$) node (run10) {$r_{1,0}[0]$};
			\end{scope}
			
			\begin{scope}[canvas is yz plane at x=2.2]
 				\draw [line width = 0.65mm, color = white]  (0,0) -- (0,2.2);
				\draw [line width = 0.65mm, color = white]  (3,0) -- (3,2.2);
				\draw [line width = 0.65mm, color = white]  (6,0) -- (6,2.2);
				\draw [line width = 0.4mm, color = gray!80]  (0,0) node (root20) {} -- (0,2.2);
				\draw [line width = 0.4mm, color = gray!80]  (3,0) node (root21) {} -- (3,2.2);
				\draw [line width = 0.4mm, color = gray!80]  (6,0) node (root22) {} -- (6,2.2);
				
				\draw [draw = none, fill = gray!80] (root20) [above=1cm] circle [radius=1mm] node (dot201) {};
				\draw [draw = none, fill = gray!80] (root20) [above=2cm] circle [radius=1mm] node (dot202) {};
				
				\draw [draw = none, fill = gray!80] (root21) [above=1cm] circle [radius=1mm] node (dot211) {};
				\draw [draw = none, fill = gray!80] (root21) [above=2cm] circle [radius=1mm] node (dot212) {};
				
				\draw [draw = none, fill = gray!80] (root22) [above=1cm] circle [radius=1mm] node (dot221) {};
				\draw [draw = none, fill = gray!80] (root22) [above=2cm] circle [radius=1mm] node (dot222) {};
				
				\draw [draw = none] (0,2.8) node (tau20) {$p_{2,0}$};
				\draw [draw = none] (3,2.8) node (tau21) {$p_{2,1}$};
				\draw [draw = none] (6,2.8) node (tau22) {$p_{2,2}$};
				
				\draw [line width = 0.3mm, dotted, color = gray!80] (root20) -- (tau20);
				\draw [line width = 0.3mm, dotted, color = gray!80] (root21) -- (tau21);
				\draw [line width = 0.3mm, dotted, color = gray!80] (root22) -- (tau22);
			\end{scope}
			
			\begin{scope}[canvas is xy plane at z=0]
				\draw [draw = none, fill = black] (root0) circle [radius=0.7mm] node (dot000) {};
				\draw [draw = none, fill = black] (root0) [right=3cm] circle [radius=0.7mm] node (dot010) {};
				\draw [draw = none, fill = black] (root0) [right=6cm] circle [radius=0.7mm] node (dot020) {};
				
				\draw [draw = none, fill = black] (root1) circle [radius=0.7mm] node (dot100) {};
				\draw [draw = none, fill = black] (root1) [right=3cm] circle [radius=0.7mm] node (dot110) {};
				\draw [draw = none, fill = black] (root1) [right=6cm] circle [radius=0.7mm] node (dot120) {};
				
				\draw [draw = none, fill = black] (root2) circle [radius=0.7mm] node (dot200) {};
				\draw [draw = none, fill = black] (root2) [right=3cm] circle [radius=0.7mm] node (dot210) {};
				\draw [draw = none, fill = black] (root2) [right=6cm] circle [radius=0.7mm] node (dot220) {};
			\end{scope}
		\end{tikzpicture}
	}}
	\caption{A $3$-dimensional comb graph resulting from the arrangement of witnesses for a HyperCTL$^*$ formula $\exists \pi. \square (\exists \pi'. \square (\exists \pi''. \varphi))$. 
	The innermost witnesses $p_{i,j}$ are labeled with the automaton states of the corresponding automaton runs.}
\label{3dcomb}
\end{figure}
	
	In the $d$-dimensional case, diagonals are hyperplanes.
	We represent the $k$-th plane $D_k$ in the $d$-dimensional comb by a nested sequence of depth $d-1$.
	\begin{align*}
	D_k \coloneqq & [D_{k, 0}, \ldots , D_{k,k}]\\
	D_{k, i_1} \coloneqq & [D_{k,i_1, 0} \ldots , D_{k, i_1, k-i_1}]\\[-0.5em]
	& \qquad \vdots \\[-0.5em]
	D_{k, i_1, \ldots, i_{d-2}} \coloneqq & [ p_{i_1, \ldots, i_{d-2}, 0}[s], p_{i_1, \ldots, i_{d-2}, 1}[s-1], \ldots\\
	& p_{i_1, \ldots, i_{d-2}, s}[0] ] 
	 \text{\scriptsize , where $s = k - (i_1 + \ldots + i_{d-2})$ }
	\end{align*} 
	We additionally define $d$-dimensional frontiers as nested sets of automata states.
	\begin{alignat*}{2}
		F_k &\coloneqq \set{F_{k, i_1} ~|~ i_1 \leq k, i_1 \in \nat} \\	
		F_{k, i_1} &\coloneqq \set{F_{k, i_1, i_2} ~|~ i_1 + i_2 \leq k, i_2 \in \nat} \\[-0.5em]
		& \qquad \qquad \vdots \\[-0.5em]
		F_{k, i_1, \ldots, i_{d-2}} &\coloneqq \{ r_{i_1, \ldots, i_{d-2}, i_{d-1}}[i_d] ~|~ \\
		& \qquad i_{d_1} + i_d = k - (i_1 + \ldots + i_{d-2}) \} 
	\end{alignat*}
	As an example, in Fig.~\ref{3dcomb}, $[p_{0,0}[1], p_{0,1}[0]]$, and $[p_{1,0}[0]]$ constitute plane $D_1$.
	The corresponding frontier is giving by $F_1 = \set{F_{1,0}, F_{1,1}} = \set{\set{r_{0,0}[1], r_{0,1}[0]}, \set{r_{1,0}[0]}}$.

	We define the \emph{cuttable} property recursively, with the definition for the $2$-dimensional case (c.f. Lemma~\ref{lem:HyperCTL-SAT-Release}) as base case.
	For indices $i_1, \ldots , i_{l}$ we also write $\bar i$.
	Two sub-planes $D_{k,\bar {i}}$, $D_{k',\bar {i'}}$ with $F_{k,\bar{i}} = F_{k',\bar{i'}}$ are \emph{cuttable} if for every two $F_{k', \bar {i'}, j'} \in F_{k', \bar {i'}}$ and $F_{k, \bar {i}, j} \in F_{k, \bar {i}}$ with $F_{k', \bar {i'}, j'} = F_{k, \bar {i}, j}$, sub-frontier $F_{k, \bar {i}, j} $ is associated with at least as many sub-sequences in $D_{k,\bar {i}}$ as $F_{k', \bar {i'}}$ is associated with in $D_{k',\bar {i'}}$, or at least with $|Q|$ many (where $Q$ is the set of automata states).
	Furthermore, if $F_{k, \bar {i}, j}$ is associated with $D_{k, \bar {i}, j}$, and $F_{k', \bar {i'}, j'}$ is associated with $D_{k', \bar {i'}, j'} $, the corresponding sub-planes must be cuttable again.
	
	For the sake of explainability and compactness, we extend the \emph{cut} operation only to the $3$-dimensional case.
	The $d$-dimensional case generalizes the definition.
	A plane in the $3$-dimensional comb is a sequence of sequences $D_{k, i}$ of nodes. 
	Each sequence $D_{k, i}$ contains nodes that reside on paths branching from $p_{i}$.
	To cut plane $D_{k'}$ to $D_k$, we require that $F_k = F_{k'}$.
	Pick for each set $F_{k, i}$ an equal set $F_{k', i'}$ and cut the diagonal $D_{k',i'}$ to $D_{k,i}$, as described for the $2$-dimensional case in the proof of Lemma~\ref{lem:HyperCTL-SAT-Release}.
	To preserve the relations of paths in the third dimension ($p_{i,j}$) with the paths in the first and second dimension ($p$ and $p_i$), also replace each $2$-dimensional sub-comb with origin at $p_{i}[k-i]$ with the sub-comb at $p_{i'}[k'-i']$; and the $3$-dimensional sub-comb with origin at $p[k]$ with the sub-comb $p[k']$.
	Using the lifted definition of cuttable, it is also possible to define \emph{preserving cuts} by first declaring a set of representative positions and then choosing the sets in such a way that no representative positions are deleted during the cut.

	With the same arguments as in the $2$-dimensional case and using the lifted (preserving) cut operation, we can create a $3$-dimensional comb prefix $\mathcal{T_{\mathit{fin}}}$ of bounded size which can then be extended to a satisfying model $\mathcal{\tilde T}$. 
	Note that the bound in the $3$-dimensional case is exponentially larger than the bound in the $2$-dimensional case due to the more complicated definition of cuttable.
	\end{proof}

\begin{thm}\label{HyperCTLStarSAT}
	The satisfiability problem for HyperCTL$^*$ formulas in the $\exists^*\forall^*$~fragment is decidable.
\end{thm}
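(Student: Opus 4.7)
The plan is to reduce the satisfiability problem for $\exists^* \forall^*$~HyperCTL$^*$ to that for the $\exists^*$~fragment, whose decidability is given by Lemma~\ref{hyperCTL$^*$-exists-decidabel}. The reduction mirrors the one already used for HyperQPTL in Theorem~\ref{hyperqptl-sat-decidable}: every universal path quantifier is replaced by a finite conjunction indexed by the existentially-quantified path variables in whose scope it lies.

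Concretely, given an $\exists^* \forall^*$~HyperCTL$^*$ formula $\varphi$, I would process each subformula of the form $\forall \pi'.~\phi$ bottom-up. Let $\pi_{j_1}, \ldots, \pi_{j_k}$ be the existentially-quantified path variables whose scope contains $\forall \pi'.~\phi$. By the definition of the fragment, no existential quantifier occurs inside $\phi$, so the substitutions $\phi[\pi_{j_i} / \pi']$ are well-defined. Replace $\forall \pi'.~\phi$ by $\bigwedge_{i=1}^{k} \phi[\pi_{j_i} / \pi']$. Applying this transformation to every universal quantifier yields a formula $\varphi'$ in the $\exists^*$~fragment whose size is polynomial in $|\varphi|^{k}$, where $k$ bounds the universal nesting depth.

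Next I would establish that $\varphi$ and $\varphi'$ are equi-satisfiable. The direction $\varphi \Rightarrow \varphi'$ is immediate: if a tree satisfies $\varphi$, then each universal choice is satisfying, in particular the one in which $\pi'$ is interpreted as the witness of any currently in-scope $\pi_{j_i}$, so every conjunct of $\varphi'$ holds. For the converse, if $\varphi'$ is satisfied by a tree $\mathcal{T}$ with existential witnesses, I would construct a \emph{bouquet tree} $\mathcal{T}^*$ whose paths are exactly those witnesses, glued along their common prefixes; in the presence of existentials nested inside temporal operators, this construction follows the $d$-dimensional comb-shaped arrangement of witnesses used in the proof of Lemma~\ref{hyperCTL$^*$-exists-decidabel}. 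On $\mathcal{T}^*$, every universal quantifier $\forall \pi'$ ranges only over witnesses that agree with $\varepsilon$ on the relevant prefix, and each such witness is explicitly covered by a conjunct of $\varphi'$, so $\mathcal{T}^* \models \varphi$.

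The main obstacle is the interaction between HyperCTL$^*$'s built-in prefix-agreement constraint on path quantification and the bouquet construction: the semantics of $\forall \pi'$ at time $i$ only considers paths agreeing with $\varepsilon$ up to $i$, so restricting $\mathcal{T}$ to the bouquet of witnesses could in principle drop paths that the conjunction in $\varphi'$ does not account for. Ruling this out requires a careful induction on the formula structure, exploiting that $\mathcal{T}^*$ contains \emph{exactly} the witness paths and no others, so every path that can instantiate a universal quantifier in $\mathcal{T}^*$ is, by construction, one of the existentially-quantified variables substituted into $\varphi'$. Once this induction is carried out, applying Lemma~\ref{hyperCTL$^*$-exists-decidabel} to $\varphi'$ concludes the argument.
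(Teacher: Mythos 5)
Your overall route is the same as the paper's: eliminate each universal path quantifier by a conjunction over the in-scope existential witnesses, and invoke Lemma~\ref{hyperCTL$^*$-exists-decidabel} on the resulting $\exists^*$ formula. You correctly sense that the prefix-agreement constraint in the semantics of path quantification is the crux, but you attach the difficulty to the wrong direction. The step you call ``immediate'', namely $\varphi \Rightarrow \varphi'$, is exactly where the plain substitution breaks once a universal quantifier sits under a temporal operator: at time $i>0$, $\forall \pi'$ ranges only over paths that agree with $\varepsilon$ on $[0,i]$, so an existential witness $\pi_{j}$ that has already branched off $\varepsilon$ is \emph{not} a legal instantiation, and the conjunct $\phi[\pi_{j}/\pi']$ is not implied by the universal. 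Concretely, consider
$\varphi = \exists\pi_1.\exists\pi_2.\X\bigl((a_{\pi_1}\land\neg a_{\pi_2})\land\forall\pi'.\G(a_{\pi'}\leftrightarrow a_{\pi_2})\bigr)$,
which is in the fragment (no existential under a universal) and is satisfied by the tree whose root has two branches labeled $a^\omega$ and $(\neg a)^\omega$: at time $1$ the universal ranges only over the $\neg a$-branch. Yet your $\varphi'$ contains the conjunct $\G(a_{\pi_1}\leftrightarrow a_{\pi_2})$ evaluated at time $1$, which contradicts $a_{\pi_1}\land\neg a_{\pi_2}$ on every tree; hence $\varphi'$ is unsatisfiable and the two formulas are not equisatisfiable. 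The converse direction, which you single out as the obstacle, is comparatively benign on the comb model: there every path that prefix-agrees with $\varepsilon$ is one of the substituted witnesses, so the conjunction over-approximates the universal.

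To be fair, the paper's own proof is a four-line sketch that performs exactly this enumeration and does not address the issue either, so you have reproduced the published argument, gap included. A repaired reduction must account for which existential witnesses are still prefix-equivalent to $\varepsilon$ at the point where the universal is evaluated --- for instance by guarding each conjunct with the branching information of the witnesses, or by restricting to prenex $\exists^*\forall^*$ formulas, where all quantifiers are evaluated at time $0$, every initial path trivially agrees with $\varepsilon$ on the root, and the plain substitution is sound. As written, your equisatisfiability claim (and with it the reduction to Lemma~\ref{hyperCTL$^*$-exists-decidabel}) does not go through.
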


\begin{proof}
	The proof is similar to the proof for Theorem~\ref{hyperqptl-sat-decidable}. 
	Let an $\exists^* \forall^*$ HyperCTL$^*$ formula $\varphi$ be given with at least one existential quantifier, the case with no existential quantifier is handled by Lemma~\ref{lem:HyperCTL-SAT-forall}. 
	We again eliminate the universal quantification by explicitly enumerating every possible interaction between the universal and existential quantifiers. 
	The resulting formula is in the $\exists^*$ fragment, which can be decided (cf. Lemma~\ref{hyperCTL$^*$-exists-decidabel}).
\end{proof}

\begin{thm}\label{HyperCTLStarUndec}
	The satisfiability problem for HyperCTL$^*$ formulas in the $\forall \exists$~fragment is undecidable.
\end{thm}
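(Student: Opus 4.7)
The plan is to reduce satisfiability in the $\forall\exists$ fragment of HyperLTL, which is already undecidable~\cite{DecidingHyperLTL}, to satisfiability in the $\forall\exists$ fragment of HyperCTL$^*$. The strategy is structurally similar to Theorem~\ref{thm:HyperQPTLUndec}, but a small tweak is needed because HyperLTL is \emph{not} a satisfiability-preserving syntactic subset of HyperCTL$^*$: in any tree all paths agree on the label of the root, whereas traces in a HyperLTL model may disagree at position $0$.

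Concretely, given a HyperLTL formula $\varphi = \forall\pi.\,\exists\pi'.\,\psi$ with $\psi$ quantifier-free, I would define the HyperCTL$^*$ formula $\varphi^{*} \coloneqq \forall\pi.\,\exists\pi'.\,\X\psi$, which is again in the $\forall\exists$ fragment, and show that $\varphi$ is HyperLTL-satisfiable iff $\varphi^{*}$ is HyperCTL$^*$-satisfiable. For the forward direction, given a trace set $T \models \varphi$, I build a tree $\mathcal{T}$ whose root $r$ has one distinct child $c_t$ for each $t \in T$, below which the tree continues as a linear chain labeled $t[0], t[1], t[2], \ldots$ Then $\mathit{Paths}(\mathcal{T})$ is in bijection with $T$, and every path $p$ satisfies $p[1,\infty] \in T$. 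Since all paths share the root, the prefix-agreement constraint built into HyperCTL$^*$ quantification collapses at position $0$, so the outer $\forall$ and the nested $\exists$ effectively range over all tree paths, just as the HyperLTL quantifiers range over all traces in $T$. The leading $\X$ then cancels exactly the one-step offset introduced by the dummy root, yielding $\mathcal{T} \models \varphi^{*}$. The reverse direction is symmetric: given $\mathcal{T} \models \varphi^{*}$, the trace set $T \coloneqq \{p[1,\infty] \mid p \in \mathit{Paths}(\mathcal{T})\}$ satisfies $\varphi$ by the same unwinding of the semantics.

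The main obstacle is precisely the shared-root issue that rules out any naive identity reduction from HyperLTL to HyperCTL$^*$. The combination of a dummy root and a single leading $\X$ sidesteps this cleanly while preserving the $\forall\exists$ quantifier shape, so that the undecidability result of~\cite{DecidingHyperLTL} transfers directly.
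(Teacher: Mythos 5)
Your proposal follows the same overall strategy as the paper's proof: reduce from the undecidable $\forall\exists$ fragment of HyperLTL~\cite{DecidingHyperLTL}. The paper's own argument is a one-liner that, as in Theorem~\ref{thm:HyperQPTLUndec}, simply declares HyperLTL a syntactic fragment of HyperCTL$^*$ and concludes. Your version is more careful, and the extra care is genuinely warranted: the identity embedding is \emph{not} satisfiability-preserving, because in any tree all paths agree on the root label, so for instance $\forall\pi.\exists\pi'.\,(a_\pi \leftrightarrow \neg a_{\pi'})$ is HyperLTL-satisfiable but has no tree model. Your dummy-root-plus-$\X$ construction repairs exactly this mismatch while keeping the formula in the $\forall\exists$ fragment (a leading $\X$ does not affect the quantifier shape), and both directions of your equisatisfiability argument go through under the paper's semantics: the prefix constraint $p[0,0]=\varepsilon[0,0]$ in the $\exists\pi'$ clause is trivially met at the root, the paper's notion of tree permits infinite branching, so one child per trace is legitimate even for infinite $T$, and evaluating the quantifier-free body at time $1$ corresponds precisely to evaluating it over the suffix traces $p[1,\infty]$. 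In short, your proof is correct, takes the same route as the paper, and in fact makes explicit a repair that the paper's terse argument leaves implicit.
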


\begin{proof}
	HyperCTL$^*$ subsumes HyperLTL as a syntactic fragment. 
	As in Theorem~\ref{thm:HyperQPTLUndec}, the undecidability of the $\forall\exists$~HyperCTL$^*$ fragment follows from the undecidability of the $\forall\exists$~HyperLTL fragment~\cite{DecidingHyperLTL}.
\end{proof}
We have thus identified the largest decidable fragment of HyperCTL$^*$, which is, as for HyperLTL and HyperQPTL, the $\exists^* \forall^*$~fragment. 

\section{Conclusion and Future Work}
\label{conclusion}

In this paper, we have developed a comprehensive expressiveness hierarchy of linear-time and branching-time hyperlogics.
In the linear-time spectrum, we studied the relationships between HyperLTL, \foe, HyperQPTL and \seins.
In the branching-time spectrum, we studied HyperCTL$^*$, \mple, HyperQCTL$^*$ and \msoe.
Our results show significant differences between the hierarchy of the standard logics and the hierarchy of the hyperlogics. 
Most significantly, the equivalences between QPTL and S1S, and CTL$^*$ and MPL do not translate to the corresponding hyperlogics: HyperQPTL is not equivalent to \seins, and HyperCTL$^*$ is not equivalent to \mple. 
The only equivalence that translates to the hierarchy of hyperlogics is the equivalence between QCTL$^*$ and MSO: HyperQCTL$^*$ and \msoe are equivalent. 
The reason for this equivalence is that on \emph{trees}, it is possible to simulate the equal-level predicate and the second-order quantification by using propositional HyperQCTL$^*$ quantification (as described in the proof of Theorem~\ref{thm:HyperQCTL*equivMSOE}).

Our results also identify the decidability boundary of the satisfiability problem.
The previous result here was that HyperLTL formulas in the $\exists^* \forall^*$~fragment are decidable~\cite{DecidingHyperLTL}. 
We showed that the same holds for the more expressive logic HyperQPTL and that, hence, propositional quantification does not influence the decidability of the satisfiability problem in the linear-time hierarchy. 
In the branching-time hierarchy, we identified the largest decidable fragment of HyperCTL$^*$, which is, again, the $\exists^* \forall^*$ fragment.

Our results indicate that the  equal-level predicate adds, in most cases,  more expressiveness than trace or path variables. 
This raises the question if alternatives to the equal-level predicate can be found that would lift logics like \fo to hyperlogics while preserving the equivalences to the temporal logics. 
Another direction for future work concerns
the extension of the temporal logics with knowledge modalities~\cite{FaginHMV95}. It is known that
KLTL and KCTL$^*$, the extensions of LTL and CTL$^*$ with knowledge modalities under perfect recall semantics, are not subsumed by 
HyperLTL and HyperCTL$^*$, respectively~\cite{Bozzelli}. 
The question how extensions of HyperLTL and HyperCTL$^*$ with knowledge modalities fit into the expressiveness hierarchy is open.

\bibliographystyle{IEEEtran}
\bibliography{main.bib}

\begin{thebibliography}{10}
\providecommand{\url}[1]{#1}
\csname url@samestyle\endcsname
\providecommand{\newblock}{\relax}
\providecommand{\bibinfo}[2]{#2}
\providecommand{\BIBentrySTDinterwordspacing}{\spaceskip=0pt\relax}
\providecommand{\BIBentryALTinterwordstretchfactor}{4}
\providecommand{\BIBentryALTinterwordspacing}{\spaceskip=\fontdimen2\font plus
\BIBentryALTinterwordstretchfactor\fontdimen3\font minus
  \fontdimen4\font\relax}
\providecommand{\BIBforeignlanguage}[2]{{%
\expandafter\ifx\csname l@#1\endcsname\relax
\typeout{** WARNING: IEEEtran.bst: No hyphenation pattern has been}%
\typeout{** loaded for the language `#1'. Using the pattern for}%
\typeout{** the default language instead.}%
\else
\language=\csname l@#1\endcsname
\fi
#2}}
\providecommand{\BIBdecl}{\relax}
\BIBdecl

\bibitem{LTL}
\BIBentryALTinterwordspacing
A.~Pnueli, ``The temporal logic of programs,'' in \emph{18th Annual Symposium
  on Foundations of Computer Science, Providence, Rhode Island, USA, 31 October
  - 1 November 1977}, 1977, pp. 46--57. [Online]. Available:
  \url{https://doi.org/10.1109/SFCS.1977.32}
\BIBentrySTDinterwordspacing

\bibitem{CTLStar}
\BIBentryALTinterwordspacing
E.~A. Emerson and J.~Y. Halpern, ``"sometimes" and "not never" revisited: on
  branching versus linear time temporal logic,'' \emph{J. {ACM}}, vol.~33,
  no.~1, pp. 151--178, 1986. [Online]. Available:
  \url{https://doi.org/10.1145/4904.4999}
\BIBentrySTDinterwordspacing

\bibitem{SpectrumOfTemporalLogics}
B.~Finkbeiner and M.~N. Rabe, ``The linear-hyper-branching spectrum of temporal
  logics,'' \emph{it - Information Technology}, vol.~56, pp. 273--279, November
  2014.

\bibitem{HyperLTL}
\BIBentryALTinterwordspacing
M.~R. Clarkson, B.~Finkbeiner, M.~Koleini, K.~K. Micinski, M.~N. Rabe, and
  C.~S{\'{a}}nchez, ``Temporal logics for hyperproperties,'' in
  \emph{Principles of Security and Trust - Third International Conference,
  {POST} 2014, Held as Part of the European Joint Conferences on Theory and
  Practice of Software, {ETAPS} 2014, Grenoble, France, April 5-13, 2014,
  Proceedings}, 2014, pp. 265--284. [Online]. Available:
  \url{https://doi.org/10.1007/978-3-642-54792-8\_15}
\BIBentrySTDinterwordspacing

\bibitem{Goguen+Meseguer/1982/SecurityPoliciesAndSecurityModels}
\BIBentryALTinterwordspacing
J.~A. Goguen and J.~Meseguer, ``Security policies and security models,'' in
  \emph{1982 {IEEE} Symposium on Security and Privacy, Oakland, CA, USA, April
  26-28, 1982}, 1982, pp. 11--20. [Online]. Available:
  \url{https://doi.org/10.1109/SP.1982.10014}
\BIBentrySTDinterwordspacing

\bibitem{Zdancewic+Myers/03/ObservationalDeterminism}
\BIBentryALTinterwordspacing
S.~Zdancewic and A.~C. Myers, ``Observational determinism for concurrent
  program security,'' in \emph{16th {IEEE} Computer Security Foundations
  Workshop {(CSFW-16} 2003), 30 June - 2 July 2003, Pacific Grove, CA, {USA}},
  2003, p.~29. [Online]. Available:
  \url{https://doi.org/10.1109/CSFW.2003.1212703}
\BIBentrySTDinterwordspacing

\bibitem{Thomas09}
W.~Thomas, ``Path logics with synchronization,'' in \emph{Perspectives in
  Concurrency Theory}, K.~Lodaya, M.~Mukund, and R.~Ramanujam, Eds.\hskip 1em
  plus 0.5em minus 0.4em\relax IARCS-Universities, Universities Press, 2009,
  pp. 469--481.

\bibitem{Martin}
\BIBentryALTinterwordspacing
B.~Finkbeiner and M.~Zimmermann, ``The first-order logic of hyperproperties,''
  in \emph{34th Symposium on Theoretical Aspects of Computer Science, {STACS}
  2017, March 8-11, 2017, Hannover, Germany}, 2017, pp. 30:1--30:14. [Online].
  Available: \url{https://doi.org/10.4230/LIPIcs.STACS.2017.30}
\BIBentrySTDinterwordspacing

\bibitem{Kamp68}
H.~W. Kamp, ``Tense logic and the theory of linear order,'' Ph.D. dissertation,
  Computer Science Department, University of California at Los~Angeles, USA,
  1968.

\bibitem{KampGabbay}
\BIBentryALTinterwordspacing
D.~Gabbay, A.~Pnueli, S.~Shelah, and J.~Stavi, ``On the temporal analysis of
  fairness,'' in \emph{Proceedings of the 7th ACM SIGPLAN-SIGACT Symposium on
  Principles of Programming Languages}, ser. POPL '80.\hskip 1em plus 0.5em
  minus 0.4em\relax New York, NY, USA: ACM, 1980, pp. 163--173. [Online].
  Available: \url{http://doi.acm.org/10.1145/567446.567462}
\BIBentrySTDinterwordspacing

\bibitem{QPTL}
A.~P. Sistla, ``Theoretical issues in the design and verification of
  distributed systems,'' Ph.D. dissertation, 1983.

\bibitem{QPTL-S1S}
\BIBentryALTinterwordspacing
Y.~Kesten and A.~Pnueli, ``A complete proof systems for {QPTL},'' in
  \emph{Proceedings, 10th Annual {IEEE} Symposium on Logic in Computer Science,
  San Diego, California, USA, June 26-29, 1995}, 1995, pp. 2--12. [Online].
  Available: \url{https://doi.org/10.1109/LICS.1995.523239}
\BIBentrySTDinterwordspacing

\bibitem{moller1999}
\BIBentryALTinterwordspacing
F.~Moller and A.~M. Rabinovich, ``On the expressive power of {CTL},'' in
  \emph{14th Annual {IEEE} Symposium on Logic in Computer Science, Trento,
  Italy, July 2-5, 1999}.\hskip 1em plus 0.5em minus 0.4em\relax {IEEE}
  Computer Society, 1999, pp. 360--368. [Online]. Available:
  \url{https://doi.org/10.1109/LICS.1999.782631}
\BIBentrySTDinterwordspacing

\bibitem{abrahamson1980}
K.~R. Abrahamson, ``Decidability and expressiveness of logics of processes,''
  Ph.D. dissertation, Seattle, WA, USA, 1980, aAI8109709.

\bibitem{QCTLStar}
\BIBentryALTinterwordspacing
T.~French, ``Decidability of quantifed propositional branching time logics,''
  in \emph{{AI} 2001: Advances in Artificial Intelligence, 14th Australian
  Joint Conference on Artificial Intelligence, Adelaide, Australia, December
  10-14, 2001, Proceedings}, 2001, pp. 165--176. [Online]. Available:
  \url{https://doi.org/10.1007/3-540-45656-2\_15}
\BIBentrySTDinterwordspacing

\bibitem{laroussinie2014}
\BIBentryALTinterwordspacing
F.~Laroussinie and N.~Markey, ``Quantified {CTL:} expressiveness and
  complexity,'' \emph{Logical Methods in Computer Science}, vol.~10, no.~4,
  2014. [Online]. Available: \url{https://doi.org/10.2168/LMCS-10(4:17)2014}
\BIBentrySTDinterwordspacing

\bibitem{Kupferman2009}
\BIBentryALTinterwordspacing
O.~Kupferman, N.~Piterman, and M.~Y. Vardi, ``From liveness to promptness,''
  \emph{Formal Methods in System Design}, vol.~34, no.~2, pp. 83--103, Apr
  2009. [Online]. Available: \url{https://doi.org/10.1007/s10703-009-0067-z}
\BIBentrySTDinterwordspacing

\bibitem{LTL-SAT}
\BIBentryALTinterwordspacing
A.~P. Sistla and E.~M. Clarke, ``The complexity of propositional linear
  temporal logics,'' \emph{J. {ACM}}, vol.~32, no.~3, pp. 733--749, 1985.
  [Online]. Available: \url{https://doi.org/10.1145/3828.3837}
\BIBentrySTDinterwordspacing

\bibitem{DecidingHyperLTL}
\BIBentryALTinterwordspacing
B.~Finkbeiner and C.~Hahn, ``Deciding hyperproperties,'' in \emph{27th
  International Conference on Concurrency Theory, {CONCUR} 2016, August 23-26,
  2016, Qu{\'{e}}bec City, Canada}, 2016, pp. 13:1--13:14. [Online]. Available:
  \url{https://doi.org/10.4230/LIPIcs.CONCUR.2016.13}
\BIBentrySTDinterwordspacing

\bibitem{CTLStar-SAT}
\BIBentryALTinterwordspacing
E.~A. Emerson and A.~P. Sistla, ``Deciding full branching time logic,''
  \emph{Information and Control}, vol.~61, no.~3, pp. 175--201, 1984. [Online].
  Available: \url{https://doi.org/10.1016/S0019-9958(84)80047-9}
\BIBentrySTDinterwordspacing

\bibitem{MarkusThesis}
\BIBentryALTinterwordspacing
M.~N. Rabe, ``A temporal logic approach to information-flow control,'' Ph.D.
  dissertation, Saarland University, 2016. [Online]. Available:
  \url{http://scidok.sulb.uni-saarland.de/volltexte/2016/6387/}
\BIBentrySTDinterwordspacing

\bibitem{sistla1987complementation}
\BIBentryALTinterwordspacing
A.~P. Sistla, M.~Y. Vardi, and P.~Wolper, ``The complementation problem for
  b{\"{u}}chi automata with applications to temporal logic (extended
  abstract),'' in \emph{Automata, Languages and Programming, 12th Colloquium,
  Nafplion, Greece, July 15-19, 1985, Proceedings}, 1985, pp. 465--474.
  [Online]. Available: \url{https://doi.org/10.1007/BFb0015772}
\BIBentrySTDinterwordspacing

\bibitem{mcnaughton1971}
R.~McNaughton and S.~A. Papert, \emph{{Counter-Free Automata (MIT research
  monograph no. 65)}}.\hskip 1em plus 0.5em minus 0.4em\relax The MIT Press,
  1971.

\bibitem{PrinciplesOfModelChecking}
C.~Baier and J.~Katoen, \emph{Principles of model checking}.\hskip 1em plus
  0.5em minus 0.4em\relax {MIT} Press, 2008.

\bibitem{fagin2004reasoning}
R.~Fagin, J.~Y. Halpern, Y.~Moses, and M.~Vardi, \emph{{Reasoning About
  Knowledge}}.\hskip 1em plus 0.5em minus 0.4em\relax MIT press, 2004.

\bibitem{halpern1989}
\BIBentryALTinterwordspacing
J.~Y. Halpern and M.~Y. Vardi, ``The complexity of reasoning about knowledge
  and time. i. lower bounds,'' \emph{J. Comput. Syst. Sci.}, vol.~38, no.~1,
  pp. 195--237, 1989. [Online]. Available:
  \url{https://doi.org/10.1016/0022-0000(89)90039-1}
\BIBentrySTDinterwordspacing

\bibitem{Bozzelli}
\BIBentryALTinterwordspacing
L.~Bozzelli, B.~Maubert, and S.~Pinchinat, ``Unifying hyper and epistemic
  temporal logics,'' in \emph{Foundations of Software Science and Computation
  Structures - 18th International Conference, FoSSaCS 2015, Held as Part of the
  European Joint Conferences on Theory and Practice of Software, {ETAPS} 2015,
  London, UK, April 11-18, 2015. Proceedings}, 2015, pp. 167--182. [Online].
  Available: \url{https://doi.org/10.1007/978-3-662-46678-0\_11}
\BIBentrySTDinterwordspacing

\bibitem{LTLAutomata}
\BIBentryALTinterwordspacing
M.~Y. Vardi and P.~Wolper, ``Reasoning about infinite computations,''
  \emph{Inf. Comput.}, vol. 115, no.~1, pp. 1--37, 1994. [Online]. Available:
  \url{https://doi.org/10.1006/inco.1994.1092}
\BIBentrySTDinterwordspacing

\bibitem{FaginHMV95}
R.~Fagin, J.~Y. Halpern, Y.~Moses, and M.~Y. Vardi, \emph{Reasoning About
  Knowledge}.\hskip 1em plus 0.5em minus 0.4em\relax MIT Press, 1995.

\end{thebibliography}

\newpage

\appendices

\section{Proof of Lemma~\ref{MPLEsubsumesHyperKCTL*}}
\label{App:proofMPLEsubsumesHyperKCTL*}
\subsection{\mple Subsumes HyperCTL$^*$}
Like in Lemma~\ref{S1SESubsumesHyperQPTL}, we use the idea from~\cite{Martin} where $\set{y_1, y_2, \ldots}$ are first-order variables used to indicate time. We use second-order quantification to quantify traces.
Given a HyperKCTL$^*$ formula $\varphi$, we inductively construct an \mple formula. 
\begin{alignat*}{3}
&\text{mpe}(a_{\pi}, y_i) &&=~&& \exists x. ~x \in X_\pi \land E(x, y_i) \land P_a(x) \\
\noalign{\text{where $X_\pi$ is the second-order variable used for path $\pi$}}
&\text{mpe}(\neg \varphi_1, y_i) &&=&& \neg \text{mpe}(\varphi_1, y_i) \\
&\text{mpe}(\varphi_1 \lor \varphi_2, y_i) &&=&& \text{mpe}(\varphi_1, y_i) \lor \text{mpe}(\varphi_2, y_i) \\
&\text{mpe}(\X \varphi_1, y_i) &&=&& \exists y_j > y_i.~ \neg (\exists y_k.~  y_i < y_k < y_j) \\
& && && \quad \land \text{mpe}(\varphi_1, y_j) \\
&\text{mpe}(\varphi_1 \LTLuntil \varphi_2, y_i) &&=&& \exists y_j.~ y_i \leq y_j \land \text{mpe}(\varphi_2, y_j) \\
& && && \quad \land \forall y_k.~ y_i \leq y_k < y_j \\
& && && \quad \rightarrow \text{mpe}(\varphi_1, y_k) \\
&\text{mpe}(\exists \pi.\varphi_1, y_i) &&=&& \exists X_\pi. ~ \mathit{prefix}(X_\pi, X_\varepsilon, y_i) \\
& && && \quad \land \text{mpe}(\varphi_1, y_i) \\
\noalign{\text{where $X_\pi$ is now used as trace variable for $\pi$ } }
& \text{where} && && \\
& \mathit{prefix}(X_\pi, X_\varepsilon, y_i) &&\coloneqq&& ~ \forall y'_i \leq y_i.~ \forall x.~ E(x, y'_i) \rightarrow \\
& && && \quad (x \in X_\pi \leftrightarrow x \in X_\varepsilon) \\
& && && \quad \rightarrow x =_A  x'
\end{alignat*}
where we use $X_{\varepsilon}$ to denote the second-order variable that was quantified most recently (i.e. closest in the scope to $X_\pi$).
The \mple formula $\text{mpe}(\varphi, 0)$ is equivalent to the HyperCTL$^*$ formula $\varphi$, which can be shown by a straightforward induction.

\subsection{\msoe Can Express the Knowledge Modality}
Extend the translation from HyperCTL$^*$ to \mple given above with an additional rule for $\K_{A,\pi}$.
\begin{alignat*}{3}
	&\text{mpe}(\K_{A,\pi}.\varphi_1, y_i) &&=&& \forall X'_\pi.~ \mathit{eqPre}(X_\pi, X'_\pi, y_i, A) \\
	& && && \quad \rightarrow \text{mpe}(\varphi_1, y_i) \\
	\noalign{ \text{where $X'_\pi$ is now used as trace variable for $\pi$} }
	& \text{where} && && \\
	& \mathit{eqPre}(X_\pi, X'_\pi, y_i, A) &&\coloneqq&& ~ \forall x \in X_\pi. \forall x' \in X'_\pi.~ (\exists y'_i \leq y_i. \\
	& && && \quad ~ E(x, y'_i) \land E(x', y'_i)) \\
	& && && \quad \rightarrow x =_A  x'
\end{alignat*}

\section{Proof of Theorem~\ref{thm:HyperQCTL*equivMSOE}}
\label{App:proofHyperQCTL*equivMSOE}
\subsection{HyperQCTL$^*$ Subsumes \msoe}
Let $\varphi$ be a \msoe formula over $AP$. We define an equivalent HyperQCTL$^*$ formula hqc$(\varphi)$ as follows.
	\begin{alignat*}{3}
&\text{hqc}(P_a(x)) &&=~&& \LTLfinally (q^x_{\pi_x} \land a_{\pi_x}) \\
&\text{hqc}(x \in X) &&=~&& \LTLfinally (q^x_{\pi_x} \land q^X_{\pi_x}) \\
&\text{hqc}(x < y) &&=~&& \LTLfinally (q^x_{\pi_x} \land \LTLnext \LTLfinally q^y_{\pi_y}) \\
&\text{hqc}(x = y) &&=~&& \LTLfinally (q^x_{\pi_x} \land  q^y_{\pi_y}) \land \forall q. \LTLglobally (q_{\pi_{x}} \leftrightarrow q_{\pi_{y}}) \\
&\text{hqc}(E(x,y)) &&=~&& \LTLfinally (q^x_{\pi_x} \land  q^y_{\pi_y}) \\
&\text{hqc}(\neg \varphi_1) &&=~&& \neg \text{hqc}(\varphi_1) \\
&\text{hqc}(\varphi_1 \lor \varphi_2) &&=~&& \text{hqc}(\varphi_1) \lor \text{hqc}(\varphi_2) \\
&\text{hqc}(\exists x. \varphi_1) &&=~&& \exists \pi_x.~ \exists q^x.~ (\neg q^x_{\pi_x}) \LTLuntil (q^x_{\pi_x} \land \LTLnext\LTLglobally(\neg q^x_{\pi_x})) \\
& && && \phantom{ \exists \pi_x.~ \exists q^x. } \land \text{hqc}(\varphi_1) \\
&\text{hqc}(\exists X. \varphi_1) &&=~&& \exists q^X.~ \text{hqc}(\varphi_1)
\end{alignat*}
Note that we used the fact that two paths are equal in HyperQCTL$^*$ iff a universal quantification of atomic proposition $q$ always assigns $q$ to be globally equivalent on the two paths. Using a straightforward induction, we can show that for every tree $\mathcal{T}$, $\mathcal{T} \models \varphi$ iff $\mathcal{T} \models \text{hqc}(\varphi)$.

\subsection{\msoe Subsumes HyperQCTL$^*$}
Given a HyperQCTL$^*$ formula $\varphi$, we inductively construct an \msoe formula $\psi$. The construction is very similar to the one described in Appendix~\ref{App:proofMPLEsubsumesHyperKCTL*}. We only need to make sure that the second-order quantification $X_\pi$, which encodes a quantified path $\pi$, encodes a full path of the tree.
\begin{alignat*}{3}
&\text{mse}(a_{\pi}, y_i) &&=~&& \exists x.~ x \in X_\pi \land E(x, y_i) \land P_a(x) \\
 \noalign{ \text{where $X_\pi$ is the second-order variable used for path $\pi$} }
 \noalign{ \text{and $a$ is not quantified by a propositional quantifier.} }
&\text{mse}(q_{\pi}, y_i) &&=~&& \exists x.~ x \in X_\pi \land E(x, y_i) \land x \in X_q \\
\noalign{ \text{where $X_\pi$ is the second-order variable used for path $\pi$} }
\noalign{ \text{and $q$ is quantified by a propositional quantifier.} }
&\text{mse}(\neg \varphi_1, y_i) &&=&& \neg \text{mse}(\varphi_1, y_i) \\
&\text{mse}(\varphi_1 \lor \varphi_2, y_i) &&=&& \text{mse}(\varphi_1, y_i) \lor \text{mse}(\varphi_2, y_i) \\
&\text{mse}(\X \varphi_1, y_i) &&=&& \exists y_j.~ y_i < y_j \land \neg (\exists y_k.~ y_i < y_k < y_j) \\
& && && \land \text{mse}(\varphi_1, y_j) \\
&\text{mse}(\varphi_1 \LTLuntil \varphi_2, y_i) &&=&& \exists y_j.~ y_i \leq y_j \land \text{mse}(\varphi_2, y_j) \\
& && && \land \forall y_k.~ y_i \leq y_k < y_j \rightarrow \text{mse}(\varphi_1, y_k) \\
&\text{mse}(\exists \pi.\varphi_1, y_i) &&=&& \exists X_\pi.~ \mathit{path}(X_\pi) \land \\
& && && \mathit{prefix}(X_\pi, X_\varepsilon, y_i) \land \text{mse}(\varphi_1, y_i) \\
\noalign{ \text{where $X_\pi$ is now used as trace variable for $\pi$} }
&\text{mse}(\exists q.\varphi_1, y_i) &&=&& \exists X_q. \text{mse}(\varphi_1, y_i) \\
\noalign{ \text{where $X_q$ is now used as propositional variable for $q$} }
& \text{where} && && \\
& \mathit{path}(X_\pi) &&\coloneqq&& (\exists x \in X_\pi.~ \neg \exists x'. x' < x)~ \land \\
& && && ~ \forall x \in X_\pi. ~ \exists x' \in X_\pi.~ \mathit{succ}(x, x') ~ \land \\
& && && \quad \neg( \exists x'' \in X_\pi.~ \mathit{succ}(x, x'')) \\
& \mathit{succ}(x, x') &&\coloneqq&& x < x' \land \neg( \exists x''.~ x < x'' < x') \\
& \mathit{prefix}(X_\pi, X_\varepsilon, y_i) &&\coloneqq&& \forall y'_i \leq y_i.~ \forall x.~ E(x, y'_i) \rightarrow \\
& && && \quad (x \in X_\pi \leftrightarrow x \in X_\varepsilon)
\end{alignat*}
We use $X_{\varepsilon}$ to denote the second-order variable that was quantified most recently (i.e. closest in the scope to $X_\pi$) and also encodes a path.
Note that a set $X_\pi$ encodes a full path of the tree if it contains the root node and a unique direct successor for every node in the set.




%

\end{document}